\documentclass[a4paper,10pt]{article}

\usepackage{lineno,hyperref}
\modulolinenumbers[5]
\usepackage{amssymb}
\usepackage[margin=1in]{geometry}
\usepackage{amsthm}
\usepackage{graphicx}
\usepackage{chemarr}
\usepackage{bbm}
\usepackage{color}
\usepackage[colorinlistoftodos]{todonotes}
\usepackage{mathtools}

\newcommand{\cut}[1]{}

\newcommand{\react}[2]{#1\rightarrow#2}

\newcommand{\Rxn}{\mathcal{R}}
\newcommand{\Int}{\mathbb{Z}}
\newcommand{\Net}{\mathcal{N}}
\newcommand{\Spc}{\mathcal{S}}
\newcommand{\Vis}{\mathcal{V}}
\newcommand{\Hsp}{\mathcal{H}}

\newcommand{\Rel}{\mathbb{R}}
\newcommand{\Ord}{\mathcal{O}}

\newcommand{\Cmp}{\mathcal{C}}

\newcommand{\Fam}{\mathcal{F}}
\newcommand{\1}{\mathbbm{1}}
\newcommand{\Uni}{\mathcal{U}}
\newcommand{\tofrom}{\rightleftharpoons}
\newcommand{\DB}{\texttt{Full}}
\newcommand{\PMD}{\texttt{PointMass}}
\newcommand{\BiDB}{\texttt{Bimol}}
\newcommand{\StiDB}{\texttt{SpanTree}}
\newcommand{\PMCN}{\texttt{PointMassMix}}
\newcommand{\Mix}{\texttt{Mix}}
\newcommand{\UD}{\texttt{MultiDimUnif}}
\newcommand{\UDbis}{\UD'}
\newcommand{\PPN}{\texttt{ProdPois}}

\theoremstyle{plain}
\newtheorem{thm}{Theorem}
\newtheorem{lem}[thm]{Lemma}

\newtheorem{prop}[thm]{Proposition}
\newtheorem{athm}{Theorem}[section]
\newtheorem{alem}[athm]{Lemma}

\newtheorem{aprop}[athm]{Proposition}

\theoremstyle{definition}
\newtheorem{defn}{Definition}

\newtheorem{cons}{Construction}
\newtheorem{adefn}{Definition}[section]

\newenvironment{consbis}[1]
  {%
   \addtocounter{cons}{-1}%
   \begin{cons}}
  {\end{cons}}

\theoremstyle{remark}
\newtheorem{rmk}{Remark}
\newtheorem{armk}{Remark}[section]

\usepackage[numbers, sort]{natbib}

\begin{document}

\title{Stochastic Chemical Reaction Networks for Robustly Approximating Arbitrary Probability Distributions}

%
%
%

\author{Daniele Cappelletti\footnotemark[1]\phantom{,,}\footnotemark[2] \and Andr\'{e}s Ortiz-Mu\~{n}oz\footnotemark[1]\phantom{,,}\footnotemark[3] \and David F.\ Anderson \footnotemark[2]\and Erik Winfree \footnotemark[3]}

\footnotetext[1]{These authors contributed equally.}
\footnotetext[2]{University of Wisconsin-Madison, Madison, WI, USA.}
\footnotetext[3]{California Institute of Technology, Pasadena, CA, USA}

\maketitle

\begin{abstract}
We show that discrete distributions on the $d$-dimensional non-negative integer lattice can be approximated arbitrarily well via the marginals of stationary distributions for various classes of stochastic chemical reaction networks. We begin by providing a class of detailed balanced networks and prove that they can approximate any discrete distribution to any desired accuracy. However, these detailed balanced constructions rely on the ability to initialize a system precisely, and are therefore susceptible to perturbations in the initial conditions.  We therefore  provide another construction based on the ability to approximate point mass distributions and prove that this construction is capable of approximating arbitrary discrete distributions for any choice of initial condition.  In particular, the developed models are ergodic, so their limit distributions are robust to a finite number of perturbations over time in the counts of molecules.

\end{abstract}


\section{Introduction}

Chemical reaction networks (CRNs) with  mass-action kinetics \cite{anderson2015stochastic,gunawardena2003chemical} model the behavior of well-mixed chemical solutions and have a wide range of applications in science and engineering.  In particular, they are used to study the behavior of natural, industrial, and biological processes. Thus, it is important to understand their mathematical foundations. For systems where the number of molecules is large, stochasticity averages out so the state variables are the real-valued concentrations of molecular species, and dynamics are expressed as ordinary differential equations. These models are well-studied and much is known about their stationary behavior \cite{feinberg1987chemical}. Furthermore, the dynamics of deterministic CRNs are capable of simulating arbitrary electrical and digital circuits, and there is a sense in which they are capable of universal computation \cite{samardzija1989nonlinear,magnasco1997chemical}. Of increasing importance is the case of solutions with small volume and discrete counts, such as the interiors of biological cells and nanoscale engineered systems. In this case, the state variables are the integer counts of molecules, and dynamics are expressed as continuous-time, discrete-space stochastic processes \cite{gillespie1977exact,anderson2015stochastic}. 

These stochastic models are well studied.  In particular, both their stationary and finite time dynamics are of interest and have been analyzed \cite{kurtz1972relationship, anderson2015stochastic,thattai2001intrinsic, CW2016, ACK:product, challenges, multiscale, kang:separation, karlebach:modelling, grima:modelling, kurtz1976limit, agazzi:large, togashi:transitions, williams, bibbona, CJ:graphic, ACK:ACR, AEJ:ACR,ACGW2015}.  Moreover, their computational properties have been explored and it has been shown that they can simulate Turing machines, so long as a small probability of error is allowed \cite{cook2009programmability}. They are therefore capable of universal computation. However, there are known limitations on the dynamics exhibited by certain classes of stochastic CRNs \cite{lestas2008noise,lestas2010fundamental} and the full repertoire of behaviors accessible to stochastic CRNs has yet to be characterized.

A particular question is: what are the possible fluctuation sizes that can be seen in molecule counts for stochastic CRNs in stationarity? When used to model systems in thermodynamic equilibrium with particle reservoirs, i.e., models that are detailed balanced and have inflows and outflows of each species, the stationary distributions of stochastic CRNs take the form of the product of independent Poisson distributions \cite{whittle1986systems}. Hence, in these cases, the magnitude of fluctuations in the population of each molecule type is equal to the square root of the mean, since the mean and variance of Poisson distributions are equal.  In the physical sciences, it is typical to encounter systems with fluctuations with size of square root of the mean. This was stated by Schr\"{o}dinger in {\it What is life?}  when talking about the inaccuracy of physical laws and referred to it as the $\sqrt{n}$ law \cite{schrodinger1992life}. For more general detailed balanced systems, where the topology of the network restricts the set of states the process can reach, the stationary distributions continue to have product-Poisson form in the reachable space \cite{whittle1986systems}.  In fact, much more is known and the occurrence of a stationary distribution that is a product of Poissons is equivalent to the model being \textit{complex balanced}, which is a generalization of the detailed balanced condition \cite{ACK:product,CW2016}.
For other well-known  CRNs, such as models of gene regulation \cite{thattai2001intrinsic}, the variance of particle counts meets or even exceeds their mean value. 
In other examples, CRN models of low-copy-number plasmid populations in bacterial cells, where variability due to replication and partitioning could have a disruptive effect, produce distributions where the variance is less than the mean \cite{paulsson2001noise}. 
It has remained an open question whether in general there exists a bound on the variance relative to the mean for stationary distributions of stochastic CRNs,  though some results have been shown for  specific classes of models \cite{lestas2010fundamental}.

The question that we address in this paper is more general than the one posed at the beginning of the previous paragraph: is there any limit on the shape of the stationary distributions of stochastic CRNs? 
We will show that \textit{every distribution on the non-negative integer lattice can be approximated to desired precision with the stationary distribution of some stochastic CRN}. Note that, for example,  this gives an answer to the open question posed at the end of the previous paragraph: there is no  bound on the ratio between the variance and the mean of species at stationarity.

Remarkably, we will be able to approximate any distribution to any desired accuracy by restricting ourselves to two important classes of reaction networks.  The first class consists of detailed balanced models, though for this class a specific initial condition must be utilized for the desired result to hold. The second class consists of CRNs that have a unique limit distribution (i.e., the model is ergodic).   Hence, and in contrast to the first class of models, the limiting behavior of a model from this second class does not depend on the choice of initial condition.  This is a desirable feature for any model that would be physically implemented in a noisy environment.
It is interesting to note how the networks in the intersection of the two classes above have limited expressive power in terms of their limit distributions, as these can only be product of Poisson distributions \cite{whittle1986systems}.


The constructions we use are mathematically motivated, and the resulting CRNs do not necessarily have correspondence to known physical systems. For example, we sometimes make use of reactions with arbitrarily high molecularity, such as $17V\to 16V$. However, in the case of constructions utilizing detailed balanced models, we show that CRNs with a more physical interpretation can be considered, such as CRNs where the reactants and products of each reaction contain at most two molecules (see Construction~\ref{4839201483} and Remark~\ref{rmk:48398194821}). In fact, we expect that all the conclusions of our paper, namely, that the classes of detailed balanced and robust CRNs are universally approximating, will hold even if restricting the classes to suitable binary CRNs. Intuitively, this follows from the fact that the dynamics of high molecularity reactions can be approximated by the dynamics of a sequence of elementary reactions. However, this claim awaits a rigorous proof and the resulting constructions may be more complex than the ones we present here. Moreover, the number of reactions and the number of species of the constructions we describe here both scale  with the size of the support of the distribution we are approximating. Anticipating the question of descriptive complexity --- namely, can complex distributions be approximated by simple CRNs? --- we provide examples addressing it in Section~\ref{wiuiehfoweijfwKFSDFJSUHFEIRHUWEFIERHVU}. 

While our work is not the first to explore the expressive power of CRNs in terms of their limit distributions, we are the first to consider the expression of a target distribution robustly with respect to the initial condition. 
Fett, Bruck, and Riedel \cite{fett2007synthesizing} considered CRNs that make a stochastic choice among a fixed set of outcomes, with probabilities determined as a function of the counts of some input species. These CRNs require precise initial conditions, and settle to an absorbing state where no further reactions are possible. In contrast, our CRN constructions are ergodic and  therefore ``active'' for all time.  Thus, they can produce multiple observations over time, each of which has a distribution close to the limit distribution.  
Poole et al.\, \cite{poole2017chemical} show that stochastic CRNs are at least as powerful as the Boltzmann machine model from statistical machine learning in terms of their ability to generate different probability distributions. Cardelli, Kwiatkowska, and Laurenti \cite{cardelli2018programming}, like us, consider the problem of programming a CRN to approximate an arbitrary multidimensional distribution, but their constructions require precise initial conditions and are not detailed balanced. Furthermore, as is the case in \cite{fett2007synthesizing}, their CRNs are fated to a state where no reactions can take place. Plesa et al. \cite{plesa2018noise} do not consider arbitrary distributions, but develop methods for controlling noise while preserving the mean behavior of the model in a certain limit.

While our results are of independent interest as a characterization of the class of distributions that can be generated by CRNs, they may have implications for how biological cells, or engineered cell-scale molecular machines, can perform information processing in small volumes. In particular, a probability distribution can be considered as a representation of knowledge; a CRN with parameters chosen such that it generates a specific distribution can be considered to be storing said knowledge. It is therefore reasonable to ask how information stored in CRN distributions can be further processed, manipulated, and acted upon by other CRNs, or how the knowledge can be extracted by interaction with a (proto)cell's environment, in the form of tuning parameters \cite{gopalkrishnan2016scheme,virinchi2017stochastic,virinchi2018reaction}.

\section{Preliminaries}

\subsection{Notation}

Let us denote the sets of nonnegative integers, reals, and positive reals, with $\Int_{\ge0}$, $\Rel$, and $\Rel_{>0}$, respectively. In what follows, let $d\in\Int_{>0}$. For any set $K\subseteq\Rel$ of real numbers, we denote by $K^d$ the set of vectors with $d$ entries in $K$. We will often refer to the elements of $\Int_{\ge0}^d$ as {\it states}. Let $u,v\in \Rel^d$ be vectors. We write vectors as rows $v=({v(1)},\ldots,{v(d)})$, where ${v(i)}$ denotes the $i$th entry of $v$.   
For $i\in\{1,\dots,d\}$ we define $e_i$ to be the vector of $\Int^d_{\geq0}$ with  1 in the $i$th entry and $0$ otherwise, i.e. with $e_i(i)=1$, and $e_i(j)=0$, for $j\neq i$. If ${u(i)} \le {v(i)}$, for all $i\in\{1,\ldots,d\}$, we write $u\le v$. We define:
\begin{align*}
\1_{\{u\ge v\}} = \begin{cases}
1& \text{if }u\ge v\\
0& \text{ otherwise. }
\end{cases}
\end{align*}  
Let $x\in\Int_{\ge0}^d$. We define the following:
\[
u^x = \prod_{i=1}^d{u(i)}^{{x(i)}}, \quad\text{ and }\quad x! = \prod_{i=1}^d {x(i)}!,
\]
with the conventions that $0^0=1$, and $0!=1$. Let $f:\Int_{\ge0}^d\to\Rel$ be a function. We define the {\it infinity norm} as usual:
\[
  \| f \|_{\infty} =\sup_{x\in\Int_{\geq0}^{d}}\{|f(x)|\}.
\]
If $f$ satisfies $f(x)\ge 0$, for each $x\in\Int_{\ge0}^d$, and $\sum_{x\in\Int_{\ge0}^d}f(x)=1$, we say $f$ is a {\it distribution}.  We call the set $\{x\in \Int_{\ge 0}^d : f(x)>0\}$ the \textit{support} of $f$. Finally, we denote the cardinality of a set $S$ with $|S|$. 

\subsection{Model}
We are interested in the counts of molecules of different chemical species undergoing different chemical transformations. We use the standard model of stochastic chemical reaction networks in which the dynamics of the counts of the different chemical species is modeled as a continuous-time Markov chain.  We begin with the definition of a reaction network, and then characterize the dynamics.

\begin{defn}
A \emph{ chemical reaction network} (CRN) is a quadruple $\Net = (\Spc,\Cmp,\Rxn,\kappa)$ where $\Spc$, $\Cmp$, $\Rxn$, $\kappa$ are defined as follows. $\Spc$ is a finite set of \emph{species}.  $\Cmp$ is a finite set of \emph{complexes}, 
 which are linear combinations of species, with nonnegative integer coefficients. $\Rxn$ is a finite set of \emph{reactions}, that is a finite subset of $\Cmp \times\Cmp$ with the property that for any $y\in\Cmp$ we have $(y,y)\notin\Rxn$. Usually, a reaction $(y,y')$ is denoted by $y\to y'$, and we adopt this notation. Finally, given an ordering for the reactions, $\kappa$ is a vector in $\Rel_{>0}^{|\Rxn|}$ that gives every reaction a {\it rate constant}. The rate constant of reaction $y\rightarrow y'$ will be denoted by $\kappa_{y\rightarrow y'}$.
\end{defn}
Let $\Spc=\{A_1,\ldots,A_{|\Spc|}\}$ be an ordering of the species. Complexes will be regarded as vectors in $\Int_{\geq0}^{|\Spc|}$ and we use the following notation for a complex $y$:
\[
y=\sum_{i=1}^{|\Spc|} {y(i)}A_i.
\]
A network is said to be reversible if  $y'\to y \in \Rxn$ whenever $y \to y' \in \Rxn$, and we write $y \tofrom y'\in\Rxn$ for the pair of reactions. We will often summarize the sets of  complexes, reactions, and rate constants of a CRN in a {\it reaction diagram}, with reactions and their corresponding rate constants denoted in the following manner 
\begin{align*}
\sum_{i=1}^{|\Spc|}{y(i)} A_i \xrightarrow{\kappa_{y\to y'}} \sum_{i=1}^{|\Spc|}{y'(i)}A_i,&\quad \text{ for } y\to y'\in\Rxn \text{ such that }y'\to y\notin\Rxn\\
\sum_{i=1}^{|\Spc|}{y(i)} A_i \xrightleftharpoons[\kappa_{y'\to y}]{\kappa_{y\to y'}} \sum_{i=1}^{|\Spc|}{y'(i)}A_i,&\quad \text{ for } y\tofrom y'\in\Rxn.
\end{align*}

For example, consider the reaction network with species  $\Spc=\{A,B,C\}$, complexes $\Cmp=\{0, A, C, 2A + B\}$, and reactions $\Rxn=\{2A + B \to C, 0 \to A, A \to 0 \}$. Assume all rate constants are unity, i.e.\, $\kappa=(1,1,1)$. The quadruple $\Net=(\Spc,\Cmp,\Rxn,\kappa)$ forms a CRN. Moreover, the reaction network includes the reversible pair $0 \to A$ and $A\to 0$, so we can write the set of reactions as $\Rxn=\{ 2A + B \to C, 0 \tofrom A\}$. The reaction diagram of this CRN is:
\[
2A + B \xrightarrow{1} C,\qquad 0 \xrightleftharpoons[1]{1} A.
\]

The usual {\it stochastic mass action model} of a CRN  $\Net = (\Spc,\Cmp,
\Rxn,\kappa)$ is defined as  a continuous-time Markov chain (CTMC), denoted by $X(\cdot)$, where the propensity for reaction $y\rightarrow y'$ in state $x$ is given by:
\[
\lambda_{y\rightarrow y'}(x)=\kappa_{y\rightarrow y'}\frac{x!}{(x-y)!}\1_{\{x\ge y\}}.
\]
For example, the propensity of the reaction $A+B\rightarrow C$  at state $x$ is $\lambda_{A+B\rightarrow C}(x)=\kappa_{A+B\rightarrow C} x(A)x(B)$. For any pair $(x,x') \in \Int^{|\Spc|}_{\geq0}\times\Int^{|\Spc|}_{\geq0}$ the transition rate of the CTMC from $x$ to $x'$ is  given by:
\[
Q(x,x')=\sum_{\substack{\react{y}{y'}\in\Rxn\\ x'-x=y'-y}}\lambda_{\react{y}{y'}}(x).
\]

 For an initial condition $x_0\in\Int^{|\Spc|}_{\geq0}$, we use the following notation for the probability mass function:
\[
P(x,t|x_0)=P(X(t)=x|X(0)=x_0).
\]

\subsection{Key concepts}

If there is a $t>0$ for which $P(x',t | x) > 0$  we say that $x'$ is {\it reachable} from $x$. Note that if $x'$ is reachable from $x$ then it is possible to reach $x'$ from $x$ via a finite number of reactions.    The {\it reachability class} of a state $x_0$ is the set of states that are reachable from $x_0$. 
If there is a distribution $\pi(\cdot | x_0)$ for which 
\begin{equation}\label{9485739573987}
\pi(x|x_0)= \lim_{t\rightarrow\infty}P(x,t|x_0), \quad \text{ for all $x$ in the state space},
\end{equation}
then $\pi(\cdot | x_0)$ is said to be the {\it limit distribution} of the processes for initial condition $x_0$.
We note that limit distributions are stationary distributions for the models, i.e.\ distributions $\pi$ such that $P(\cdot,t)=\pi(\cdot)$ for all $t\geq0$ if $X(0)$ is distributed according to $\pi$. For a subset $\Vis\subseteq\Spc$ of species the {\it marginal} of $\pi(\cdot|x_0)$ onto $\Vis$ is:
\begin{equation}\label{768769786}
\pi_\Vis(x|x_0)= \sum_{\substack{x'\in\Int_{\geq0}^{|\Spc|}\\[0.1em](x')_\Vis=x}}\pi(x'|x_0),
\end{equation}
where $(x')_\Vis$ is the projection of $x'$ onto  the species in $\Vis$.

\begin{defn}
Let  $\Uni$ be a set of CRNs.  We say that $\Uni$ {\it approximates} a distribution  $q:\Int_{\geq0}^{d}\rightarrow[0,1]$ if for every $\varepsilon>0$ there exists a CRN $(\Spc,\Cmp,\Rxn,\kappa) \in\Uni$, an initial condition $x_0\in\Int_{\geq0}^{|\Spc|}$, and a subset of the species $\Vis \subseteq \Spc$, called the {\it visible} species, with $|\Vis| = d$, such that a limit distribution $\pi(\cdot | x_0)$ as in \eqref{9485739573987} exists and $\| \pi_\Vis(\cdot | x_0) -q \|_{\infty}<\varepsilon$. 
If $\Uni$ approximates every distribution on $\Int^m_{\ge 0}$ for every $m\ge 1$, then we say that it is {\it universally approximating}.
%
%
\end{defn}

\section{Main Results}

\subsection{Universal Approximation with Detailed Balanced Networks} 
\begin{defn}
Let the CRN $\Net = (\Spc,\Cmp,\Rxn,\kappa)$ be reversible.  If there exists a vector $c\in\Rel_{>0}^{|\Spc|}$ such that
\begin{equation}\label{584958390}
\kappa_{y\rightarrow y'}c^y = \kappa_{y'\rightarrow y}c^{y'},
\end{equation}
for each reaction $y\rightarrow y'\in\Rxn$, then we say that $\Net$ is {\it detailed balanced}. \end{defn}
\begin{rmk}
We employ one of multiple ways of defining detailed balance for CRNs. In the definition above the vector $c$ is a vector of steady-state concentrations for  the deterministic CRN with mass action kinetics that obeys the detailed balance condition \cite{horn1972general}. The choice of $c$ may not be unique.  In particular, detailed balanced networks have the remarkable property that if one positive equilibrium satisfies the detailed balanced condition \eqref{584958390}, then all positive equilibria satisfy \eqref{584958390} \cite[Theorem 3.10]{CJ:graphic}.  If one recognizes that concentrations correspond to $c_i=e^{-G_i/kT}$, where $G_i$ is the free energy of formation associated to the $i$th species, $k$ is Boltzmann's constant, and $T$ is temperature, the definition takes the form of the thermodynamic formula $\kappa_{y\to y'}/\kappa_{y'\to y}=e^{-\Delta G(y\to y')/kT}$, which relates the equilibrium constant to the change in free energy $\Delta G(y\to y') =\sum_{i=1}^{|\Spc|} G_i(y'(i)-y(i))$  associated with reaction $y\to y'$. \hfill $\triangle$
\end{rmk}

The stationary  distributions for detailed balanced models are well known:
\begin{equation}\label{3746346236238}
\pi( x| x_0) = \frac{1}{M_{ x_0}}\frac{c^ x}{ x!},
\end{equation}
for each $x$ in the reachability class of $ x_0$, where $M_{ x_0}$ is the corresponding normalization constant. This result appears in, for example, Theorem 3.2 in \cite{whittle1986systems}, or, more recently, as a special case of Theorem 4.1 in \cite{ACK:product}. However, versions of the theorem were published as early as 1958 in \cite{bartholomay1958stochastic}, and 1967 in \cite{mcquarrie1967stochastic}. 

The stationary distributions of detailed balanced systems consist exclusively of restrictions of products of Poisson distributions. Yet, as we will show, every distribution with finite support can be expressed as the marginal of the limit distribution of some detailed balanced system.

\begin{cons}[Fully connected network]\label{43294839184302143}
Let $d\geq1$ and let $q:\Int_{\geq0}^d\rightarrow[0,1]$ be a distribution with finite support $\{x_1,\dots,x_n\}$. Define the CRN $\DB(q)=(\Spc,\Cmp,\Rxn,\kappa)$  as follows. 

 \begin{itemize}
     \item The set of species is  $\Spc = \{V_1,\dots,V_d,H_1, \dots, H_m\}.$  Note that there is one species $V_i$ for each of the dimensions of $q$ and that there is one species $H_j$ for each of the points in the support of $q$.  

     \item The sets of complexes, reactions, and rate constants are given by the reaction diagram described by:
\begin{equation}\label{374290184712387432109}
H_i+\sum_{k = 1}^d {x_i(k)} V_k \xrightleftharpoons[x_i! q(x_i)]{x_j!q(x_j)} H_j + \sum_{k = 1}^d {x_j(k)} V_k, \quad
\end{equation}
 for $i,j \in \{1,\ldots,d\},\  i \ne j$. \hfill $\triangle$
 \end{itemize}

\end{cons}
States of the associated continuous-time Markov chain will reside in $\Int_{\ge0}^{d+m}$, with the first $d$ dimensions corresponding to the visible species $V_i$ and the final $m$ dimensions corresponding to the $H_j$.  We will therefore write states as  $x=(v,h)$, where $v\in\Int_{\ge0}^d$ and $h\in\Int_{\ge0}^m$. 

\begin{rmk}
Note that if the initial condition for Construction \ref{43294839184302143} is $x_0 = (x_1,e_1)$ (or, more generally, of the form $(x_i,e_i),$ for $i \in \{1,\dots,m\}$), then  at any future time there is precisely one $H_i$ molecule that has a count of one and the others have a count of zero.  Moreover, the network is designed so that if the count of $H_i$ is 1, then the counts of the $(V_1,\dots, V_d)$ are exactly $x_i.$\hfill $\triangle$
\end{rmk}

\begin{lem}\label{578943578395748390}
Let $d\ge1$ and let $q:\Int_{\ge0}^d\to[0,1]$ be a distribution with finite support. Then, $\DB(q)$ is detailed balanced and, for initial condition $x_0=(x_1,e_1)$ and set of visible species $\Vis=\{V_1,\ldots,V_d\}$, the marginal of the limit distribution satisfies $\pi_\Vis(\cdot|x_0)=q$.
\end{lem}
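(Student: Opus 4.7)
My plan is to verify detailed balance by exhibiting an explicit vector $c$, then identify the reachability class of $x_0=(x_1,e_1)$, and finally use the known product-form stationary distribution in \eqref{3746346236238} to compute the marginal.

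First I would guess $c$ so that the detailed balance equations \eqref{584958390} hold for every reaction in \eqref{374290184712387432109}. Writing states as $(v,h)\in\Int_{\ge0}^{d+m}$, the reaction $H_i+\sum_k x_i(k)V_k\to H_j+\sum_k x_j(k)V_k$ has rate constant $x_j!q(x_j)$, reactant complex $(x_i,e_i)$, and product complex $(x_j,e_j)$. Setting $c_{V_k}=1$ for all $k$ and $c_{H_i}=x_i!\,q(x_i)$, the detailed balance condition becomes
\[
x_j!\,q(x_j)\cdot\bigl(x_i!\,q(x_i)\bigr)\cdot 1 \;=\; x_i!\,q(x_i)\cdot\bigl(x_j!\,q(x_j)\bigr)\cdot 1,
\]
which is a tautology. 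Hence $\DB(q)$ is detailed balanced with this choice of $c$ (note that $q(x_i)>0$ for every $i$ in the support, so $c\in\Rel^{|\Spc|}_{>0}$ as required).

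Next I would characterize the reachability class of $x_0=(x_1,e_1)$. Every reaction preserves the sum $\sum_{i=1}^m h(i)$, so this sum stays equal to $1$ along any trajectory; moreover, every reaction in \eqref{374290184712387432109} simultaneously turns the unique occupied $H_i$ into some $H_j$ while replacing $x_i$ by $x_j$ in the $V$-coordinates. By induction on the number of reactions fired, every reachable state has the form $(x_i,e_i)$ for some $i\in\{1,\dots,m\}$. Conversely, the fully connected reaction structure in \eqref{374290184712387432109} shows that from $(x_1,e_1)$ one can reach $(x_j,e_j)$ in a single reaction for every $j$, so the reachability class is exactly $R=\{(x_i,e_i):i=1,\ldots,m\}$. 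Since $R$ is finite and the chain is irreducible on $R$, the limit distribution $\pi(\cdot\mid x_0)$ in the sense of \eqref{9485739573987} exists and is the unique stationary distribution supported on $R$.

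Now I would apply formula \eqref{3746346236238} to compute $\pi$ on $R$. For $x=(x_i,e_i)$,
\[
\frac{c^x}{x!} \;=\; \frac{c_V^{x_i}\,c_{H_i}}{x_i!\,\prod_{k}e_i(k)!} \;=\; \frac{1\cdot x_i!\,q(x_i)}{x_i!\cdot 1} \;=\; q(x_i),
\]
so $\pi\bigl((x_i,e_i)\mid x_0\bigr)=q(x_i)/M_{x_0}$. Summing over $i$ and using that $q$ is a distribution gives $M_{x_0}=1$, hence $\pi\bigl((x_i,e_i)\mid x_0\bigr)=q(x_i)$. Finally, the marginal \eqref{768769786} onto $\Vis=\{V_1,\dots,V_d\}$ collapses to a single term because for each $v$ there is at most one $i$ with $v=x_i$, yielding $\pi_\Vis(x_i\mid x_0)=q(x_i)$ on the support of $q$ and $\pi_\Vis(v\mid x_0)=0=q(v)$ elsewhere. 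This proves $\pi_\Vis(\cdot\mid x_0)=q$.

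There is no real obstacle here beyond guessing $c$ correctly; the key observation is that placing the factor $x_i!\,q(x_i)$ on the hidden species $H_i$ (and leaving the $V_k$ with concentration $1$) causes the combinatorial $x!$ factor in \eqref{3746346236238} to cancel precisely the $x_i!$ inserted into the rate constants, leaving the clean identity $c^x/x!=q(x_i)$.
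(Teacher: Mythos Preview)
Your proof is correct and follows essentially the same approach as the paper: you exhibit the same vector $c$ (with $c_{V_k}=1$ and $c_{H_i}=x_i!\,q(x_i)$), identify the reachability class as $\{(x_i,e_i):1\le i\le m\}$, and apply the product-form formula \eqref{3746346236238} to obtain $\pi((x_i,e_i)\mid x_0)=q(x_i)$. Your treatment is slightly more explicit than the paper's in justifying the reachability class and in noting that $q(x_i)>0$ ensures $c\in\Rel_{>0}^{|\Spc|}$, but the argument is otherwise the same.
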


\begin{proof}
Let us denote the complexes of $\DB(q)$ with:
\[
	y_i = H_i + \sum_{k = 1}^d {x_i(k)} V_k,
\]
for $i\in\{1,\ldots,m\}$. Note that $c=(c^V,c^H)\in \Rel^{d + m}_{>0}$ given by
\begin{align}
\label{eq:56789765}
\begin{split}
    c^V_\ell =1,&\quad \ell\in\{1,\ldots,d\}\\
    c^H_\ell =x_\ell!q(x_\ell),&\quad\ell\in\{1,\ldots,m\}
    \end{split}
\end{align}
satisfies $\kappa_{y_i \to y_j} c^{y_i} = x_i! x_j! q(x_i) q(x_j) = \kappa_{y_j \to y_i} c^{y_j}$ for all $i,j\in\{1,\ldots,m\}$.  Hence,  $\DB(q)$ is detailed balanced. The reachability class of initial condition $x_0 = (x_1,e_1)$ is $\{(x_i,e_i)\ |\ 1\leq i\leq m\}$. Hence, for initial condition $x_0$, the limit distribution \eqref{3746346236238} satisfies:
\[
\pi((x_i,e_i)|x_0)=\frac{1}{M_{x_0}}\frac{c^{(x_i,e_i)}}{(x_i,e_i)!}=\frac{1}{M_{x_0}}\frac{c^H_i}{x_i!}=\frac{q(x_i)}{M_{x_0}}.
\]
Notice that, since $q$ is normalized, we have:
\[
M_{x_0}=\sum_{i=1}^m q(x_i)=1.
\]
Therefore, for the set of visible species $\Vis=\{V_1,\dots,V_d\}$, the marginal of the distribution satisfies:
\begin{equation}\label{6579876987685}
\pi_\Vis(x_i|x_0)=\pi(x_i,e_i|x_0) = q(x_i),
\end{equation}
for all $i\in\{1,\dots,m\},$ and it is 0 otherwise.
\end{proof}

Notice that if we modify the stoichiometry of the reactions in Construction \ref{43294839184302143}, while preserving the net change of molecules of each reaction, the reachability class for the initial condition $x_0=(x_1,e_1)$ is the same as before. Also, we can scale the rate constants of reversible pairs of reactions by the same factor without affecting the limit distribution. For example, if a state $x_i$ in the support of $q$ has exactly one more molecule of $V_k$ than another state $x_j$ in the support of $q$, we may use the reactions 
\[
H_i+V_k\xrightleftharpoons[{x_i(k)} q(x_i)]{q(x_j)} H_j,
\]
for that pair of states, instead of \eqref{374290184712387432109}. With this choice of rate constants Lemma \ref{578943578395748390} remains true. 

Let $d \ge 1$ and consider the set $\Int_{\ge0}^d$ of states. We say that two states $x,x'\in\Int_{\ge0}^d$ are {\it adjacent} if there exists $i\in\{1,\ldots,d\}$ such that $|{x(i)}-{x'(i)}|=1$, and ${x(j)}={x'(j)}$ otherwise. Let $U\subseteq\Int_{\ge0}^d$ be a set of states. If $U$ can be connected using only edges that connect adjacent states we say that $U$ is a {\it cluster}. Then, if the support of a distribution is finite and also a cluster we can modify Construction \ref{43294839184302143} to use reactions that have at most two molecules in the reactants and similarly for the products, i.e. using only bimolecular reactions.

\begin{cons}[Bimolecular indexed network]\label{4839201483}
Let $d\geq1$ and let $q:\Int_{\geq0}^d\rightarrow[0,1]$ be a distribution with finite support $\{x_1,\dots,x_n\}$. Suppose that the support is a finite cluster. Define the CRN $\BiDB(q)=(\Spc,\Cmp,\Rxn,\kappa)$  as follows. 
\begin{itemize}
    \item The set of species is the same as in Construction \ref{43294839184302143}, i.e. $\Spc=\{V_1,\ldots,V_d,H_1,\ldots,H_m\}$.
    \item The sets of complexes, reactions, and rate constants are given by the reaction diagram described by:
\[
H_i + V_k \xrightleftharpoons[{x_i(k)}  q(x_i)]{q(x_j)} H_j,
\]
for $i,j\in\{1,\ldots,m\}$, and $k\in\{1,\ldots,d\}$ such that ${x_i(k)}= x_j(k) + 1$, and the components of $x_i$ and $x_j$ are otherwise equal. That is, when $x_i - x_j = e_k.$ \hfill $\triangle$
\end{itemize}
\end{cons}
\begin{rmk}
By an argument similar to the proof of Lemma \ref{578943578395748390}, $\BiDB(q)$ is detailed balanced with detailed balanced equilibrium $c$ from \eqref{eq:56789765}.  Moreover, for initial condition $x_0=(x_1,e_1)$ and set of visible species $\Vis=\{V_1,\ldots,V_d\}$, the marginal of the limit distribution satisfies $\pi_\Vis(\cdot|x_0)=q$. \hfill $\triangle$
\end{rmk}

We can further simplify Construction \ref{4839201483} by removing reactions but preserving the connectivity of the cluster. If we remove edges of a cluster until no edge can be removed without splitting it into two disconnected graphs, the graph that results is a spanning tree. 

Let $U\subseteq\Int_{\ge0}^d$ be a cluster, and let $E\subseteq U\times U$ be a simple graph that consists only of edges that connect adjacent states. If $E$ connects all of the elements of $U$ but no proper subgraph of $E$ connects the elements of $U$ we say that $E$ is a {\it cluster spanning tree}.

\begin{cons}[Spanning tree indexed network]\label{4328492384483290}
Let $d\geq1$ and let $q:\Int_{\geq0}^d\rightarrow[0,1]$ be a distribution with finite support $U=\{x_1,\dots,x_n\}$.  Suppose that $U$ is a finite cluster. Let $E\subseteq U\times U$ be a cluster spanning tree. Define the CRN $\StiDB(q,E)=(\Spc,\Cmp,\Rxn,\kappa)$  as follows. 

 \begin{itemize}
     \item The set of species  is $\Spc=\{V_1,\ldots,V_d,H_1,\ldots,H_m\}$.
     \item The sets of complexes, reactions, and rate constants are given by the reaction diagram described by:
\[
H_i + V_k\xrightleftharpoons[{x_i(k)} q(x_i)]{q(x_j)} H_j,
\]
for $i,j\in\{1,\ldots,m\}$, and $k\in\{1,\ldots,d\}$ such that $(x_i,x_j)\in E$, and $x_i(k)= x_j(k)+1$.
Notice that $E$ consists of only edges between adjacent states so we must have that $x_i(\ell)=x_j(\ell)$ for $\ell\neq k$ since $x_i - x_j=e_k$.\hfill $\triangle$
\end{itemize}

\end{cons}
\begin{rmk}
Notice that, similar to Construction \ref{4839201483}, $\StiDB(q,E)$ is detailed balanced with detailed balanced equilibrium $c$ from \eqref{eq:56789765}.  Also, for initial condition $x_0=(x_1,e_1)$ and set of visible species $\Vis=\{V_1,\ldots,V_d\}$, the marginal of the limit distribution satisfies $\pi_\Vis(\cdot|x_0)=q$. \hfill $\triangle$
\end{rmk}

In Construction \ref{43294839184302143} the number of reactions is equal to $m(m-1)=\Ord(m^2)$, i.e. it is quadratic in the size of the support of $q$. In comparison, the number of reactions in Construction \ref{4839201483} is $\Ord(d m)$ since every point of the support has at most $2d$ adjacent states.  Finally, the number of reactions in Construction \ref{4328492384483290} is equal to $2(m-1)=\Ord(m)$, i.e. it is linear in the size of the support of $q$. 

%
\begin{figure*}
\begin{center}
\includegraphics[width=\textwidth]{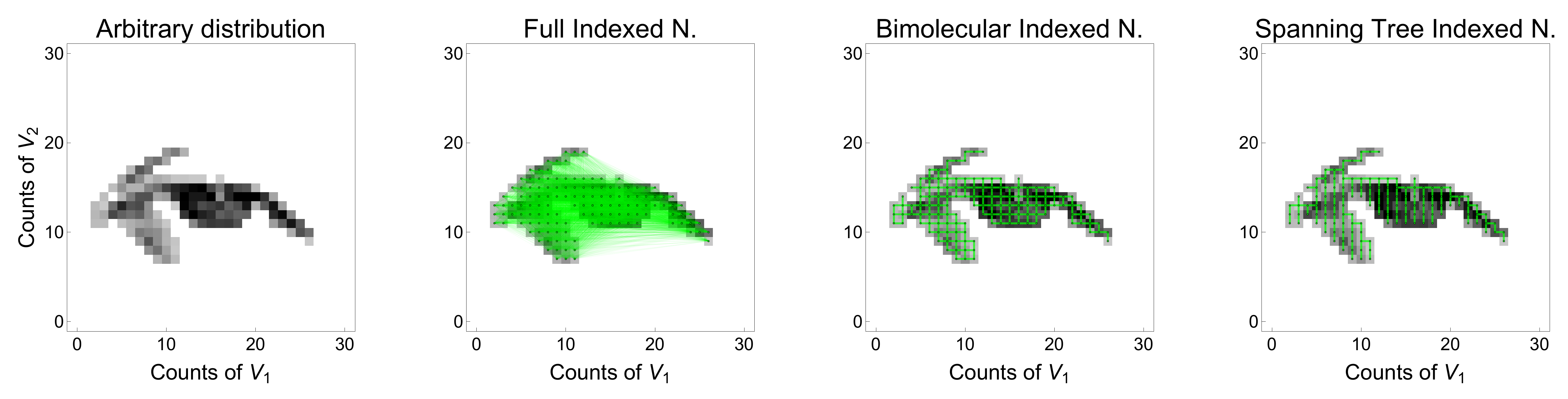}
\end{center}
\caption{Example of detailed balanced constructions. On the left we have a two-dimensional distribution with a finite cluster as support. For the Full Indexed Network every pair of elements of the support, indicated by a green edge connecting the pair, is assigned a reaction. For example, if $x_1=(5,3)$ and $x_2=(13,5)$, we have the reversible  reaction pair $H_1 + 5V_1 + 3V_2 \tofrom H_2 + 13V_1 + 5V_2$. For the Bimolecular and Spanning Tree Indexed Networks we only consider reactions between adjacent states, indicated by green edges as well. For example, we assign to the pair $x_1=(5,3),\ x_2=(5,4)$  the reversible reaction pair $H_1 \tofrom H_2 + V_2$. While the Bimolecular Indexed Network includes one reaction for every adjacent pair of elements of the support, the Spanning Tree Indexed Network includes only as many adjacent pairs as necessary to leave the support connected.}
\end{figure*}
 


\begin{lem}\label{48492318412903}
 Let $d\ge 1$, let $q:\Int_{\ge0}^d\to[0,1]$ be a distribution, and let $\varepsilon>0$. Then, there exists a distribution $q':\Int_{\ge0}^d\to[0,1]$ with finite support that satisfies $\|q-q'\|_\infty<\varepsilon.$
\end{lem}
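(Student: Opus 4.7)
The plan is a direct truncation argument. Since $\mathbb{Z}_{\geq 0}^d$ is countable, fix any enumeration $\{x^{(1)}, x^{(2)}, \ldots\}$ of its elements. Because $q$ is a distribution, $\sum_{i=1}^\infty q(x^{(i)}) = 1$, so the tail sums converge to $0$. Hence we may choose $N$ large enough that
\[
\sum_{i > N} q(x^{(i)}) < \varepsilon.
\]
Set $F = \{x^{(1)}, \ldots, x^{(N)}\}$; this will be the support of $q'$.

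The next step is to define $q'$ so that it is an actual distribution (not merely a subprobability) while remaining close to $q$ in $\|\cdot\|_\infty$. The simplest way is to assign the leftover tail mass to a single point of $F$: put
\[
q'(x^{(1)}) = q(x^{(1)}) + \sum_{i > N} q(x^{(i)}), \qquad q'(x^{(i)}) = q(x^{(i)}) \text{ for } 2 \le i \le N,
\]
and $q'(x) = 0$ for $x \notin F$. Then $q'$ is nonnegative, sums to $1$, and has finite support $F$.

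Finally, I would verify the norm bound. For $x = x^{(i)}$ with $2 \le i \le N$, $|q(x) - q'(x)| = 0$. For $x = x^{(1)}$, $|q(x^{(1)}) - q'(x^{(1)})| = \sum_{i > N} q(x^{(i)}) < \varepsilon$. For $x \notin F$, $|q(x) - q'(x)| = q(x) \le \sum_{i > N} q(x^{(i)}) < \varepsilon$. Taking the supremum gives $\|q - q'\|_\infty < \varepsilon$, as desired. There is no real obstacle here; the argument is just truncation plus redistribution of the residual mass, and it would work equally well if one chose instead to renormalize $q$ restricted to $F$ (the bookkeeping is a bit less clean in that case, but still fine).
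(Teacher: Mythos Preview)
Your proof is correct and takes essentially the same approach as the paper: truncate to a finite set capturing all but $\varepsilon$ of the mass, then dump the residual tail mass onto a single chosen point. The paper orders the points by decreasing probability and places the excess on $x_m$ rather than $x_1$, but these are cosmetic differences; the argument and the verification of the $\|\cdot\|_\infty$ bound are the same.
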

\begin{proof}
Consider an ordering of the points in the state space $\{x_1,x_2,\ldots\}=\Int_{\ge 0}^d$, that satisfies $q(x_i) \geq q(x_j)$, whenever $i \leq j$. Since $\sum_{i = 1}^\infty q(x_i) = 1$, there is an $m$ for which $\sum_{i=1}^m q(x_i)>1-\varepsilon$.
Let $q':\Int_{\ge 0}^d \to [0,1]$ be the distribution given by:
\begin{align}\label{009875487301}
q'(x_i) = \begin{cases}
q(x_i), & i \in \{1,\ldots,m-1\}\\
\sum_{j=m}^{\infty} q(x_j), & i = m\\
0, & i \in \{m+1,\ldots\}
\end{cases}
\end{align}
Finally, notice that $q'$ satisfies:
\[
 \|q-q'\|_\infty = |q'(x_{m})-q(x_{m})|=\sum_{i=m+1}^{\infty} q(x_i) < \varepsilon,
 \]
 as desired.
\end{proof}

\begin{thm}\label{thm:dbuniversal}
The set of all detailed balanced CRNs is universally approximating. 
\end{thm}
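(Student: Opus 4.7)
The plan is to combine the two preceding lemmas in the obvious way. Given an arbitrary dimension $m \ge 1$, an arbitrary distribution $q : \Int_{\ge 0}^m \to [0,1]$, and an arbitrary $\varepsilon > 0$, I need to exhibit a detailed balanced CRN, an initial condition, and a visible set $\Vis$ with $|\Vis| = m$ such that the marginal of the limit distribution is within $\varepsilon$ of $q$ in infinity norm.

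First, I would apply Lemma~\ref{48492318412903} with $d = m$ and the given $\varepsilon$ to obtain a distribution $q' : \Int_{\ge 0}^m \to [0,1]$ with \emph{finite} support satisfying $\|q - q'\|_\infty < \varepsilon$. The point of this step is that Lemma~\ref{578943578395748390} requires finite support, whereas the target $q$ may be supported on an infinite set.

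Next, I would apply Construction~\ref{43294839184302143} to $q'$ to obtain the CRN $\DB(q')$, which by Lemma~\ref{578943578395748390} is detailed balanced, hence belongs to the class of detailed balanced CRNs. Taking the initial condition $x_0 = (x_1, e_1)$ in the notation of Construction~\ref{43294839184302143} and the visible set $\Vis = \{V_1, \dots, V_m\}$, Lemma~\ref{578943578395748390} guarantees that a limit distribution exists and that its marginal onto $\Vis$ equals $q'$ exactly. The triangle inequality then gives
\[
\|\pi_\Vis(\cdot \mid x_0) - q\|_\infty = \|q' - q\|_\infty < \varepsilon,
\]
which is the required approximation. Since $m$, $q$, and $\varepsilon$ were arbitrary, this shows that the class of detailed balanced CRNs approximates every distribution on $\Int_{\ge 0}^m$ for every $m \ge 1$, i.e., it is universally approximating.

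There is no genuine obstacle here; the theorem is essentially a corollary packaging together Lemma~\ref{578943578395748390} (exact representation for finite-support distributions via a detailed balanced construction) with Lemma~\ref{48492318412903} (finite-support approximation in infinity norm). The only minor care needed is to confirm that the finite-support truncation step is done \emph{first}, so that the CRN construction is applied to a distribution it can handle; and to observe that $\DB(q')$ does lie in the quantified class (``all detailed balanced CRNs''), which is immediate from Lemma~\ref{578943578395748390}.
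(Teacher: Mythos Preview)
Your proposal is correct and follows essentially the same approach as the paper's own proof: apply Lemma~\ref{48492318412903} to reduce to the finite-support case, then invoke Lemma~\ref{578943578395748390} on $\DB(q')$ with $x_0=(x_1,e_1)$ and $\Vis=\{V_1,\dots,V_m\}$ to get $\pi_\Vis(\cdot\mid x_0)=q'$, so $\|\pi_\Vis(\cdot\mid x_0)-q\|_\infty<\varepsilon$. The only difference is cosmetic---you make the triangle inequality explicit where the paper leaves it implicit.
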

\begin{proof}
For $d \ge 1$, let $q:\Int_{\ge 0}^d\to[0,1]$ be a distribution, and let $\varepsilon>0$. By Lemma \ref{48492318412903} we know that there exists a distribution $q':\Int_{\ge0}^d\to[0,1]$ with finite support that satisfies $\|q-q'\|<\varepsilon.$ Consider the CRN $\DB(q')$ as given in Construction \ref{43294839184302143}. We know from Lemma~\ref{578943578395748390}
 that for $\Vis=\{V_1,\ldots, V_d\}$ the marginal of the limit distribution for initial condition $x_0=(x_1,e_1)$ satisfies $\pi_\Vis(\cdot|x_0)=q'$.  Therefore, since $q$ and $\varepsilon$ were arbitrary, the set of fully connected network CRNs is universally approximating.
\end{proof}

\begin{rmk}
The proof of the theorem above can be carried out using Constructions \ref{4839201483} or \ref{4328492384483290} instead of \ref{43294839184302143}. 	If the support of a distribution is not a cluster one may connect the disconnected components using a finite number of edges and distribute a sufficiently small amount of probability throughout the states along the edges that were added. \hfill $\triangle$
\end{rmk}


In order for any of the above constructions to produce the appropriate limit distribution it is necessary to provide them with the right initial condition. Suppose instead that we wish to have a CRN whose limit distribution is independent of initial conditions. In particular, the process would be able to reach the support of its limit distribution from any state.
Moreover, if the CRN satisfies detailed balance, then a limit distribution exists \cite{ACKK:explosion} and it is a product of Poisson distributions \cite{whittle1986systems}. Therefore, such a model could not be universally approximating if it satisfies detailed balance, and a new construction is needed if one wishes to dispose of the dependence on initial conditions.

\subsection{Universal Approximation with Robust Networks}

We shift our attention to systems that have a unique limit distribution. In particular, for these systems, the limit distribution is independent of initial conditions and is robust to an arbitrary single perturbation in the counts of species at any time.

\begin{defn}
Let $\Net = (\Spc,\Cmp,\Rxn,\kappa)$ be a CRN. We say that $\Net$ is {\it robust} if the limit distribution $\pi(\cdot| x_0)$ does not depend upon $x_0$.
Since the limit distribution is unique we use $\pi(\cdot)$ and omit the initial condition.
\end{defn}
\begin{rmk}
Notice that our definition of robustness corresponds to the definition of an ergodic stochastic processes in the probability theory literature. \hfill $\triangle$
\end{rmk}

Unlike in the detailed balanced case considered in the previous section, we do not have a general form for the limit distribution of robust systems. Instead, we will provide a construction of a robust CRN that  can approximate a given point mass distribution.  We will then ``embbed'' this point mass construction in a larger robust construction that is capable of approximating an arbitrary distribution.

Let $x\in\Int_{\ge0}^d$. A {\it point mass distribution} centered at $x$, denoted by $\delta_x$, is a distribution $\delta_x:\Int_{\ge0}^d\to[0,1]$ that satisfies $\delta_x(x)=1$, and, consequently, $\delta_x(x')=0$ if $x'\ne x$. 

\begin{cons}[Point mass network]\label{constr:pmd}
Let $d \ge 1$, $x\in\Int_{\ge 0}^d$, and $\varepsilon>0$. Define  the CRN $\PMD(x,\varepsilon)=(\Spc,\Cmp,\Rxn,\kappa)$ as follows. 

\begin{itemize}
    \item The set of species is $\Spc=\{V_1,\dots,V_d\}$. 
    \item 
    The sets of complexes, reactions, and rate constants are given by the reaction diagram described by:
\begin{align*}
0 \xrightarrow{1} V_i &\quad \text { for } i\in\{1,\ldots,d\}\text{ if }x(i)\neq0,\\
({x(i)}+1)V_i \xrightarrow{2d/\varepsilon} {x(i)} V_i &\quad \text { for } i\in\{1,\ldots,d\},\\
2V_i \xrightarrow{1} 0 &\quad \text { for } i\in\{1,\ldots,d\}\text{ if }x(i)=0.
\end{align*}
\hfill $\triangle$

\end{itemize}
\end{cons}

Note that for every $i\in\{1,\dots,d\}$, exactly one of the two reactions $0\to V_i$ or $2V_i\to 0$ is present in Construction \ref{constr:pmd}.

\begin{lem}\label{lem:pmd}
Let $d \ge 1$, $x\in\Int_{\ge0}^d$, and $\varepsilon>0$. Then, $\PMD(x,\varepsilon)$ is robust, with the unique limit distribution $\pi$ satisfying $\|\pi-\delta_x\|_\infty < \varepsilon$.
\end{lem}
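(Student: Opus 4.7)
The plan is to exploit the key structural feature of $\PMD(x,\varepsilon)$: every reaction involves only a single species. As a consequence, the continuous-time Markov chain $X(\cdot)$ decouples into $d$ independent one-dimensional chains $X_i(\cdot)$, one for each $V_i$. Robustness will therefore follow if each $X_i$ has a unique limit distribution $\pi_i$, in which case the joint limit distribution is simply the product $\pi = \pi_1 \otimes \cdots \otimes \pi_d$. After establishing this product form I would bound each $\pi_i(x(i))$ from below and conclude via a Bernoulli-type argument.

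First I would handle the case $x(i)=0$. Here the only reactions affecting $V_i$ are the two pure degradations $V_i \to 0$ (which is the instance $(x(i)+1)V_i \to x(i)V_i$) and $2V_i \to 0$; no production is present. State $0$ is absorbing and is reached almost surely from any initial state, since the total degradation rate at $n \geq 1$ is at least $2dn/\varepsilon > 0$. Hence $\pi_i = \delta_0 = \delta_{x(i)}$ and, in particular, $\pi_i(x(i)) = 1$. In the complementary case $x(i) \neq 0$, the $V_i$-chain is a birth-death process on $\Int_{\geq 0}$ with birth rate $1$ at every state and death rate $\mu_n = (2d/\varepsilon)\,n!/(n-x(i)-1)!$ for $n \geq x(i)+1$, and $0$ otherwise. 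The states $\{0,\dots,x(i)-1\}$ are transient because the chain can only move upward from them, while on $\{x(i),x(i)+1,\dots\}$ the chain is irreducible; positive recurrence and non-explosiveness both follow from the polynomial growth of $\mu_n$ against the constant birth rate. Detailed balance for a birth-death chain then yields
\begin{equation*}
\pi_i(x(i)+m) = \pi_i(x(i)) \left(\frac{\varepsilon}{2d}\right)^{m}\prod_{k=0}^{m-1}\frac{k!}{(x(i)+k+1)!}, \quad m \geq 0.
\end{equation*}

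With this explicit form I would estimate the mass at $x(i)$. Since $x(i) \geq 1$, each factorial ratio is bounded above by $1$, so (assuming $\varepsilon < 2d$, which is the only nontrivial regime because $\|\pi - \delta_x\|_\infty \leq 1 < \varepsilon$ whenever $\varepsilon \geq 1$) a geometric comparison gives
\begin{equation*}
1 - \pi_i(x(i)) \;\leq\; \pi_i(x(i))\,\frac{\varepsilon/(2d)}{1-\varepsilon/(2d)},
\end{equation*}
which rearranges to $\pi_i(x(i)) \geq 1 - \varepsilon/(2d)$. Combining the coordinates via independence and then applying Bernoulli's inequality produces $\pi(x) = \prod_i \pi_i(x(i)) \geq (1-\varepsilon/(2d))^d \geq 1 - \varepsilon/2$. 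Because $\sup_{y \neq x}\pi(y) \leq \sum_{y \neq x}\pi(y) = 1 - \pi(x)$, one has $\|\pi - \delta_x\|_\infty = 1-\pi(x)$, and hence $\|\pi - \delta_x\|_\infty \leq \varepsilon/2 < \varepsilon$, as required.

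The main obstacle I anticipate is the careful verification that the birth-death chain in the case $x(i) \neq 0$ is positive recurrent and non-explosive, and that its limit distribution coincides with the unique stationary distribution regardless of initial condition; initial states below $x(i)$ require a short transient argument, since the chain deterministically drifts upward into the recurrent class before the stationary regime kicks in. The remaining steps are elementary bookkeeping with the explicit product-of-factorials stationary distribution.
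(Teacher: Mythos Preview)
Your proposal is correct and follows essentially the same route as the paper: decouple into $d$ independent one-dimensional chains, identify each $\pi_i$ explicitly (trivially $\delta_0$ when $x(i)=0$, and via birth--death detailed balance on $\{x(i),x(i)+1,\dots\}$ otherwise), bound the tail by a geometric series, and conclude that $\|\pi-\delta_x\|_\infty = 1-\pi(x) \le 1-(1-\varepsilon/(2d))^d < \varepsilon$. The only cosmetic difference is that you invoke Bernoulli's inequality to get $(1-\varepsilon/(2d))^d \ge 1-\varepsilon/2$, whereas the paper reaches the same bound via the factorization $a^d-b^d=(a-b)\sum_{\ell=0}^{d-1}a^{d-1-\ell}b^\ell$; both arguments are equivalent.
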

\begin{proof}
Notice that $\PMD(x,\varepsilon)$ consists of $d$ decoupled subnetworks, each of which controls the counts of some $V_i$ independently of all the others. We will therefore focus on one  such subnetwork and later generalize. Let $\pi_i$ be the stationary distribution of the subnetwork that keeps track of the counts of $V_i$. If $x(i)=0$, then the subsystem is
\[
V_i \xrightarrow{2d/\varepsilon} 0,\qquad 2V_i \xrightarrow{1} 0,
\]
and 
\begin{equation}\label{eq:stdistr0}
 \pi_i(0|m)=\pi_i(0)=1
\end{equation}
 for all $m\geq0$ and the result is shown.
 
 Otherwise, we have $x(i) >0$ and the subsystem is a birth-death process when restricted to the closed set $\Upsilon=\{m\in\Int\,:\,m\geq x(i)\}$.  Note that  the set $\Upsilon$ is almost surely reached from any initial condition (because of the reaction $0 \to V_i$). Hence, we can compute the unique limit distribution explicitly: for any $m,n\geq0$
\begin{equation}\label{eq:pointMass}
\pi_i(n + x(i)|m)= \pi_i(n + x(i)) = \frac{1}{M_i}\left(\varepsilon^n(2d)^{-n}\prod_{j = 1}^{n}\frac{(j-1)!}{({x(i)} + j)!}\right),
\end{equation}
with
\begin{equation}\label{eq:normalising}
M_i = \sum_{n = 0}^\infty \varepsilon^n (2d)^{-n}\prod_{j = 1}^n\frac{(j-1)!}{({x(i)} + j)!},
\end{equation}
and $\pi_i(n'|m) = 0$, for $n'\leq {x(i)}-1$.

Since the subsystems behave independently, the joint limit probability will simply be the product of the marginal limit probabilities: for any $x', x_0\in\Int_{\geq0}$
\begin{equation}\label{84938432948289580}
\pi(x'|x_0)=\pi(x')=\prod_{i=1}^d\pi_i({x'(i)}|x_0(i))=\prod_{i=1}^d\pi_i({x'(i)}).
\end{equation}

We will now show that $\|\pi-\delta_{x}\|_\infty<\varepsilon$. Since $\|p-q\|_\infty\leq 2$ for any probability distributions $p,q$, we can assume that $\varepsilon<2\leq 2d$.  Note that $|\pi(x)-\delta_{x}(x)| = 1-\pi(x)$, and for all $x'\neq x$ $|\pi(x')-\delta_{x}(x')| = \pi(x')$. Also, notice that:
\[
1-\pi(x) = \sum_{x'\neq x} \pi(x')=\sum_{x'\neq x} |\pi(x')|,
\]
so 
\begin{equation}\label{qrjjfoisjfejfoiwejfoi}
    \|\pi-\delta_{x}\|_\infty=1-\pi(x).
\end{equation}
Utilizing equation \eqref{eq:pointMass} in the case $n=0$ and \eqref{84938432948289580}, we deduce that $\pi(x) = 1/M$, where
\[
M=\prod_{\substack{1\leq i\leq d\\ x(i)\neq 0}}M_i.
\]
From \eqref{eq:normalising} we have
\[
M_i \leq \sum_{n = 0}^\infty \varepsilon^n (2d)^{-n}= \frac{1}{1-\varepsilon/(2d)},
\]
which implies
\[
M\leq\frac{1}{(1-\varepsilon/(2d))^d}.
\]
Recall that, in general, for variables $a$ and $b$ we have
\[
a^d-b^d=(a-b)(a^{d-1}+a^{d-2}b + \ldots + b^{d-1})
\]
so for $a=1$ and $b=1-\varepsilon/(2d)$ we have
\[
1-\left(1-\frac{\varepsilon}{2d}\right)^d = \frac{\varepsilon}{2d}\sum_{\ell=0}^{d-1}\left(1-\frac{\varepsilon}{2d}\right)^\ell < \frac{\varepsilon}{2},
\]
where the inequality is satisfied because $0<(1-\varepsilon/(2d))^\ell<1$ if $\varepsilon<2d$, and there are $d$ terms in the sum. Therefore, the following is true
\[
1-\pi(x) = 1-\frac{1}{M}\leq1-\left(1-\frac{\varepsilon}{2d}\right)^d <\varepsilon,
\]
which, when combined with  \eqref{qrjjfoisjfejfoiwejfoi}, concludes the proof.
\end{proof}

\begin{rmk}
Lemma \ref{lem:pmd} holds true (with an almost identical proof) if Construction \ref{constr:pmd} is replaced with 
\begin{align*}
0 \xrightarrow{1} V_i &\quad \text { for } i\in\{1,\ldots,d\},\\
({x(i)}+1)V_i \xrightarrow{2d/\varepsilon} {x(i)} V_i &\quad \text { for } i\in\{1,\ldots,d\},
\end{align*}
where a distinction between null and positive entries of $x$ is not made. However, if such a distinction is made, as in Construction \ref{constr:pmd}, then the approximation of the target limit distribution is more accurate (since the marginal limit distributions of the null entries of $x$ are \emph{exactly} the point mass distribution at 0). Moreover, the reactions of type $2V_i\to 0$ utilized when $x(i) = 0$ provide a faster convergence to the limit distribution, which will be essential to obtain the main result of this paper, which is Theorem \ref{thm:mixing} (which in turn implies Theorem \ref{thm:universal}). More details on how the convergence rate is used to prove the result are given in the Appendix.
\hfill $\triangle$
\end{rmk}


%
%
\begin{cons}[Point mass mixing network] \label{constr:mix}
Let $d\ge 1$ and $q:\Int_{\ge0}^d\to[0,1]$ be a distribution with finite support $\{x_1,x_2,\ldots,x_m\}$.  Let  $\delta>0$. Define the CRN $\PMCN(q,\delta)=(\Spc,\Cmp,\Rxn,\kappa)$  as follows. 
\begin{itemize}
    \item The set of species is $\Spc=\{V_1,\ldots,V_d,H_1,\ldots,H_m\}$. In this case, each of the species $H_1,\ldots,H_m$ will serve as a catalyst for a network that generates a point mass distribution centered at the corresponding element of the support
    \item The sets of complexes, reactions, and rate constants are given by the reaction diagram described by:
    \begin{align*}
0 \xrightleftharpoons[\delta]{\delta^2 q(x_i)} H_i,&\quad \text{ for } i\in\{1,\ldots,m\},\\
H_i \xrightarrow{1} H_i + V_j,&\quad \text{ for } i\in\{1,\ldots,m\},\text{ if }  {x_i(j)}\neq 0,
\\
H_i + ({x_i(j)}+1)V_j \xrightarrow{2d/\delta} H_i + {x_i(j)}V_j,
&\quad \text{ for } i\in\{1,\ldots,m\},\ j\in\{1,\ldots,d\},\\
H_i + 2V_j \xrightarrow{1} H_i
& \quad \text{ for } i\in\{1,\ldots,m\},\text{ if }  {x_i(j)}=0.
\end{align*}
\hfill $\triangle$
\end{itemize}
\end{cons}

\begin{figure*}
\begin{center}
\includegraphics[width=\textwidth]{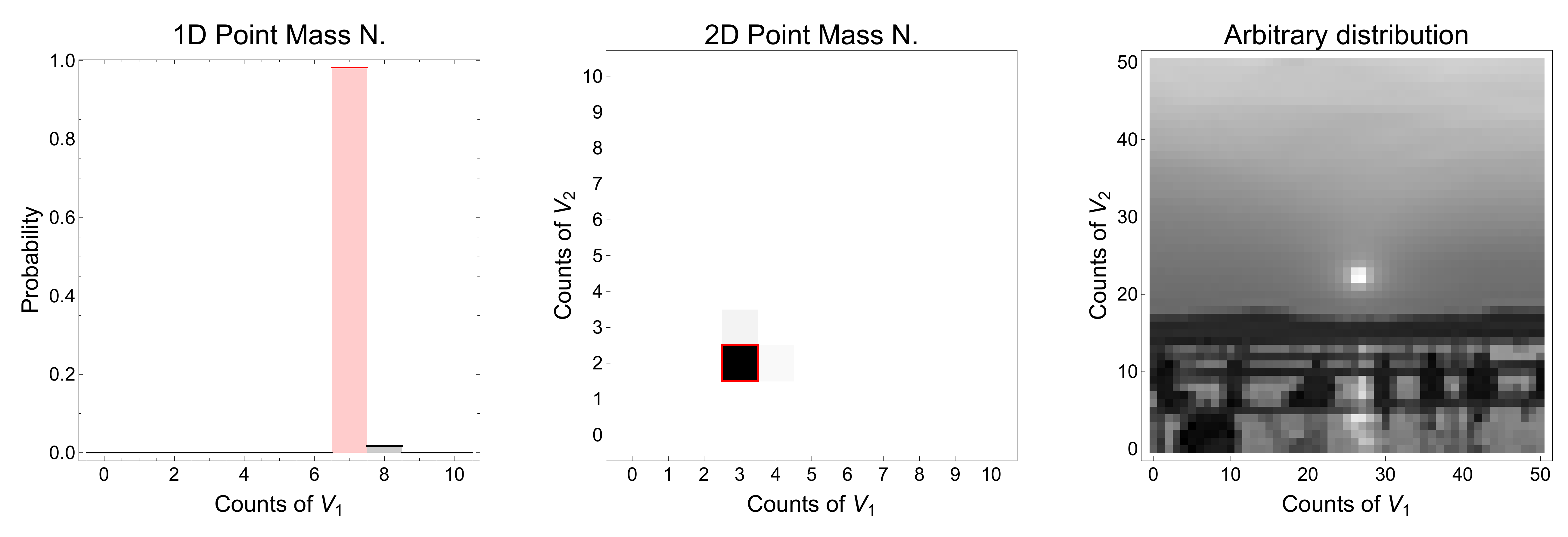}
\end{center}
\caption{Examples of robust constructions. In the first panel we see the limit distribution of Construction \ref{constr:pmd} for a 1-dimensional point mass distribution centered at $x=7$. Similarly, the second panel shows the limit distribution for Construction \ref{constr:pmd}, this time approximating a 2-dimensional point mass distribution centered at $x=(3,2)$. The third panel shows an arbitrary 2-dimensional probability distribution. Each pixel in the picture has an associated Point Mass Network as prescribed by Construction \ref{constr:mix}.}
\label{fig:robust}
\end{figure*}

\begin{rmk}
If $\delta$ is small enough, creation of catalyst species is much slower than destruction, and the probability that there is more than one catalyst species at any time can be made arbitrarily small. Furthermore, once a catalyst species is present, if the destruction rate $\delta$ is slow enough, the number of catalysts will remain unchanged long enough for the corresponding Point Mass Network to approach its limit distribution. 
\hfill $\triangle$
\end{rmk}

\begin{thm}\label{thm:mixing}
Let $d\ge 1$ and let $q:\Int_{\ge0}^d\to[0,1]$ be a distribution with finite support.   Then, (i) for every $\delta>0$ the CRN $\PMCN(q,\delta)$ is robust, and (ii) for any $\varepsilon>0$ there exists $\delta>0$ such that, for the set of visible species $\Vis=\{V_1,\ldots,V_d\}$, the marginal of the unique limit distribution of $\PMCN(q,\delta)$ satisfies $\|\pi_\Vis-q\|_\infty<\varepsilon$.
\end{thm}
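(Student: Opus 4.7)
The plan is to prove (i) and (ii) separately: (i) reduces to irreducibility plus positive recurrence, while (ii) hinges on a two-time-scale analysis in which the $H$-subsystem is autonomous and evolves slowly while the catalyzed $V$-subsystem evolves quickly. Throughout I would write a state as $(v,h)\in\Int_{\geq 0}^{d+m}$. For (i), I would first identify a single closed communicating class: the reactions $0\rightleftharpoons H_i$ make every $H$-configuration reachable, while once any $H_i$ is present the catalyzed $V$-reactions (production $H_i\to H_i+V_j$ when $x_i(j)\neq 0$, the drop reaction $(x_i(j)+1)V_j\to x_i(j)V_j$, and the $2V_j\to 0$ reactions when $x_i(j)=0$) allow us to steer from $(0,0)$ to any target state and back. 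Positive recurrence would follow from a Lyapunov-type estimate with $W(v,h)=\sum_i h_i + \sum_j v_j^{a_j}$ for sufficiently large $a_j$: the $H$-subsystem in isolation is a product of independent $M/M/\infty$ queues (constant-rate birth $\delta^2 q(x_i)$, linear-rate death $\delta h_i$), and the $V_j$'s have polynomial-order death propensities of order $(1/\delta)v_j^{x_i(j)+1}$ that dominate their at-most linear birth rates whenever $v$ is large.

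For (ii), the crucial observation is that the $h$-transition rates depend only on $h$. Summing the stationarity equation $\pi Q=0$ over $v$ shows that the $H$-marginal $\pi_H$ of the unique stationary $\pi$ must itself be the stationary distribution of the autonomous $H$-subsystem, namely the independent product $H_i\sim\mathrm{Poisson}(\delta q(x_i))$. Consequently $\pi_H(0)=e^{-\delta}$, $\pi_H(e_i)=e^{-\delta}\delta q(x_i)$, and $\pi_H(\{\sum_{i'}h_{i'}\geq 2\})=O(\delta^2)$. Conditionally on $h=e_i$ held fixed, the $V$-dynamics are exactly those of $\PMD(x_i,\delta)$, whose unique stationary distribution is within $\delta$ of $\delta_{x_i}$ by Lemma \ref{lem:pmd}; moreover, since the catalyzed $V$-death propensities are of order $1/\delta$, the $V$-process equilibrates on time-scale $O(\delta)$, much faster than the catalyst lifetime $1/\delta$. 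This two-time-scale structure should give $\pi(\cdot\mid h=e_i)\approx \delta_{x_i}$ with error $O(\delta)$. A symmetric analysis of the $h=0$ epochs, during which the $V$'s are frozen but the identity of the last active catalyst appears with relative frequency $q(x_i)$ (by comparing the $h=0\leftrightarrow h=e_i$ transition fluxes, which are balanced by $\pi_H(0)\delta^2 q(x_i)=\pi_H(e_i)\delta$), gives $\pi(\cdot\mid h=0)\approx\sum_i q(x_i)\delta_{x_i}=q$. Marginalising,
\begin{equation*}
\pi_\Vis(v) \;=\; \pi_H(0)\,\pi(v\mid 0) \,+\, \sum_i \pi_H(e_i)\,\pi(v\mid e_i) \,+\, O(\delta^2) \;=\; q(v)+O(\delta),
\end{equation*}
and choosing $\delta$ small relative to $\varepsilon$ yields $\|\pi_\Vis-q\|_\infty<\varepsilon$.

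The main obstacle is rigorously justifying the conditional approximations $\pi(\cdot\mid h=e_i)\approx \delta_{x_i}$ and $\pi(\cdot\mid h=0)\approx q$, because the joint process is genuinely coupled and not a product. The most robust route is a coupling or averaging-type argument comparing the true process to an idealised two-time-scale process in which $v$ is instantaneously replaced by a draw from the appropriate quasi-stationary distribution each time $h$ jumps; the coupling error is then controlled by the mixing rate of the $V$-subsystem at frozen $h$. The remark following Construction \ref{constr:pmd} underlines that the extra $2V_i\to 0$ reactions (present precisely when $x_i(j)=0$) are what make this mixing rate uniformly $O(\delta)$ across all catalyst configurations, and the quantitative mixing-time estimate that closes the argument is deferred to the Appendix.
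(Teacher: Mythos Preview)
Your outline for (ii) is sound and close in spirit to the paper's argument: the paper likewise exploits that the $H$-marginal is autonomous and product-Poisson with $\pi_\Hsp(0)=e^{-\delta}$, so that almost all mass sits at $h=0$. Rather than bounding $\pi(\cdot\mid h=e_i)$ and $\pi(\cdot\mid h=0)$ directly, however, the paper runs a renewal argument on the successive return times $t_j$ of $X_\Hsp^\delta$ to $0$: it shows that the embedded chain $D^\delta(j)=X_\Vis^\delta(t_j)$ has transition probabilities sandwiched between explicit bounds $b^\delta(v)\le P(D^\delta(j+1)=v\mid D^\delta(j)=v')\le B^\delta(v)$ (independent of $v'$) that both converge to $\sum_i q(x_i)\delta_{x_i}(v)$ as $\delta\to 0$, and then combines the ergodic theorem with the strong law of large numbers to identify $\pi^\delta(v,0)$ as the almost-sure time-average. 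Your coupling/averaging route could presumably be pushed through as well, but the renewal decomposition has the advantage of never needing to control $\pi(\cdot\mid h=0)$ as a separate object; the ``last active catalyst'' heuristic you invoke is exactly what the embedded-chain bounds make rigorous.

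There is, however, a genuine gap in your argument for (i). The Lyapunov function $W(v,h)=\sum_i h_i+\sum_j v_j^{a_j}$ does \emph{not} yield negative drift outside a compact set: at any state $(v,0)$ the only available reactions are $0\to H_i$, so
\[
AW(v,0)=\sum_{i=1}^m \delta^2 q(x_i)\cdot 1 = \delta^2 >0,
\]
independently of $\|v\|$. Since states $(v,0)$ with arbitrarily large $v_j$ (for any $j$ with some $x_i(j)>0$) belong to the closed irreducible class---produce an $H_i$, create many copies of $V_j$, then kill the $H_i$---the Foster--Lyapunov criterion fails for this $W$, and no additive function of the form $f(h)+g(v)$ can repair this, because the $V$-drift genuinely vanishes on $\{h=0\}$. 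The paper sidesteps the issue entirely: positive recurrence is proved by a direct hitting-time argument, showing $E[\sigma^\delta(v)\mid X^\delta(0)=(v_0,h_0)]<\infty$ for a fixed target $(v,0)$ by first waiting for $X_\Hsp^\delta$ to visit $e_i$ (finite expected time, since $\Net_\Hsp$ is an irreducible, positive-recurrent detailed-balanced network) and then invoking the finite mixing time of the catalyzed $V$-subsystem to guarantee a uniformly positive probability of arriving at $(v,0)$ before the next $H$-transition. Non-explosivity is handled separately, by conditioning on the (non-explosive) $H$-trajectory and applying a super-Lyapunov bound to the $V$-components at each fixed $h$.
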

The proof of Theorem~\ref{thm:mixing} is given in Appendix~\ref{sec:proof_thm_mixing}. Here we state and prove an immediate important consequence.

\begin{thm}\label{thm:universal}
The set of all robust CRNs is universally approximating.
\end{thm}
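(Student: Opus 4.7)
The plan is to mirror the structure of the proof of Theorem~\ref{thm:dbuniversal}, using Theorem~\ref{thm:mixing} as the workhorse instead of Lemma~\ref{578943578395748390}. Specifically, I want to fix an arbitrary dimension $d\ge 1$, an arbitrary distribution $q:\Int_{\ge 0}^d\to[0,1]$, and an arbitrary $\varepsilon>0$, and then exhibit a robust CRN whose marginal limit distribution on a visible set of species of size $d$ is within $\varepsilon$ of $q$ in the infinity norm.

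First, I would invoke Lemma~\ref{48492318412903} with tolerance $\varepsilon/2$ to produce a distribution $q':\Int_{\ge 0}^d\to[0,1]$ with finite support such that $\|q-q'\|_\infty<\varepsilon/2$. This reduces the problem to approximating distributions with finite support. Next, I would apply Theorem~\ref{thm:mixing} to $q'$: part (i) guarantees that $\PMCN(q',\delta)$ is robust for every $\delta>0$, and part (ii) guarantees that we can choose $\delta>0$ small enough so that the marginal $\pi_\Vis$ of the unique limit distribution of $\PMCN(q',\delta)$ on $\Vis=\{V_1,\dots,V_d\}$ satisfies $\|\pi_\Vis-q'\|_\infty<\varepsilon/2$.

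Combining the two bounds via the triangle inequality gives
\[
\|\pi_\Vis-q\|_\infty \le \|\pi_\Vis-q'\|_\infty + \|q'-q\|_\infty < \varepsilon/2+\varepsilon/2 = \varepsilon.
\]
Since $\PMCN(q',\delta)$ is robust, the limit distribution is independent of the initial condition, so the approximation property holds regardless of how the system is initialized. Because $q$, $d$, and $\varepsilon$ were arbitrary, this shows that the class of robust CRNs approximates every distribution on $\Int_{\ge 0}^d$ for every $d\ge 1$, which is exactly the definition of universally approximating.

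There is essentially no obstacle here: all the substantive work has been absorbed into Theorem~\ref{thm:mixing} (whose proof is deferred to the appendix) and into Lemma~\ref{48492318412903}. The only thing to check is that the visible-species set $\Vis$ specified by Construction~\ref{constr:mix} has cardinality exactly $d$, which matches the dimension of $q$, so the definition of ``approximates'' is satisfied verbatim; this is immediate from the species list $\Spc=\{V_1,\ldots,V_d,H_1,\ldots,H_m\}$ in Construction~\ref{constr:mix}. The proof is therefore a short wrap-up and will occupy only a few lines.
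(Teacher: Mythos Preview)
Your proposal is correct and follows essentially the same approach as the paper: both invoke Lemma~\ref{48492318412903} to reduce to a finitely supported $q'$, apply Theorem~\ref{thm:mixing} to obtain a robust CRN whose visible marginal approximates $q'$, and finish with the triangle inequality. The only cosmetic difference is that you split the error budget as $\varepsilon/2+\varepsilon/2$, whereas the paper first applies Lemma~\ref{48492318412903} with tolerance $\varepsilon$ and then feeds the residual $\varepsilon-\|q-q'\|_\infty$ into Theorem~\ref{thm:mixing}.
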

\begin{proof}
Let $d \ge 1$, $q:\Int_{\ge 0}^d\to[0,1]$ be a distribution, and let $\varepsilon>0$. By Lemma \ref{48492318412903} we know that there exists a distribution $q':\Int_{\ge0}^d\to[0,1]$ with finite support that satisfies $\|q-q'\|_\infty<\varepsilon$. Consider the CRN $\PMCN(q',\varepsilon -\|q-q'\|_\infty)$. By Theorem \ref{thm:mixing} we have that for the set of visible species $\Vis=\{V_1,\ldots,V_d\}$:
\[
\|\pi_\Vis-q'\|_\infty<\varepsilon-\|q-q'\|_\infty.
\]
Finally, by the triangle inequality we have
\[
\|\pi_\Vis-q\|_\infty \leq \|\pi_\Vis-q'\|_\infty + \|q-q'\|_\infty < \varepsilon.
\]
\end{proof}

\section{More general constructions}\label{wiuiehfoweijfwKFSDFJSUHFEIRHUWEFIERHVU}

In this section we explore generalizations of the constructions used in the previous sections. Theorem~\ref{thm:mixing} states that by following Construction~\ref{constr:mix} we can design robust CRNs whose limit distribution approximates a given distribution, with any given accuracy and for any chosen distribution. Hence, the aim of generalizing Constructions~\ref{constr:pmd} and \ref{constr:mix} does not reside in exploring a larger set of distributions to approximate, but rather in comparing different CRNs with similar stationary distributions. As an example, let $q$ be the Poisson distribution with mean $\kappa$. Using the construction provided above, we can design a robust CRN whose limit distribution is arbitrarily close to $q$, by truncating $q$ to a finite support $\Upsilon$ as in Lemma~\ref{48492318412903}, and by using Construction~\ref{constr:mix} to approximate the truncated $q$. Note that Construction~\ref{constr:mix} gives a CRN with $|\Upsilon|+1$ species, $4|\Upsilon|$ reactions, and with high values of the molecularity (i.e.\ $\max_{y\in\Cmp}\|y\|_1$) and of the logarithm of the rate constants, $|\log \kappa_{y\to y'}|$, to obtain its high accuracy. However,  the distribution $q$ can be obtained exactly as the limit distribution of the robust CRN
$$0 \xrightleftharpoons[1]{\kappa} V.$$
Hence, natural questions in terms of \emph{complexity} arise: given a distribution $q$ and a parameter $\varepsilon>0$, what is the minimal size of a robust CRN (in terms of number of species, number of reactions, and magnitude of the rate constants) whose limit distribution $\pi$ satisfies $\|\pi-q\|_\infty< \varepsilon$?

We begin by formulating a generalization of Construction~\ref{constr:pmd}. The generalization, detailed in Construction~\ref{constr:range} below, allows us to design robust CRNs whose limit distributions are arbitrarily close to the uniform distributions on $d$-dimensional boxes (see Proposition~\ref{prop:uniform}). The molecularity and the logarithm of the rate constants are similar to those of Construction~\ref{constr:mix}, but the number of species utilized is only $d$, and the number of reactions is $2d$.
\begin{defn}
 Let $d\geq1$ and let $a,b\in\Int_{\geq0}^d$ be such that $a\leq b$. We say that a probability distribution $q$ on $\Int_{\geq0}^d$ is uniform over
 $$[a,b]\doteq[a(1),b(1)]\times[a(2),b(2)]\times\cdots\times[a(d),b(d)]$$
 if for all $x\in\Int_{\geq0}^d$
 $$q(x)=\begin{cases}
 \displaystyle\prod_{i=1}^d\frac{1}{b(i)-a(i)+1} &\text{if }a\leq x\leq b\\
 0 & \text{otherwise}.
 \end{cases}$$
\end{defn}
\begin{rmk}\label{rem:pmd_is_uniform}
 Note that the point mass distribution at $x$ is a particular case of the uniform distribution, as it can be regarded as the uniform distribution over $[x,x]$.\hfill $\triangle$
\end{rmk}
\begin{cons}[Uniform distribution network]\label{constr:range}
Let $d\geq1$ and let $a,b\in\Int_{\geq0}^d$ be such that $a\leq b$. Let $\delta >0$ and define the CRN $\UD(a,b,\delta)=(\Spc,\Cmp,\Rxn,\kappa^\delta)$  as follows. 

 \begin{itemize}
     \item The set of species  is $\Spc=\{V_1,\ldots,V_d\}$.
     \item The sets of complexes, reactions, and rate constants are given by the reaction diagram described by:
  \begin{align*}
0 \xrightarrow{1} V_i &\quad \text { for } i\in\{1,\ldots,d\}\text{ if }b(i)\neq0,\\
(b(i)+1)V_i \xrightarrow{2d/\delta} {a(i)} V_i &\quad \text { for } i\in\{1,\ldots,d\},\\
2V_i \xrightarrow{1} 0 &\quad \text { for } i\in\{1,\ldots,d\}\text{ if }b(i)=0.
\end{align*}
\end{itemize}
\hfill $\triangle$
\end{cons}
\begin{rmk}
 Note that $\PMD(x,\delta)$ can be regarded as a particular case of Construction~\ref{constr:range}.  In particular,  $\PMD(x,\delta)=\UD(x,x,\delta)$ for all $x\in\Int_{\geq0}^d$ and any $\delta>0$.\hfill $\triangle$
\end{rmk}

\begin{prop}\label{prop:uniform}
Let $d\geq1$ and let $a,b\in\Int_{\geq0}^d$ be such that $a\leq b$. Then, for any choice of $\delta>0$ the CRN $\UD(a,b,\delta)$ is robust. Moreover, if we denote by $\pi^\delta$ its limit distribution and by $q$ the uniform distribution over $[a,b]$, we have
 $$\lim_{\delta\to0}\frac{\|\pi^\delta-q\|_\infty}{\delta}
 \leq 1.$$
\end{prop}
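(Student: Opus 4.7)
First, observe that $\UD(a,b,\delta)$ decomposes into $d$ independent subnetworks, since every reaction in Construction~\ref{constr:range} involves a single species. Consequently any limit distribution factorizes as $\pi^\delta(x)=\prod_{i=1}^d\pi^\delta_i(x(i))$, and the target factors as $q(x)=\prod_{i=1}^d q_i(x(i))$, where $q_i$ is uniform on $[a(i),b(i)]$ with length $L_i\doteq b(i)-a(i)+1$. My plan is to bound $\|\pi^\delta_i-q_i\|_\infty$ for each coordinate and then combine the marginals. For robustness, when $b(i)=0$ no birth reaction is present and the chain is driven by $V_i\to 0$ and $2V_i\to 0$ to the absorbing state $0$, giving $\pi^\delta_i=\delta_0=q_i$ exactly. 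When $b(i)\geq 1$, the chain is irreducible on $\{a(i),a(i)+1,\ldots\}$ (one walks up via the birth reaction and returns by the jump from $b(i)+1$) and positive recurrent via the Foster--Lyapunov function $V(n)=n$, whose drift $\mathcal{L}V(n)=1-L_iJ_n\mathbf{1}_{n\geq b(i)+1}$ with $J_n\doteq\frac{n!}{(n-b(i)-1)!}(2d/\delta)$ becomes arbitrarily negative outside any finite window. A unique stationary distribution, equal to the limit distribution, therefore exists.

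The main step is the per-coordinate estimate for $b(i)\geq 1$. Set $r=2d/\delta$; multiplying the balance equation at state $n$ by $n$, summing over $n$, and rearranging (convergence is guaranteed by the polynomial tails of $\pi^\delta_i(n)J_n$ that come with positive recurrence) yields the global identity
\[\sum_{n\geq b(i)+1}\pi^\delta_i(n)\,J_n=\frac{1}{L_i}.\]
Nonnegativity of every summand gives $\pi^\delta_i(n)\leq 1/(L_iJ_n)$ for $n>b(i)$, and combining with the classical identity $\sum_{k\geq 1}\frac{(k-1)!}{(b(i)+k)!}=\frac{1}{b(i)\,b(i)!}$ produces
\[P_i\doteq\sum_{n>b(i)}\pi^\delta_i(n)\leq\frac{1}{L_ir\,b(i)\,b(i)!}\leq\frac{1}{L_ir}.\]
On $[a(i),b(i)]$ the local balance $\pi^\delta_i(n)=\pi^\delta_i(n-1)+\pi^\delta_i(n+L_i)J_{n+L_i}$ makes $\pi^\delta_i$ nondecreasing, and telescoping gives $\pi^\delta_i(b(i))-\pi^\delta_i(a(i))=\sum_{m=b(i)+2}^{b(i)+L_i}\pi^\delta_i(m)J_m$. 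This is at most $\sum_{m>b(i)+1}\pi^\delta_i(m)J_m=1/L_i-\pi^\delta_i(a(i))$ by the global identity together with the balance $\pi^\delta_i(a(i))=\pi^\delta_i(b(i)+1)(b(i)+1)!r$ at the lower endpoint, whence $\pi^\delta_i(b(i))\leq 1/L_i$. Monotonicity then extends this to $\pi^\delta_i(n)\leq 1/L_i=q_i(n)$ throughout $[a(i),b(i)]$, so $\sum_{n\in[a(i),b(i)]}(q_i(n)-\pi^\delta_i(n))=P_i$ with all summands nonnegative, and $\|\pi^\delta_i-q_i\|_\infty\leq P_i$.

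Finally, telescoping the product gives $\prod_i\pi^\delta_i-\prod_iq_i=\sum_j\bigl(\prod_{i<j}\pi^\delta_i\bigr)(\pi^\delta_j-q_j)\bigl(\prod_{i>j}q_i\bigr)$; together with $\pi^\delta_i,q_i\leq 1$ and the triangle inequality this yields $\|\pi^\delta-q\|_\infty\leq\sum_i\|\pi^\delta_i-q_i\|_\infty\leq\sum_i 1/(L_ir)\leq d/r=\delta/2$. Dividing by $\delta$ and letting $\delta\to 0$ delivers $\limsup_{\delta\to 0}\|\pi^\delta-q\|_\infty/\delta\leq 1/2\leq 1$, as claimed. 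The main obstacle lies in the second paragraph: deriving the global drift identity cleanly and, more critically, obtaining the pointwise upper bound $\pi^\delta_i\leq 1/L_i$ on $[a(i),b(i)]$ without access to a closed-form stationary distribution, which is achieved by combining the monotonicity arising from the birth-only dynamics inside the support interval with the balance at its lower endpoint.
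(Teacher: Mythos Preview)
Your proof is correct and takes a genuinely different route from the paper's. The paper establishes the proposition through a chain of appendix lemmas: it works in the generalized setting $\UDbis(a,b,\kappa)$ with arbitrary rate constants, derives a hitting-time bound for the interval $[a(i),b(i)]$ (Lemma~\ref{lem1}), uses first-return-time analysis to control $1/r_i - \pi^{\kappa,i}(b(i))$ (Lemma~\ref{lem:pi_beta_converges}), separately bounds each $\pi^{\kappa,i}(b(i)+w)$ via a geometric hitting argument (Lemma~\ref{lem:bound_on_external_pi}), assembles these into a per-coordinate estimate (Theorem~\ref{thmskdjjgnejrrhgergejglkjblkrj}), and only then expands the product over coordinates. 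Your argument avoids all hitting-time computations by exploiting two purely stationary facts: the drift identity $\sum_{n>b(i)}\pi^\delta_i(n)J_n=1/L_i$, and the monotonicity-plus-endpoint-balance trick that forces $\pi^\delta_i(b(i))\leq 1/L_i$ directly. This is shorter and more elementary, and it actually delivers the non-asymptotic bound $\|\pi^\delta-q\|_\infty\leq\delta/2$ valid for every $\delta>0$, whereas the paper's estimate carries an $O(\delta^2)$ remainder. What the paper's approach buys is finer state-by-state information (explicit control of each $\pi^{\kappa,i}(b(i)+w)$) and estimates uniform in the general rate constants $\kappa^i_1,\kappa^i_2$, which it needs later in the appendix. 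One small remark: your justification ``polynomial tails of $\pi^\delta_i(n)J_n$ that come with positive recurrence'' for the drift identity is a bit loose; the identity can be obtained without any moment condition by summing the cut-balance relation $\pi^\delta_i(N)=\sum_{m=N+1}^{N+L_i}\pi^\delta_i(m)J_m$ over $N\geq a(i)$.
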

 A proof of the proposition is given in Section~\ref{dsflksdfsdlfkjsdfsdlkfjsldjjflsdjflksdjfl} in the Appendix, together with a sharper estimate on the distance between $\pi^\delta$ and $q$.  

The second, and probably more important, generalization we deal with in this section is the following. In Construction~\ref{constr:mix} we combine different robust CRNs whose limit distributions are close to point mass distributions to obtain a new robust CRN whose limit distribution is close to a mixture of point mass distributions. 
In general, given a finite number of robust CRNs with limit distributions $\pi_i$, we want to be able to design a new robust CRN that combines them, and whose limit distribution is arbitrarily close to the mixture of the distributions $\pi_i$. This can be accomplished under some general conditions, and the precise result is stated in Theorem~\ref{thm:general}. The assumptions of Theorem~\ref{thm:general} have a slightly more technical nature than those of Theorem~\ref{thm:mixing}, which is a particular case.
Note that by the known theory on detailed balanced CRNs, Lemma~\ref{lem:pmd}, and Proposition~\ref{prop:uniform}, it follows that robust CRNs whose limit distributions are arbitrarily close to a mixture of Poisson distributions, point mass distributions, and uniform distributions are readily available, and involve less species and reactions than those of Construction~\ref{constr:mix}. The precise statement of the result is given in Theorem~\ref{thm:final}.


Before stating Theorem~\ref{thm:general}, we introduce a new construction (which is a generalization of Construction~\ref{constr:mix}) and necessary concepts from the theory of stochastic processes.

\begin{cons}[Mixing network]\label{constr:general}
Let $\Fam=\{\Net_1, \dots, \Net_m\}$ be a finite ordered set of $m$ CRNs with the same set of species $\{V_1,\dots,V_d\}$. We denote by $\Rxn_i$ the set of reactions of $\Net_i$, and for each reaction $y\to y'\in \Rxn_i$ we denote by $\kappa^i_{y\to y'}$ the corresponding rate constant. Let $\zeta\in\Rel^m_{>0}$ be such that  $\sum_{i=1}^m \zeta(i)=1$, and let $\delta>0$. Define the CRN $\Mix(\Fam,\zeta,\delta)=(\Spc,\Cmp,\Rxn,\kappa)$  as follows. 
\begin{itemize}
    \item The set of species is $\Spc=\{V_1,\dots,V_d,H_1,\ldots,H_m\}$
    \item The sets of complexes, reactions, and rate constants are given by the reaction diagram described by:
\begin{align*}
0 \xrightleftharpoons[\delta]{\delta^2 \zeta(i)} H_i,&\quad \text{ for } i\in\{1,\ldots,m\},\\
H_i + y\xrightarrow{\kappa^i_{y\to y'}} H_i + y',&\quad \text{ for } i\in\{1,\ldots,m\},\  y\to y'\in \Rxn_i.
\end{align*} 
\hfill $\triangle$
\end{itemize}
\end{cons}

\begin{figure}
    \centering
    \raisebox{-0.5\height}{\includegraphics[scale=0.6]{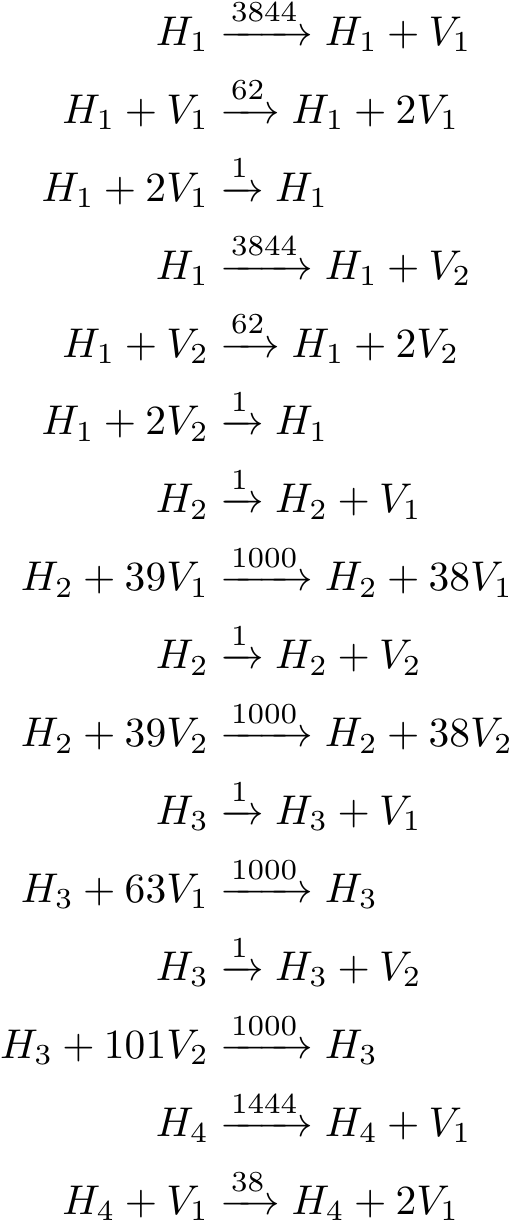}}
    \hspace{0.2in}
    \raisebox{-0.5\height}{\includegraphics[scale=0.6]{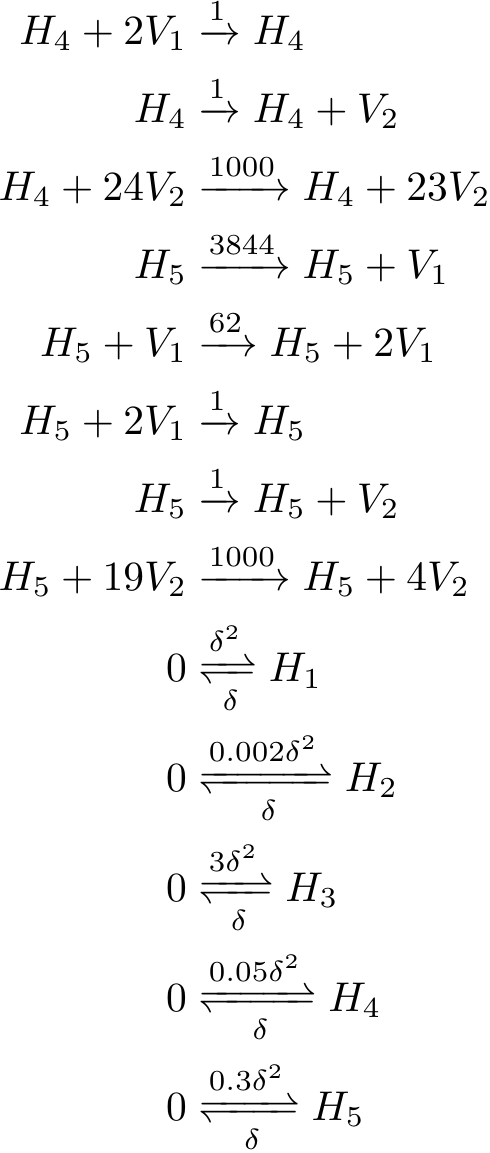}}
    \hspace{0.2in}
    \raisebox{-0.5\height}{\includegraphics[width=0.5\textwidth]{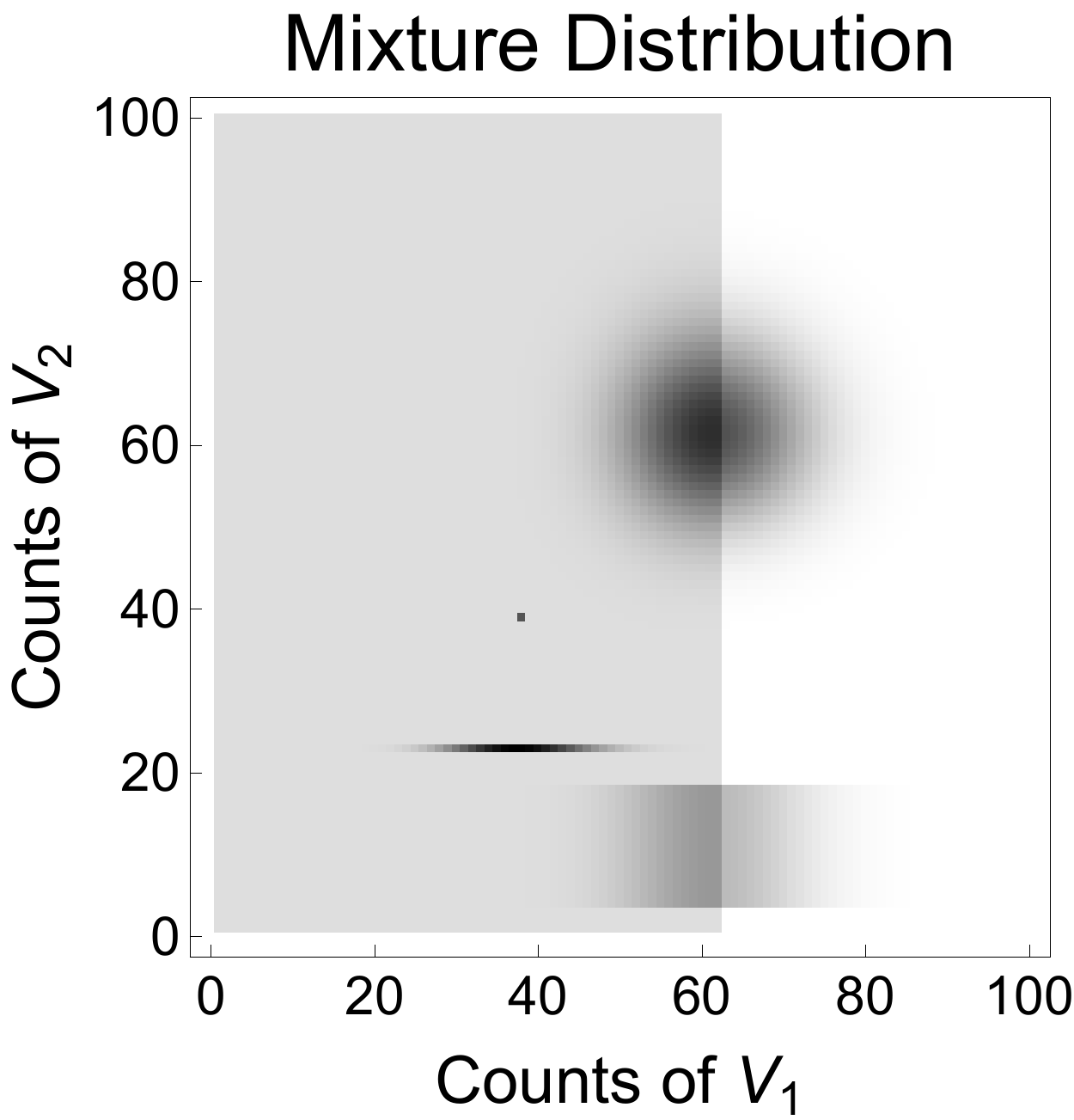}}
    \caption{Example of general mixing. The reaction diagram on the two left columns is an example of general mixing as given by Construction \ref{constr:general} with its target distribution on the right. Notice that whereas the CRN with limit distribution shown in the third panel of Figure \ref{fig:robust} has 6 reactions per pixel, giving a total of $6\times 50\times 50=1.5\times 10^4$ reactions, the above distribution is generated using only 34 reactions despite the dimension of its support being larger, namely $100\times 100$. In general, the size of the support of a distribution is not an indication of its complexity in terms of the CRNs that generate them. }
    \label{fig:mixing}
\end{figure}

\begin{defn}
 Consider a robust CRN with $d$ species, and let $\varepsilon>0$. We define the \emph{mixing time} at level $\varepsilon$ to be the quantity
 $$\tau^\varepsilon=\inf\left\{t\geq0\,:\,\sup_{x_0\in\Int^d_{\geq0}}\|P(\cdot,s|x_0)-\pi\|_\infty<\varepsilon\quad\forall s\geq t\right\}.$$
\end{defn}

The CRNs $\PMD(x,\delta)$ and $\UD(a,b,\delta)$ have finite mixing times $\tau^\varepsilon$, for any $\delta,\varepsilon>0$. This is proven in Lemma~\ref{eherhieqjfeqjfeqjfleqlf} in Section~\ref{sec:rangebis} of the Appendix.

\begin{defn}
 We say that a CRN is \emph{explosive} if there exists an initial condition $x_0$ such that for a finite time $t>0$
 $$P\left(\left.\sup_{0\leq s\leq t}\|X(s)\|_\infty=\infty \right| x_0\right)>0.$$
 We say that the CRN is \emph{non-explosive} otherwise.
\end{defn}

\begin{thm}\label{thm:general}
 Let $\Fam=\{\Net_1, \dots, \Net_m\}$ be a finite ordered set of $m$ CRNs with the same set of species $\Vis=\{V_1,\dots,V_d\}$. Assume that each CRN $\Net_i$ is robust, denote by $\pi_i$ its limit distribution and by $\tau^\varepsilon_i$ its mixing time at level $\varepsilon>0$. Let $\zeta\in\Rel^m_{>0}$ be such that $\sum_{i=1}^m \zeta(i)=1$, and assume that for every $\varepsilon>0$
 $$\max_{1\leq i\leq m}\tau^\varepsilon_i<\infty.$$
 Moreover, assume that for every $\delta>0$ the CRN $\Mix(\Fam, \zeta, \delta)$ is non-explosive. Then, for every $\delta>0$ the CRN $\Mix(\Fam, \zeta, \delta)$ is robust, and, if we denote by $\pi^\delta$ the limit distribution of $\Mix(\Fam, \zeta, \delta)$, we have
 $$\lim_{\delta\to 0}\left\|\pi^\delta_\Vis-\sum_{i=1}^m \zeta(i)\pi_i\right\|_{\infty}=0.$$
\end{thm}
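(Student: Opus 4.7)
The plan is to prove Theorem~\ref{thm:general} in two stages: (i) establishing robustness of $\Mix(\Fam,\zeta,\delta)$ for every $\delta>0$, and (ii) showing convergence of the marginal $\pi^\delta_\Vis$ to $\sum_{i=1}^m \zeta(i)\pi_i$ as $\delta\to 0$.

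\emph{Robustness.} Under the standing non-explosiveness assumption I would show that the joint CTMC on $\Int_{\geq0}^{d+m}$ admits a unique stationary distribution that is the limit from every initial condition. The reactions $H_i\to 0$ allow the catalyst component to be cleared from any state, and the reactions $0\to H_i$ then allow any single $H_k$ to be created, after which the visible species evolve under $\Net_k$. Since each $\Net_i$ is robust, its $V$-chain reaches the support of $\pi_i$ from any initial $V$-state, so there is a unique closed communicating class reachable from every state. Positive recurrence follows from a Foster--Lyapunov argument: the $H$-marginal decouples from $V$ and has very light tails (see below), while the $V$-component is controlled by the fact that each $\pi_i$ is a proper probability distribution.

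\emph{Convergence via time-scale separation.} The key observation is that the $H$-dynamics do not depend on $V$: each $H_i$ is a birth--death chain with birth rate $\delta^2\zeta(i)$ and per-molecule death rate $\delta$, hence has Poisson$(\delta\zeta(i))$ as its stationary law. Under $\pi^\delta$ the total catalyst count is therefore Poisson$(\delta)$, so
\[
\pi^\delta(H=0)=e^{-\delta},\qquad \pi^\delta(H=e_k)=\delta\zeta(k)e^{-\delta},\qquad \pi^\delta\Big(\sum_i H_i\geq 2\Big)=\Ord(\delta^2).
\]
When a single $H_k$ is present, the visible species evolve exactly under $\Net_k$, and this state persists for an exponential duration of mean $1/\delta$, which eventually dominates the fixed mixing time $\tau^\varepsilon_k$. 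I would combine these facts by (a) decomposing $\pi^\delta_\Vis$ as a sum over the $H$-level, isolating the contribution from $H=e_k$; (b) using the strong Markov property at the successive returns of the process to $H=0$ to identify independent catalyst episodes, and arguing that in stationarity, conditional on the current episode having catalyst $H_k$ active for time at least $\tau^\varepsilon_k$, the law of $V$ is within $\varepsilon$ of $\pi_k$; (c) bounding the mass contributed by $\sum_i H_i\geq 2$ and by the ``not yet mixed'' events as $\Ord(\delta+\varepsilon)$.

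\emph{Main obstacle.} The hard part is making the time-scale separation rigorous, because the visible process is not Markov in isolation, and the $V$-state at the instant a catalyst $H_k$ is born is whatever was carried over from earlier rather than being distributed according to $\pi_k$. The natural remedy is a coupling on each catalyst episode between the actual $V$-chain and a copy started from $\pi_k$ at the birth instant: the coupling discrepancy is bounded by $\varepsilon$ once the episode has run for time $\tau^\varepsilon_k$, and the probability that the episode is shorter than $\tau^\varepsilon_k$ is bounded by $1-e^{-\delta\tau^\varepsilon_k}=\Ord(\delta)$. Choosing $\varepsilon$ arbitrarily small first and then sending $\delta\to 0$ yields the claimed convergence, where the non-explosiveness assumption is essential in guaranteeing that the coupling and the stationary $V$-marginal are well-defined throughout.
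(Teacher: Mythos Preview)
Your overall strategy---time-scale separation, catalyst ``episodes'' delimited by returns of $H$ to $0$, and mixing of the $V$-chain within an episode---is the same skeleton the paper uses. Two concrete points need repair.

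\textbf{Robustness.} Your appeal to a Foster--Lyapunov argument is a gap. Knowing only that each $\pi_i$ is a probability distribution does not hand you a Lyapunov function for the $V$-component, and when several $H_i$ are simultaneously present the $V$-dynamics are a superposition of the $\Net_i$ for which you have no drift information whatsoever; when $H=0$ the $V$-component is frozen, so there is no drift there either. The paper does not build a Lyapunov function. Instead it proves $E[\sigma^\delta(v,0)\mid X^\delta(0)=(v_0,h_0)]<\infty$ directly, using the finite mixing-time hypothesis as the key input: the $H$-chain (detailed balanced, positive recurrent) visits $e_i$ in finite expected time; once there, the exponential holding time exceeds $\tau_i^{\pi_i(v)/2}$ with positive probability, and on that event the $V$-chain hits $v$ with probability at least $\pi_i(v)/2$ regardless of its current state. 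This gives a geometric bound on the number of $H=e_i$ visits before $(v,0)$ is reached. Your sketch does not use $\max_i\tau_i^\varepsilon<\infty$ at this stage, but it is essential.

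\textbf{Decomposition in step (a).} Writing $\pi^\delta_\Vis$ as a sum over the $H$-level and ``isolating the contribution from $H=e_k$'' does not give the weights $\zeta(k)$: the event $H=e_k$ has stationary mass $\delta\zeta(k)e^{-\delta}=\Ord(\delta)$, while $H=0$ carries mass $e^{-\delta}\to 1$. The object you must analyze is $\pi^\delta(\cdot,0)$ (equivalently the law of $V$ conditional on $H=0$), and the paper first shows $0\le \pi^\delta_\Vis(v)-\pi^\delta(v,0)\le 1-e^{-\delta}$. It then studies the occupation time $Y^\delta_v(t)=\int_0^t\mathbbm{1}_{\{(v,0)\}}(X^\delta(s))\,ds$ along the renewal times $t_j$ of returns of $H$ to $0$, and identifies $\lim_j Y^\delta_v(t_j)/t_j$ via the embedded chain $D^\delta(j)=X^\delta_\Vis(t_j)$. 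The $\zeta(k)$ weights enter because at each $t_j$ the next catalyst born is $H_k$ with probability $\zeta(k)$, independently of $V$. Your step (b) and your ``main obstacle'' paragraph are essentially the paper's argument for this embedded chain: the paper squeezes the one-step transition probabilities of $D^\delta$ between explicit bounds $b^\delta(v)$ and $B^\delta(v)$ that both tend to $\sum_k\zeta(k)\pi_k(v)$ as $\delta\to 0$, which is exactly the coupling-within-an-episode estimate you describe. So once you redirect step~(a) from $H=e_k$ to $H=0$, your convergence argument lines up with the paper's.
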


In order to choose $\delta$ such that the distance
$$\left\|\pi^\delta_\Vis-\sum_{i=1}^m \zeta(i)\pi_i\right\|_{\infty}$$
is smaller than a given quantity, it is important to have upper bounds on the mixing times $\tau^\varepsilon_i$. In Appendix~\ref{sec:bounds}, such bounds are developed for the CRN $\PMD(x,\delta)$. We have seen in Theorem~\ref{thm:mixing} (which is a consequence of Theorem~\ref{thm:general}) how a family of point mass networks can be used to construct a robust CRN whose limit distribution is arbitrarily close to a given distribution $q$.

However, the application of Theorem~\ref{thm:general} does not need to be limited to point mass networks. Before stating the next result, which is more general than Theorem~\ref{thm:mixing}, we introduce a choice of CRN constructions with a product-form Poisson as limit distribution. Note that, due to \cite{ACK:product}, many different choices are possible. Here we choose one with a finite mixing time, so that Theorem~\ref{thm:general} can be applied.

\begin{cons}(Product-form Poisson Network)\label{constr:poisson}
 Let $c\in\Rel_{>0}^d$. Define the CRN $\PPN(c)=(\Spc, \Cmp, \Rxn, \kappa)$ as follows.
 \begin{itemize}
     \item The set of species is $\Spc=\{V_1,\dots,V_d\}$.
     \item The sets of complexes, reactions, and rate constants are given by the reaction diagram described by: 
     \begin{align*}
0 \xrightarrow{c(i)^2} V_i &\quad \text { for } i\in\{1,\ldots,d\}\\
V_i \xrightarrow{c(i)} 2V_i &\quad \text { for } i\in\{1,\ldots,d\},\\
2V_i \xrightarrow{1} 0 &\quad \text { for } i\in\{1,\ldots,d\}.
 \end{align*}
 \end{itemize}
 \hfill $\triangle$
\end{cons}

\begin{thm}\label{thm:final}
Let $\{\pi_1, \dots, \pi_m\}$ be a family of distributions on $\Int_{\geq0}^d$ such that there is a partition $\{I_1, I_2, I_3\}$ of $\{1,\dots,m\}$ satisfying the following:
\begin{itemize}
    \item for all $i\in I_1$, $\pi_i$ is a point mass distribution at some $x_i\in\Int_{\geq0}^d$;
    \item for all $i\in I_2$, $\pi_i$ is a uniform distribution over $[a(i),b(i)]$ for some $a(i)\leq b(i)\in\Int_{\geq0}^d$;
    \item for all $i\in I_3$, $\pi_i$ is a product-form Poisson distribution with mean $c_i\in\Rel_{>0}^d$.
\end{itemize}
Let $\Fam^\delta=\{\Net^\delta_1, \dots, \Net^\delta_m\}$ be a family of reaction networks with common set of species $\Vis=\{V_1,\dots,V_d\}$, such that
\begin{itemize}
    \item for all $i\in I_1$, $\Net^\delta_i=\PMD(x_i,\delta)$;
    \item for all $i\in I_2$, $\Net^\delta_i=\UD(a(i),b(i),\delta)$;
    \item for all $i\in I_3$, $\Net^\delta_i=\Net_i=\PPN(c_i)$.
\end{itemize}
Let $\zeta\in\Rel^m_{>0}$ with $\sum_{i=1}^m\zeta(i)=1$. Then, for any $\delta>0$ the CRN $\Mix(\Fam^\delta, \zeta, \delta)$ is robust. Moreover, if $\pi^\delta$ denotes its limit distribution, then
$$\lim_{\delta\to 0}\left\|\pi^\delta_\Vis-\sum_{i=1}^m \zeta(i)\pi_i\right\|_\infty=0.$$

\end{thm}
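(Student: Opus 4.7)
The plan is to apply Theorem~\ref{thm:general} to the family $\Fam^\delta$ at each fixed $\delta>0$ and then combine the resulting approximation with a triangle inequality that separately controls the error $\|\pi^\delta_i-\pi_i\|_\infty$ coming from $i\in I_1\cup I_2$, where $\pi^\delta_i$ denotes the limit distribution of $\Net^\delta_i$.

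First I would verify the hypotheses of Theorem~\ref{thm:general} for $\Fam^\delta$. Robustness of each $\Net^\delta_i$ follows from Lemma~\ref{lem:pmd} for $i\in I_1$, from Proposition~\ref{prop:uniform} for $i\in I_2$, and from the complex-balanced theory invoked after Construction~\ref{constr:poisson} (together with positive recurrence of the one-dimensional chain whose destruction propensity is quadratic in $n$) for $i\in I_3$. Finiteness of the mixing times $\tau^\varepsilon_i$ is provided by Lemma~\ref{eherhieqjfeqjfeqjfleqlf} for $i\in I_1\cup I_2$, and by standard geometric-ergodicity arguments applied to $\PPN(c_i)$ for $i\in I_3$. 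Non-explosion of $\Mix(\Fam^\delta,\zeta,\delta)$ is verified directly: the total catalyst count $\sum_i H_i$ is stochastically dominated by an immigration--death chain, and conditional on any catalyst state the $V_j$-dynamics contain a destruction reaction (either of the form $(b(i)+1)V_j\to a(i)V_j$, $({x_i(j)}+1)V_j\to{x_i(j)}V_j$, or $2V_j\to 0$) that prevents blow-up. Applying Theorem~\ref{thm:general} to the now-fixed family $\Fam^\delta$ then yields robustness of $\Mix(\Fam^\delta,\zeta,\delta)$, establishing the first assertion.

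For the convergence, I would use the triangle inequality
\begin{equation*}
\left\|\pi^\delta_\Vis-\sum_{i=1}^m \zeta(i)\,\pi_i\right\|_\infty \le \left\|\pi^\delta_\Vis-\sum_{i=1}^m \zeta(i)\,\pi^\delta_i\right\|_\infty + \sum_{i=1}^m \zeta(i)\,\|\pi^\delta_i-\pi_i\|_\infty.
\end{equation*}
The second sum is the easy part: its terms with $i\in I_3$ vanish identically since $\Net^\delta_i=\Net_i$ is $\delta$-independent, those with $i\in I_1$ are bounded by $\delta$ by Lemma~\ref{lem:pmd}, and those with $i\in I_2$ are $O(\delta)$ by Proposition~\ref{prop:uniform}, so the whole sum tends to $0$ as $\delta\to0$.

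The hard part is the first term, because Theorem~\ref{thm:general} is stated for a \emph{fixed} inner family with the mixing parameter going to $0$, whereas here the inner family $\Fam^\delta$ and the mixing parameter both vary together with $\delta$. The anticipated resolution is that the proof of Theorem~\ref{thm:general} produces a quantitative bound that depends on the inner family only through the uniform mixing-time quantity $\sup_i \tau^\varepsilon_i$. For $i\in I_1\cup I_2$, the rate constants $2d/\delta$ accelerate the inner dynamics as $\delta$ decreases, so the corresponding mixing times remain bounded (in fact decrease) uniformly in $\delta$; for $i\in I_3$ the CRN does not depend on $\delta$ at all. Consequently $\sup_{0<\delta\le 1,\,i}\tau^\varepsilon_i<\infty$ for every $\varepsilon>0$, which is exactly what the proof of Theorem~\ref{thm:general} requires, and this forces the first term to tend to $0$ as $\delta\to 0$. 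Combining the two estimates gives the stated convergence.
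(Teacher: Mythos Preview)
Your proposal follows the same strategy as the paper: verify the hypotheses of Theorem~\ref{thm:general} for $\Fam^\delta$, invoke it, and close with the triangle inequality using the bounds on $\|\pi^\delta_i-\pi_i\|_\infty$ supplied by Lemma~\ref{lem:pmd}, Proposition~\ref{prop:uniform}, and the $\delta$-independence of $\PPN(c_i)$. Two points of comparison are worth recording. For non-explosion of $\Mix(\Fam^\delta,\zeta,\delta)$ the paper does not rely on an informal ``destruction prevents blow-up'' heuristic; instead it fixes a catalyst configuration $h$, notes that the $V_j$-coordinate then has generator $A^{\delta,h}=\sum_i h(i)A^{\delta,i}$, shows that $L(v(j))=v(j)$ is a super-Lyapunov function for this generator by combining Lemmas~\ref{lem:superrangebis} and~\ref{lem:superppn}, and concludes via Theorem~\ref{thm:superstuff} (the same super-Lyapunov machinery, through Proposition~\ref{prop:ppn}, is also what the paper uses to obtain finite mixing times for $\PPN(c_i)$, in place of your appeal to geometric ergodicity). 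Conversely, you are more explicit than the paper about the fact that Theorem~\ref{thm:general} is stated for a fixed inner family while here $\Fam^\delta$ varies with the same parameter $\delta$; the paper simply writes ``conclude the proof by Theorem~\ref{thm:general} and by triangular inequality'', and your observation that the inner mixing times $\tau^\varepsilon_i$ remain uniformly bounded as $\delta\to 0$ is exactly what is needed to justify applying the argument on the diagonal.
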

 The proofs of Theorem~\ref{thm:general} and Theorem~\ref{thm:final} are given in Sections~\ref{sec:proof_thm_general} and \ref{sec:proof_thm_final} of the Appendix, respectively.
 
As a final remark, note that in Constructions~\ref{constr:pmd}, \ref{constr:range}, and \ref{constr:poisson}, the species do no interact (i.e.\ two different species are never involved in the same reaction). As a consequence, the counts of the different species evolve independently. It is therefore straightforward to obtain CRNs whose limit distribution approximates products of Poisson distributions and uniform distributions. It is indeed sufficient to consider different constructions for different species: consider for example the distribution $q$ on $\Int_{\geq0}^2$ given by $q(v_1, v_2)=q_1(v_1)q_2(v_2)$, where $q_1$ is uniform over $\{a, a+1, \dots, b\}$ (say with $1\leq a\leq b$) and $q_2$ is Poisson with mean $c$. Then, $q$ is approximated by the limit distribution of
 \begin{gather*}
  0 \xrightarrow{1} V_1\\
(b+1)V_1 \xrightarrow{2/\delta} aV_1\\
 0 \xrightarrow{c^2} V_2\\
V_2 \xrightarrow{c} 2V_2\\
2V_2 \xrightarrow{1} 0.
 \end{gather*}
 The design of the CRN in Figure~\ref{fig:robust} follows this idea.
 
 \section*{Acknowledgment}
 The current structure of the project was conceived when the four authors participated in the BIRS 5-day Workshop ``Mathematical Analysis of Biological Interaction Networks.'' Parts of the proofs for the present work were then completed while two of the authors were taking part in the AIM SQuaRE workshop ``Dynamical properties of deterministic and stochastic models of reaction networks.'' We thank the Banff International Research Station and the American Institute of Mathematics for making this possible.
 
 Anderson gratefully acknowledges support via the Army Research Office through grant  W911NF-18-1-0324. Winfree gratefully acknowledges support via the National Science Foundation ``Expedition in Computing'' grant CCF-1317694.
 
 This material is based upon work supported by the National Science Foundation Graduate Research Fellowship Program under Grant No. DGE‐1745301. Any opinions, findings, and conclusions or recommendations expressed in this material are those of the author(s) and do not necessarily reflect the views of the National Science Foundation.
 
	\bibliographystyle{plain} 
	\bibliography{mybib}

\newpage
\begin{center}
    {\bfseries\Large Appendix}
\end{center}
\appendix
\renewcommand{\thesection}{\Alph{section}}
\section{Proofs and estimates}

The aim of this section is to provide a complete proof of Proposition~\ref{prop:uniform}, Theorem \ref{thm:mixing}, Theorem \ref{thm:general}, and Theorem~\ref{thm:final}. In particular, we will show how Theorem~\ref{thm:mixing} follows from Theorem~\ref{thm:final}, which in turn follows from Theorem~\ref{thm:general}. 

From the proof of Theorem~\ref{thm:general} it will emerge how, for a fixed $\varepsilon>0$, the choice of $\delta$ such that 
$$\left\|\pi^\delta_\Vis-\sum_{i=1}^m \zeta(i)\pi_i\right\|_{\infty}<\varepsilon$$
depends on the mixing times of the reaction networks $\Net_1, \dots, \Net_m$. This holds in the particular case of Theorem~\ref{thm:mixing} as well.
Hence, to guide the design of networks that approximate a given distribution with accuracy $\varepsilon$, we will provide in Appendix~\ref{sec:bounds} useful estimates on the mixing times of $\PMD(x,\delta)$.

\subsection{Proof of Theorem~\ref{thm:general}}\label{sec:proof_thm_general}

 Denote by $X^\delta(\cdot)$ the continuous-time Markov chain associated with $\Mix(\Fam, \zeta, \delta)$. Define $X_\Vis^\delta(\cdot)$ and $X_\Hsp^\delta(\cdot)$ as the projections of $X^\delta(\cdot)$ onto the components of the species in $\Vis=\{V_1,\dots,V_d\}$ and $\Hsp=\{H_1,\dots,H_m\}$, respectively. 
Moreover, for convenience we will write a state of $\Mix(\Fam,\zeta,\delta)$ as
$$(v,h)\in\Int_{\geq0}^{d+m},$$
with $v$ and $h$ indicating the components of the species in $\Vis$ and $\Hsp$, respectively.
%
 
 Note that $X^\delta_\Hsp(\cdot)$ is a continuous-time Markov chain itself, and is distributed according to the subnetwork of $\text{mix}(\Fam,\zeta,\delta)$, which we will denote by $\Net_\Hsp=(\Hsp, \Cmp_\Hsp,\Rxn_\Hsp)$, given by
  \begin{equation}\label{eq:subnetwork}
   0\xrightleftharpoons[\delta]{\zeta(i)\delta^2}H_i\quad \text{for }1\leq i\leq m.
  \end{equation}
 $\Net_\Hsp$ is detailed balanced with detailed balanced equilibrium $\delta\zeta$, and its state space is irreducible. It follows that $\Net_\Hsp$ admits a unique stationary distribution $\pi^\delta_{\Hsp}$ defined by
 \begin{equation}\label{eq:st_dist_H}
     \pi^\delta_{\Hsp}(h)=e^{-\delta}\prod_{i=1}^{m} \frac{(\delta\zeta(i))^{h_i}}{h_i!}
 \end{equation}
 for all $h\in\Int_{\geq0}^m$.
 
 Consider a vector $v\in\Int^d_{\geq0}$ that is  a positive recurrent  state for at least one reaction system $\Net_i$, $i \in \{1,\dots, d\}$. 
  Moreover, denote by $\sigma^\delta(v)$ the time of the first visit of $X^\delta(\cdot)$ to $(v,0)$, defined as 
 $$\sigma^\delta(v)=\inf\{t\geq 0\,:\, X^\delta(t)=(v,0)\text{ and }X^\delta(s)\neq(v,0)\text{ for some } 0\leq s<t\}.$$
 We prove that $\Mix(\Fam, \zeta, \delta)$ is robust by proving that for any $(v_0,h_0)\in\Int^{d+m}_{\geq0}$
 \begin{equation}\label{eq:finite_expectation}
  E[\sigma^\delta(v)|X^\delta(0)=(v_0,h_0)]<\infty.
 \end{equation}
 Indeed, if \eqref{eq:finite_expectation} holds, then $(v,0)$ is positive recurrent by definition, hence there exists a stationary distribution $\pi^\delta$ whose support coincides with the closed irreducible component that contains $(v,0)$, and such stationary distribution is unique. Moreover, a unique closed irreducible set exists and it is eventually reached with probability 1  from all states of $\Int_{\geq0}^{d+m}$, otherwise \eqref{eq:finite_expectation} could not hold.
 
 We need to prove \eqref{eq:finite_expectation}. 
 Let $e_i\in\Int_{\geq0}^m$ be the $i$th vector of the canonical basis, namely the vector with 1 in the $i$th entry and 0 in the other components. We have that \eqref{eq:subnetwork} is positive recurrent and irreducible, and that $X^\delta(\cdot)$ is non-explosive and therefore well-defined for all times greater than 0. Hence, given $X^\delta(0)=(v_0,h_0)$, the chain will satisfy $X_\Hsp^\delta(t)=e_i$ after a time with finite expectation. Assume $X_\Hsp^\delta(t)=e_i$, and let $u$ be the time  until the next change in copy-numbers of the species $\{H_1,\dots,H_m\}$. Then, $u$ is exponentially distributed with rate rate $\delta+\delta^2$, independently of the value of $X_\Vis^\delta(t)$. It follows that there is a positive probability $\varphi_i(\delta,v)$ that $u>\tau^{\pi_i(v)/2}$, where
 $$\tau^{\eta}=\max_{1\leq i\leq m}\tau^{\eta}_i$$
 is finite by assumption for all $\eta>0$. Moreover, by definition of mixing times,
 $$P\Big(X^\delta(t+u)=(v,0)\,\Big|\,X_\Hsp^\delta(t)=e_i, u>\tau^{\pi_i(v)/2}, X_\Vis^\delta(t)=v'\Big)\geq \frac{\pi_i(v)}{2}>0,$$
 independently of $v'\in\Int_{\geq0}^d$.
 Hence, the number of times the chain satisfies $X_\Hsp^{\delta}(t)=e_i$ before visiting $(v,0)$ is stochastically bounded from above by a geometric random variable with mean $\Big(\varphi_i(\delta, v) \pi_i(v)/2\Big)^{-1}$. Moreover, the expected time between two visits of $\eqref{eq:subnetwork}$ to $e_i$ is finite. Hence, \eqref{eq:finite_expectation} holds.
 
  For all $\delta>0$ we have
 $$\pi^\delta_\Vis(v)-\pi^\delta(v,0)=\sum_{h\in\Int_{\geq0}^{m}\setminus\{0\}}\pi^\delta(v,h),$$
 which implies
 \begin{equation}\label{347593847598347598ekjvnskjfheiufy}
 0\leq \pi^\delta_\Vis(v)-\pi^\delta(v,0)\leq \sum_{h\in\Int_{\geq0}^{m}\setminus\{0\}}\pi^\delta_\Hsp(h)=1-e^{-\delta}.
 \end{equation}
 
   For any $v\in \Int_{\geq0}^{d}$, let $Y^\delta_v(t)$ be the time $X^\delta(\cdot)$ spends in state $(v,0)\in\Int_{\geq0}^{d+m}$ by time $t$, that is
 $$Y^\delta_v(t)=\int_0^t \mathbbm{1}_{\{(v,0)\}}(X^\delta(s))ds.$$
 By classical Markov chain theory and by robustness of $\Mix(\Fam, \zeta, \delta)$ we have that
 $$\lim_{t\to\infty}\frac{Y^\delta_v(t)}{t}=\pi^\delta(v,0)$$
 almost surely, for any initial condition $X^\delta(0)$.
 
 We now assume that for any $v\in \Int_{\geq0}^{d}$ and any $\varepsilon>0$, there exists $\delta^{\varepsilon,v}$ such that if $\delta\leq \delta^{\varepsilon,v}$ then
 \begin{equation}\label{eq:convergence}
  \left|\lim_{t\to\infty}\frac{Y^\delta_v(t)}{t}-\sum_{i=1}^m \zeta(i)\pi_i(v)\right|<\frac{3}{4}\varepsilon
 \end{equation}
 almost surely, independently on the initial condition $X^\delta(0)$. 
 For $\delta$ small enough, both $\delta\leq\delta^{\varepsilon,v}$ and $1-e^{-\delta}<\varepsilon/4$ hold. Hence, by \eqref{347593847598347598ekjvnskjfheiufy}, \eqref{eq:convergence}, and the triangular inequality
 $$\left|\pi^\delta_\Vis(v)-\sum_{i=1}^m \zeta(i)\pi_i(v)\right|<\varepsilon.$$
 For any $\varepsilon>0$, there exists a compact set $K^\varepsilon\subset\Int_{\geq0}^{d}$ such that for any $v\notin K^\varepsilon$
 $$\pi^\delta_\Vis(v)+\sum_{i=1}^m \zeta(i)\pi_i(v)< \varepsilon.$$
 Since $K^\varepsilon$ is compact, the minimum $\delta^\varepsilon=\min_{v\in K^\varepsilon} \delta^{\varepsilon, v}$ exists and is positive. 
 Hence, for all $v\in\Int_{\geq0}^{d}$ and for all $\delta$ small enough such that $\delta\leq\delta^{\varepsilon}$ and $1-e^{-\delta}<\varepsilon/4$, we have
 $$\left|\pi^\delta_\Vis(v)-\sum_{i=1}^m \zeta(i)\pi_i(v)\right|<\begin{cases}
                                                                             \varepsilon & \text{if }v\in K^{\varepsilon}\\
                                                                             \pi^\delta_\Vis(v)+\sum_{i=1}^m \zeta(i)\pi_i(v)< \varepsilon& \text{if }v\notin K^{\varepsilon}
                                                                            \end{cases}$$
 and the proof is concluded. Hence, it suffices to show \eqref{eq:convergence}.

  Let $t_0=0$ and define recursively 
 $$t_j=\inf\{t\geq t_{j-1}\,:\, X_\Hsp^\delta(t)=0\text{ and }X_\Hsp^\delta(s)\neq0\text{ for some } t_{j-1}<s<t\}$$
 for $j\geq1$. That is, $t_j$ with $j\geq 1$ is the time of the $j$th visit to a state with no molecules of species $H_1, \dots, H_m$.
 
 For any $j\geq1$, let $s_j$ denote the holding time in the state with no molecules of $\{H_1,\dots,H_m\}$, measured from time $t_j$. Then, $s_j$ is exponentially distributed with rate
 $$\sum_{y\to y'\in\Rxn_\Hsp}\lambda_{y\to y'}(0)=\sum_{i=1}^m\delta^2\zeta(i)=\delta^2.$$
It follows from classical renewal theory and from \eqref{eq:st_dist_H} that for any $j\geq1$ 
 $$E[t_{j+1}-t_{j}]=\frac{1}{\pi^\delta_\Hsp(0)\sum_{y\to y'\in\Rxn_\Hsp}\lambda_{y\to y'}(0)}=\frac{e^{\delta}}{\delta^2}.$$
 It follows that, with probability 1, $\lim_{j\to\infty}t_j=\infty$. This in turn implies that almost surely
 $$\lim_{t\to\infty}\frac{Y^\delta_v(t)}{t}=\lim_{j\to\infty}\frac{Y^\delta_v(t_j)}{t_j}.$$

 For all $j\geq1$, independently of the value of $X_\Vis^\delta(t_j)$, a molecule of $H_i$ is produced at time $t_j+s_j$ with probability $\zeta(i)$. Let $u_j$ denote the molecule lifetime. Note that with probability
 $$\frac{\delta}{\delta + \delta^2}=\frac{1}{1+\delta}$$
 the molecule of $H_i$ is degraded before another molecule of a species in $\Hsp$ is produced. For convenience, denote this event by $A_j$. Given that $A_j$ occurs, $u_j$ is the minimum between the degradation of the $H_i$ molecule (exponentially distributed with rate $\delta$) and the time until the production of another molecule of a species in $\Hsp$ (exponentially distributed with rate $\delta^2$). Hence, given that $A_j$ occurs, $u_j$ is exponentially distributed with rate $\delta + \delta^2$. It follows that
 $$P(u_j>\tau^{\varepsilon/2}|A_j)=e^{-(\delta + \delta^2)\tau^{\varepsilon/2}}.$$
 Given that $u_j>\tau^{\varepsilon/2}$, all the reactions of the system $\Net_i$ can take place for a time longer than $\tau^{\varepsilon/2}$, and these are the only reactions that can occur. Thus, by the definition of mixing times,
 $$\pi_i(v)-\frac{\varepsilon}{2}\leq P\Big(X_\Vis^\delta(t_{j+1})=v\,\Big|\,X_\Hsp^\delta(t_j+s_j)=H_i, A_j, u_j>\tau^{\varepsilon/2}, X_\Vis^\delta(t_j)=v'\Big)\leq \pi_i(v)+\frac{\varepsilon}{2}$$
 for all $v, v'\in\Int_{\geq0}^d$. Note that the bounds do not depend on $X_\Vis^\delta(t_j)=v'$. In conclusion, by conditioning and by using the probabilities of the conditioning events calculated above, we obtain
 \begin{align}\label{lower_bound}
  P\Big(X_\Vis^\delta(t_{j+1})=v|X_\Vis^\delta(t_j)=v'\Big)\geq& \sum_{i=1}^m\left(\pi_i(v)-\frac{\varepsilon}{2}\right)\zeta(i)\frac{1}{1+\delta}e^{-(\delta + \delta^2)\tau^{\varepsilon/2}}\notag\\
  \doteq& b^\delta(v)\\
  \label{upper_bound}
  P\Big(X_\Vis^\delta(t_{j+1})=v|X_\Vis^\delta(t_j)=v'\Big)\leq& \sum_{i=1}^m\left(\pi_i(v)+\frac{\varepsilon}{2}\right)\zeta(i)\frac{1}{1+\delta}e^{-(\delta + \delta^2)\tau^{\varepsilon/2}}\notag\\
  &+\frac{\delta}{1+\delta}+\frac{1}{1+\delta}\left(1-e^{-(\delta + \delta^2)\tau^{\varepsilon/2}}\right)\notag\\
  \doteq& B^\delta(v).
 \end{align}

  The sequence $D^\delta(j)=X_\Vis^\delta(t_j)$ for $j\in\Int_{\geq0}$ defines a discrete time Markov chain. Since $X^\delta(\cdot)$ is robust, $D^\delta(\cdot)$ has a unique closed irreducible set $\Upsilon$, and for any $v'\in\Int^d_{\geq0}$ we have
  $$\lim_{j\to\infty}P(D^\delta(j)\in\Upsilon\,|\,D^\delta(0)=v')=1.$$
  Moreover, \eqref{lower_bound} and \eqref{upper_bound} give a lower and upper bound on the transition probabilities to a state $v$ from a state $v'$, which does not depend on $v'$. Hence, for small enough $\varepsilon$ and small enough $\delta$ such that $b^\delta(v)>0$, $D^\delta(\cdot)$ restricted to $\Upsilon$ is aperiodic and positive recurrent. It follows that there exists a limit distribution $\gamma$ such that for any $v, v'\in\Int_{\geq0}^d$
  $$\lim_{j\to\infty}P(D^\delta(j)=v\,|\,D^\delta(0)=v')=\gamma(v),$$
  independently of $v'$. Furthermore, since the argument of the limit is bounded from below by $b^\delta(v)$ and from above by $B^\delta(v)$, we have
  $$b^\delta(v)\leq \gamma(v)\leq B^\delta(v).$$
  If $W_j(v)$ is the number of visits of $D^\delta(\cdot)$ to $v$ up to step $j$ (included), we have that with probability 1 $\lim_{j\to\infty}W_j(v)=\infty$, and
  $$\lim_{j\to\infty}\frac{W_j(v)}{j}=\gamma(v).$$

  By the strong law of large numbers, we have that almost surely
  \begin{align*}
   \lim_{j\to\infty}\frac{Y^\delta_v(t_j)}{t_j}&=\lim_{j\to\infty}\frac{Y^\delta_v(t_1)}{t_j}+\lim_{j\to\infty}\frac{t_j-t_1}{t_j}\cdot\frac{Y^\delta_v(t_j)-Y^\delta_v(t_1)}{t_j-t_1}\\
   &=0+\lim_{j\to\infty}\frac{t_j-t_1}{t_j}\cdot\frac{\sum_{i=1}^{W_{j-1}(v)}s_i}{\sum_{i=2}^j (t_i-t_{i-1})}\\
   &=\lim_{j\to\infty}\frac{t_j-t_1}{t_j}\cdot\frac{\sum_{i=1}^{W_{j-1}(v)}s_i}{W_{j-1}(v)}\cdot\frac{W_{j-1}(v)}{j-1}\cdot\frac{j-1}{\sum_{i=2}^j (t_i-t_{i-1})}\\
   &=1\cdot \frac{1}{\delta^2} \cdot \gamma(x) \cdot \frac{\delta^2}{e^\delta}=\gamma(x)e^{-\delta}.
  \end{align*}

Hence,
  $$e^{-\delta}b^\delta(v)\leq \lim_{j\to\infty}\frac{Y^\delta_v(t_j)}{t_j}\leq e^{-\delta}B^\delta(v).$$
  If $\delta$ is small enough,
  \begin{align*}
   e^{-\delta}b^\delta(v)\geq \sum_{i=1}^{m}\Big(\zeta(i)\pi_i(v)\Big)-\frac{3}{4}\varepsilon\\
   e^{-\delta}B^\delta(v)\leq \sum_{i=1}^{m}\Big(\zeta(i)\pi_i(v)\Big)+\frac{3}{4}\varepsilon,
  \end{align*}
  which proves \eqref{eq:convergence} and concludes the proof. \hfill\qed


\subsection{Analysis of Constructions~\ref{constr:pmd} and \ref{constr:range}}\label{sec:rangebis}

In this section, we study the limit distributions and the mixing times of a construction that generalize slightly Constructions~\ref{constr:pmd} and \ref{constr:range}, by allowing for a more general choice of rate constants. We also give explicit bounds on the distance between the uniform distribution and the limit distribution of the construction presented here. The bounds provided here are in general sharper than those presented in the main text.

\begin{consbis}{constr:range}[General uniform distribution network]\label{constr:rangebis}
Let $d\geq1$ and let $a,b\in\Int_{\geq0}^d$ be such that $a\leq b$. Define the CRN $\UDbis(a,b,\kappa)=(\Spc,\Cmp,\Rxn,\kappa)$  as follows. 

 \begin{itemize}
     \item The set of species  is $\Spc=\{V_1,\ldots,V_d\}$.
     \item The sets of complexes, reactions, and rate constants are given by the reaction diagram described below:
  \begin{align*}
0 \xrightarrow{\kappa^i_1} V_i &\quad \text { for } i\in\{1,\ldots,d\}\text{ if }b(i)\neq0,\\
(b(i)+1)V_i \xrightarrow{\kappa^i_2} {a(i)} V_i &\quad \text { for } i\in\{1,\ldots,d\},\\
2V_i \xrightarrow{\kappa^i_3} 0 &\quad \text { for } i\in\{1,\ldots,d\}\text{ if }b(i)=0.
\end{align*}
\end{itemize}
\hfill $\triangle$
\end{consbis}

In what follows, we will denote by $X^\kappa(\cdot)$ the continuous-time Markov chain associated with the CRN $\UDbis(a,b,\kappa)$, and we will use the notation $r_i=b(i)-a(i)+1$ for all $i\in\{1,\ldots,d\}$. Further, we will denote by $X^\kappa(\cdot,i)$ the $i$th component of $X^\kappa(\cdot)$. Note that the components $X^\kappa(\cdot,i)$ are distributed as independent continuous-time Markov chains. In particular, $X^\kappa(\cdot,i)$ is distributed as the process associated with the subnetwork of $\UDbis(a,b,\kappa)$ given by the reactions changing the species $V_i$. We will denote by $A^{\kappa, i}$ the generator of $X^\kappa(\cdot,i)$ (see \cite{ethierkurtz}). We define $\sigma^{\kappa,i}$ and $\sigma^{\kappa,i}_{v(i)}$ as the hitting times of $[a(i),b(i)]$ and of $v(i)\in\Int_{\geq0}$, respectively:
\begin{align*}
    \sigma^{\kappa,i}&=\inf\{t>0:X^\kappa(t,i)\in[a(i),b(i)]\},\\
    \sigma^{\kappa,i}_{v(i)}&=\min\{t>0:X^\kappa(t,i)=v(i)\text{ and }X^\kappa(s,i)\neq v(i)\text{ for some }s<t\}.
\end{align*}
Finally, we denote by $E^{\kappa,i}_{v_0(i)}[\cdot]$ the expectation with respect to the distribution of $X^\kappa(\cdot,i)$ given $X^\kappa(0,i)=v_0(i)$

\begin{alem}\label{lem:superrangebis}
  Let $d\geq1$ and let $a,b\in\Int_{\geq0}^d$ be such that $a\leq b$. Consider the function $L(\cdot)$, defined as $L(v(i))=v(i)$ for all $v(i)\in \Int_{\geq0}$. Then, $\lim_{v(i)\to\infty}L(v(i))=\infty$. Moreover, for any $\kappa>0$ and any $i\in\{1,\dots,d\}$, there exists $\alpha_i\in\Rel_{>0}$ and a compact set $K_i\subset \Int_{\geq 0}$ such that
  $$A^{\kappa, i}L(v(i))\leq -\alpha_i L^2(v(i))\quad\text{for all }v(i)\notin K_i.$$
\end{alem}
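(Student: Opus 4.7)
The claim $\lim_{v(i)\to\infty}L(v(i))=\infty$ is immediate because $L$ is the identity, so the entire task reduces to verifying the drift inequality. My plan is to write $A^{\kappa,i}L(v(i))$ as a sum over reactions of $V_i$ of (propensity)$\times$(net change in $V_i$), which holds because $L$ is linear, and to recognise the resulting expression as a polynomial in $v(i)$ whose leading coefficient is negative and whose degree is at least $2$.

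I would handle the two regimes of Construction~\ref{constr:rangebis} separately. When $b(i)\geq 1$, only the reactions $0\to V_i$ and $(b(i)+1)V_i\to a(i)V_i$ affect $V_i$, so for $v(i)\geq b(i)+1$ one has
\[
A^{\kappa,i}L(v(i))=\kappa^i_1-r_i\kappa^i_2\,v(i)(v(i)-1)\cdots(v(i)-b(i)),
\]
a polynomial of degree $b(i)+1\geq 2$ with leading coefficient $-r_i\kappa^i_2<0$. Hence for any $\alpha_i\in(0,r_i\kappa^i_2)$ the bound $A^{\kappa,i}L(v(i))\leq -\alpha_i v(i)^2$ holds outside a finite set $K_i$. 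When $b(i)=0$ (which forces $a(i)=0$), only the reactions $V_i\to 0$ and $2V_i\to 0$ are present, and for $v(i)\geq 2$
\[
A^{\kappa,i}L(v(i))=-\kappa^i_2\,v(i)-2\kappa^i_3\,v(i)(v(i)-1),
\]
a quadratic with leading coefficient $-2\kappa^i_3<0$; the same conclusion follows with any $\alpha_i\in(0,2\kappa^i_3)$. In both cases $K_i$ can be taken to be the (finite) initial segment of $\Int_{\geq0}$ on which the comparison fails.

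The only point requiring care is the case split, because Construction~\ref{constr:rangebis} prescribes different reaction schemes according to whether $b(i)$ vanishes, and one must ensure that in each scheme the dominant negative contribution to $A^{\kappa,i}L$ is at least quadratic in $v(i)$. Once this is verified, the rest is a routine polynomial comparison at infinity and no deeper obstacle is involved. Incidentally, in the regime $b(i)\geq 1$ the dominant term is of degree strictly greater than $2$, so the target bound $-\alpha_i L^2(v(i))$ is comfortably slack there; the choice of $L^2$ on the right-hand side is simply what is needed uniformly across $i$ to feed into the later Foster--Lyapunov-type arguments.
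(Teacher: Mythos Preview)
Your argument is correct and follows essentially the same route as the paper's own proof: the same case split on $b(i)=0$ versus $b(i)\geq 1$, the same explicit computation of $A^{\kappa,i}L$ via propensity times net change, and the same conclusion by comparing leading terms of the resulting polynomial in $v(i)$. In fact your formula in the $b(i)\geq 1$ case, which carries the factor $r_i=b(i)-a(i)+1$ from the net change of the reaction $(b(i)+1)V_i\to a(i)V_i$, is more accurate than the paper's displayed expression (which drops this factor); since $r_i\geq 1$ this does not affect the inequality, but your version is the correct one.
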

\begin{proof}
 Clearly, $\lim_{v(i)\to\infty}L(v(i))=\infty$. Now assume that $b(i)=0$. Then, for all $v(i)\geq2$ we have
 $$A^{\kappa, i}L(v(i))=-\kappa^i_2v(i)-2\kappa^i_3v(i)(v(i)-1),$$
 which implies that for $v(i)$ big enough $A^{\kappa, i}L(v(i))\leq -\kappa^i_3v(i)^2=-\kappa^i_3 L^2(v(i))$, and the result holds.
 
 Assume that $b(i)\neq0$. For all $v(i)\geq b(i)+1$ we have
 $$A^{\kappa, i}L(v(i))=\kappa^i_1-\kappa^i_2\frac{v(i)!}{(v(i)-b(i)-1)!},$$
 which is smaller than or equal to $-\frac{\kappa^i_2}{2} L^2(v(i))$ for $v(i)$ large enough. The proof is then concluded.
\end{proof}

\begin{alem}\label{eherhieqjfeqjfeqjfleqlf}
 Let $d\geq1$ and let $a,b\in\Int_{\geq0}^d$ be such that $a\leq b$. For any $\kappa>0$, $\UDbis(a,b,\kappa)$ is robust and the support of its limit distribution is
 $$\Theta=\{v\in\Int_{\geq0}^d:v\geq a\quad\text{and}\quad v(i)=0\quad\text{if}\quad b(i)=0\}=\left(\bigtimes_{\substack{1\leq i\leq d\\ b(i)=0}}\{0\}\right)\times\left(\bigtimes_{\substack{1\leq i\leq d\\ b(i)\neq0}}\{v(i)\,:\,v(i)\geq a(i)\}\right).$$
 Moreover, for any $\kappa>0$ and any $\varepsilon>0$, the mixing time of $\UDbis(a,b,\kappa)$ at level $\varepsilon$ is finite.
\end{alem}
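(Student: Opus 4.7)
\bigskip
\noindent\textbf{Proof proposal.} The plan is to exploit the fact that each reaction of $\UDbis(a,b,\kappa)$ involves a single species, so the chain $X^\kappa(\cdot)$ decomposes into $d$ mutually independent one-dimensional components $X^\kappa(\cdot,i)$. It therefore suffices to prove robustness, identification of the support, and finite mixing time for each scalar subchain and then recombine by independence. For the existence and uniqueness of a stationary distribution $\pi_i$ on the relevant irreducible class, I would invoke the standard Foster--Lyapunov criterion using the function $L$ from Lemma~\ref{lem:superrangebis}: the super-linear bound $A^{\kappa,i}L \le -\alpha_i L^2$ outside $K_i$ implies in particular a linear drift $A^{\kappa,i}L \le -\alpha_i L + c_i'$, which together with countable irreducibility and the automatic aperiodicity of the continuous-time chain yields positive Harris recurrence.

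For the support identification I would argue case by case. If $b(i)=0$, the only reactions acting on $V_i$ are $V_i \to 0$ and $2V_i \to 0$, both strictly decreasing, so the subchain is absorbed at $0$ in finite expected time and $\pi_i = \delta_0$. When $b(i) \ne 0$, the closed irreducible class is $\Theta_i = \{v(i) : v(i) \ge a(i)\}$: it is closed because the only decreasing reaction $(b(i)+1)V_i \to a(i)V_i$ can fire only at counts $\ge b(i)+1$ and lands in $a(i) \in \Theta_i$; it is irreducible because any two states in $\Theta_i$ are linked by a path that walks upward via $0 \to V_i$ to some count $\ge b(i)+1$ and returns to $a(i)$ via a single firing of the degradation. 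Any initial state below $a(i)$ is pushed into $\Theta_i$ by creation, and any state above $b(i)+1$ by repeated firings of the degradation. Taking the product over components yields the desired support $\Theta$.

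The main obstacle is the finiteness of $\tau^\varepsilon$, because the definition takes a supremum over \emph{all} initial conditions in an infinite state space: the usual geometric-ergodicity bound $\|P(t,v,\cdot) - \pi_i\|_{TV} \le R(1+L(v))e^{-\beta t}$ diverges as $L(v) \to \infty$ and so cannot yield uniform mixing on its own. This is where the super-linear term $-\alpha_i L^2$ in Lemma~\ref{lem:superrangebis} is indispensable. Setting $f(t) := E^{\kappa,i}_{v_0(i)}[L(X^\kappa(t,i))]$, Dynkin's formula combined with Jensen's inequality yields
\[
 f'(t) \le -\alpha_i f(t)^2 + c_i,
\]
and an elementary ODE comparison shows that there exist $T_0 > 0$ and $M > 0$, \emph{independent of} $v_0(i)$, with $f(T_0) \le M$. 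Splitting $t = T_0 + s$ and using the Markov property together with the geometric-ergodicity bound from time $T_0$ onward then gives
\[
 \sup_{v_0(i)} \|P(\cdot, T_0 + s \mid v_0(i)) - \pi_i\|_{TV} \le R(1+M)e^{-\beta s},
\]
which decays to $0$ uniformly. Since $\|\cdot\|_\infty \le \|\cdot\|_{TV}$, each component has finite mixing time, and the telescoping estimate $|\prod_i p_i - \prod_i q_i| \le \sum_i |p_i - q_i|$ for probabilities yields $\|P - \pi\|_\infty \le \sum_{i=1}^d \|P_i - \pi_i\|_\infty$ for the product, so controlling each component at level $\varepsilon/d$ gives a finite mixing time for the full chain $\UDbis(a,b,\kappa)$.
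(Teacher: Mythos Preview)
Your proposal is correct and follows essentially the same strategy as the paper: decompose into independent one-dimensional components, identify the unique closed irreducible set for each, and use the super-Lyapunov estimate of Lemma~\ref{lem:superrangebis} to obtain a uniform-in-initial-condition bound on $E[L(X^\kappa(t,i))]$ that yields finite mixing time. The only difference is cosmetic: the paper packages the last step as Theorem~\ref{thm:superstuff} (via Theorem~\ref{thm:super}) and closes with Markov's inequality plus a finite-set argument rather than invoking a geometric-ergodicity bound, but the underlying mechanism---the ODE comparison $f'\le -\alpha f^2+c$ that makes the bound independent of the starting point---is identical to yours.
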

\begin{proof}
 In order to prove the existence of a unique limit distribution and argue that the mixing times are finite, we will make use of Foster-Lyapunov criteria discussed by Meyn and Tweedy in \cite{meyntweedie}. In particular, we will use the concept of super Lyapunov function, developed by Athreya, Kolba, and Mattingly in \cite{superLyapunov}, to which Appendix~\ref{sec:super} is devoted.
 
 Since the components $X^\kappa(\cdot,i)$ are distributed as independent continuous-time Markov chains, in order to prove robustness of $X^\kappa(\cdot)$ it is sufficient to prove it for all its components separately. The same holds for the description of the irreducible closed sets of $X^\kappa(\cdot)$, which are necessarily Cartesian products of closed and irreducible sets of the components $X^\kappa(\cdot,i)$. Finally, the mixing times of $X^\kappa(\cdot)$ at level $\varepsilon>0$ are finite for all $\varepsilon>0$, if and only if the mixing times $\tau^{\varepsilon,i}$ of $X^\kappa(\cdot,i)$ at level $\varepsilon>0$ are finite for all $\varepsilon>0$ and for all $i\in\{1,\dots,d\}$. 
 
 If $b(i)=0$, then the process $X^\kappa(\cdot,i)$ is the continuous-time Markov chain associated with the CRN with reaction diagram
 $$V_i \xrightarrow{\kappa^i_2} 0, \qquad 2V_i \xrightarrow{\kappa^i_3} 0.$$
 As such, $X^\kappa(\cdot,i)$ can only decrease. It follows that the set $\{0\}$ is closed and irreducible for $X^\kappa(\cdot,i)$. Moreover, it is the only closed and irreducible set, since the reaction $V_i\to 0$ can always take place as long as there is at least one molecule of $V_i$. 

 If $b(i)\neq0$, then the process $X^\kappa(\cdot,i)$ is the continuous-time Markov chain associated with the CRN with reaction diagram
 $$0 \xrightarrow{\kappa^i_1} V_i, \qquad (b(i)+1)V_i \xrightarrow{\kappa^i_2} a(i) V_i.$$
 Hence, the process $X^\kappa(\cdot,i)$ can always increase by 1, and decrease by $r_i$ if and only if at least $b(i)+1=r_i+a(i)$ molecules of $V_i$ are available. Hence, the set $\Theta(i)=\{v(i)\in\Int_{\geq0}\,|\, v(i)\geq a(i)\}$ is the only closed and irreducible set of $X^\kappa(\cdot,i)$.
 
 In both cases, we can conclude that $X^\kappa(\cdot,i)$ is robust by showing that a unique limit distribution exists. This, together with the fact that the mixing times $\tau^{\varepsilon, i}$ are finite, follows from Lemma~\ref{lem:superrangebis} and Theorem~\ref{thm:superstuff}.
\end{proof}

In Lemma~\ref{eherhieqjfeqjfeqjfleqlf} we proved that $X^{\kappa}(\cdot)$ is robust, which holds if and only if each process $X^{\kappa}(\cdot,i)$ is robust. We denote by $\pi^{\kappa}$ the unique limit distribution of $X^{\kappa}(\cdot)$, and by $\pi^{\kappa,i}$ the unique limit distribution of $X^{\kappa}(\cdot,i)$. Since the components of $X^{\kappa}(\cdot)$ are independent, it follows that
$$\pi^\kappa(v)=\prod_{i=1}^d \pi^{\kappa,i}(v(i))\quad\text{for all }v\in\Int_{\geq0}.$$
Hence, in order to study $\pi^\kappa$ it is sufficient to study the distributions $\pi^{\kappa,i}$, for $i\in\{1,\dots,d\}$, and this is what we will do.

\begin{alem}\label{pmd_at_0}
 Let $i\in\{1,\dots,d\}$ with $b(i)=0$. Then, $\pi^{\kappa,i}$ is the point mass distribution at 0.
\end{alem}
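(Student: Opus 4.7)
The plan is to invoke Lemma~\ref{eherhieqjfeqjfeqjfleqlf} directly and then use the fact that the support of a limit distribution must lie inside a closed irreducible set. First I would recall that when $b(i)=0$, Construction~\ref{constr:rangebis} contributes only two reactions affecting $V_i$, namely $V_i \xrightarrow{\kappa^i_2} 0$ (since $a(i)\le b(i)=0$ forces $a(i)=0$, so the reaction $(b(i)+1)V_i \to a(i)V_i$ reduces to $V_i\to 0$) and $2V_i \xrightarrow{\kappa^i_3} 0$. Both reactions strictly decrease the copy number of $V_i$, so the process $X^\kappa(\cdot,i)$ is monotonically non-increasing, and $\{0\}$ is absorbing.

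Next I would appeal to the identification of the unique closed irreducible set carried out in the proof of Lemma~\ref{eherhieqjfeqjfeqjfleqlf}: in the case $b(i)=0$, that proof already shows that $\{0\}$ is the only closed irreducible set of $X^\kappa(\cdot,i)$. Since $\pi^{\kappa,i}$ is the unique limit distribution (also established in Lemma~\ref{eherhieqjfeqjfeqjfleqlf}), standard Markov chain theory forces its support to be contained in the unique closed irreducible set, so $\pi^{\kappa,i}(\{0\}) = 1$, i.e., $\pi^{\kappa,i} = \delta_0$.

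There is really no obstacle here: the lemma is a one-line consequence of what has already been established. The only thing to be careful about is making the notation consistent. In particular, one must note that in the case $b(i)=0$ the decomposition $\pi^\kappa(v)=\prod_{i=1}^d \pi^{\kappa,i}(v(i))$ set up immediately before the lemma remains valid, so that the marginal interpretation of $\pi^{\kappa,i}$ as a distribution on $\Int_{\geq 0}$ is unambiguous, and $\delta_0$ is interpreted as the point mass at $0\in\Int_{\geq0}$.
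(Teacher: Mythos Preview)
Your proposal is correct and takes essentially the same approach as the paper: both invoke Lemma~\ref{eherhieqjfeqjfeqjfleqlf} to conclude that $X^\kappa(\cdot,i)$ is robust with $\{0\}$ as its unique closed irreducible set, and hence that the unique limit distribution is $\delta_0$. Your additional remarks (spelling out that $a(i)=0$, the monotone decrease, the product decomposition) are all correct but were already established earlier, so the paper simply omits them.
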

\begin{proof}
 The result follows from the fact that $X^{\kappa}(\cdot,i)$ is robust, and $\{0\}$ is its only closed irreducible set, as proven in Lemma~\ref{eherhieqjfeqjfeqjfleqlf}.
\end{proof}

\begin{alem}\label{lem1}
Let $i\in\{1,\dots,d\}$ with $b(i)\geq1$. Let $v_0(i) \ge b(i)+1$. 
Assume that
\begin{equation}\label{eq:constrwkewfwef}
\kappa^i_2r_i(b(i)+1)!>\kappa^i_1.
\end{equation}
Then
$$E^{\kappa,i}_{v_0(i)}[\sigma^{\kappa,i}] \le \frac{v_0(i)}{\kappa^i_2r_i(b(i)+1)!-\kappa^i_1}.$$
\end{alem}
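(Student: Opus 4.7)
The plan is to treat this as a standard Foster–Lyapunov bound on a first hitting time, using the identity $L(v(i)) = v(i)$ itself as a Lyapunov function and applying Dynkin's formula with stopping time $\sigma^{\kappa,i}$.

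First, I would analyze the dynamics of $X^\kappa(\cdot,i)$ above the target interval $[a(i),b(i)]$. Since $b(i)\geq1$ and $v_0(i)\geq b(i)+1$, the process can only jump up by $1$ (at rate $\kappa^i_1$) or down by $r_i$ (at rate $\kappa^i_2\,v(i)!/(v(i)-b(i)-1)!$), and I would check that, starting from $v(i)\geq b(i)+1$, the process remains in $\{b(i)+1, b(i)+2, \dots\}$ up until time $\sigma^{\kappa,i}$ (one verifies that from $v(i)\geq b(i)+1$, a downward jump lands in $[a(i), v(i)-1]$, and if $v(i)-r_i>b(i)$ the process stays above $b(i)$). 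Hence on $[0,\sigma^{\kappa,i})$ the falling factorial $v(i)!/(v(i)-b(i)-1)!$ is bounded below by its value at $v(i)=b(i)+1$, which is $(b(i)+1)!$.

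Next, I would compute the generator applied to $L(v(i))=v(i)$ for $v(i)\geq b(i)+1$:
\[
A^{\kappa,i}L(v(i)) \;=\; \kappa^i_1 \;-\; r_i\,\kappa^i_2\,\frac{v(i)!}{(v(i)-b(i)-1)!} \;\leq\; \kappa^i_1 - \kappa^i_2\,r_i\,(b(i)+1)!,
\]
which by assumption \eqref{eq:constrwkewfwef} is a strictly negative constant, say $-\gamma$ with $\gamma = \kappa^i_2\,r_i\,(b(i)+1)! - \kappa^i_1>0$.

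Then I would apply Dynkin's formula to $L$ with the stopping time $\sigma^{\kappa,i}\wedge t$ (this is a bounded stopping time, so standard optional stopping applies to the local martingale $L(X^\kappa(\cdot,i)) - \int_0^\cdot A^{\kappa,i}L(X^\kappa(s,i))\,ds$):
\[
E^{\kappa,i}_{v_0(i)}\!\left[L\!\left(X^\kappa(\sigma^{\kappa,i}\wedge t,i)\right)\right] \;=\; v_0(i) \;+\; E^{\kappa,i}_{v_0(i)}\!\left[\int_0^{\sigma^{\kappa,i}\wedge t} A^{\kappa,i}L(X^\kappa(s,i))\,ds\right].
\]
Because the integrand on the right is $\leq -\gamma$ on $[0,\sigma^{\kappa,i})$ by the previous step, and because $L\geq 0$, I obtain $0 \leq v_0(i) - \gamma\,E^{\kappa,i}_{v_0(i)}[\sigma^{\kappa,i}\wedge t]$, i.e.\ $E^{\kappa,i}_{v_0(i)}[\sigma^{\kappa,i}\wedge t]\leq v_0(i)/\gamma$. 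Letting $t\to\infty$ and invoking monotone convergence gives the desired inequality.

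The only delicate point is justifying the use of Dynkin's formula: $X^\kappa(\cdot,i)$ could a priori explode, so I would either cite non-explosiveness of $\UDbis(a,b,\kappa)$ (which follows from the super-Lyapunov computation of Lemma~\ref{lem:superrangebis}) or stop additionally at $\sigma^{\kappa,i}\wedge t\wedge \tau_N$ where $\tau_N = \inf\{s : X^\kappa(s,i)\geq N\}$ and then let $N\to\infty$; in either case the bound on $A^{\kappa,i}L$ is uniform above $b(i)+1$, so the argument is robust and the stated bound follows.
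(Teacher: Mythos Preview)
Your proposal is correct and follows essentially the same approach as the paper: take $L(v(i))=v(i)$, apply Dynkin's formula at the bounded stopping time $\sigma^{\kappa,i}\wedge t$, bound the generator by the negative constant $\kappa^i_1-\kappa^i_2 r_i(b(i)+1)!$ on $\{v(i)\ge b(i)+1\}$, use nonnegativity of $L$, and finish with monotone convergence. Your extra remarks on non-explosiveness and on why the process stays above $b(i)$ before $\sigma^{\kappa,i}$ are sound and make explicit points the paper leaves implicit.
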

\begin{proof}
Let $L(v(i)) = v(i)$ for all $v(i)\in\Int_{\geq0}$. By Dynkin's formula we have that for all $t>o$
\begin{align*}
E^{\kappa,i}_{v_0(i)}[L(X^\kappa(\min\{t, \sigma^{\kappa,i}\},i))] &= L(v_0(i)) + E^{\kappa,i}_{v_0(i)}\left[\int_0^{\min\{t, \sigma^{\kappa,i}\}} A^{\kappa,i}L(X^\kappa(s,i)) ds\right]\\
&\le v_0(i) + E^{\kappa,i}_{v_0(i)}\left[\int_0^{\min\{t, \sigma^{\kappa,i}\}} (\kappa^i_1 - \kappa^i_2 r_i(b(i)+1)!) ds\right]\\
&=  v_0(i) + (\kappa^i_1 - \kappa^i_2 r_i(b(i)+1)!) E^{\kappa,i}_{v_0(i)}[\min\{t, \sigma^{\kappa,i}\}].
\end{align*}
By \eqref{eq:constrwkewfwef} and since for all times $t>0$ we have $E^{\kappa,i}_{v_0(i)}[L(X^\kappa(\min\{t, \sigma^{\kappa,i}\},i))]\geq 0$, we may conclude
$$E^{\kappa,i}_{v_0(i)}[\min\{t,\sigma^{\kappa,i}\}] \le \frac{x_0(i)}{\kappa^i_2 r_i(b(i)+1)!-\kappa^i_1}.$$
We may use the monotone convergence theorem to conclude the proof.
\end{proof}

\begin{alem}\label{lem:pi_beta_converges}
Let $i\in\{1,\dots,d\}$ with $b(i)\geq1$, and assume \eqref{eq:constrwkewfwef} holds. Then,
\begin{align*}0\leq \frac1{r_i}-\pi^{\kappa,i}(b(i))&\leq \frac{1}{(r_i)^2}\cdot\frac{(\kappa^i_1)^2 }{\kappa^i_2(b(i)+1)!+\kappa^i_1}\left(\frac{b(i)+2}{\kappa^i_2r_i(b(i)+1)!-\kappa^i_1} + \frac{b(i)-a(i)}{\kappa^i_1}\right).
\end{align*}
\end{alem}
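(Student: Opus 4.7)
The plan is to express $\pi^{\kappa,i}(b(i))$ via a renewal analysis of the one-dimensional chain $X^\kappa(\cdot,i)$ and then bound its deviation from $1/r_i$ via Dynkin's formula together with Lemma~\ref{lem1}. The key observation is that inside the box $[a(i),b(i)]$ only the birth reaction can fire (the death reaction requires $v\geq b(i)+1$), so after re-entry into the box at some random state $a(i)+\bar L$ with $\bar L\in\{0,\dots,r_i-1\}$, the process visits $a(i)+\bar L,a(i)+\bar L+1,\dots,b(i)$ in order with i.i.d.\ $\mathrm{Exp}(\kappa^i_1)$ holding times before the next exit to $b(i)+1$. Treating each birth from $b(i)$ as a regeneration point, the expected cycle length equals $E^{\kappa,i}_{b(i)+1}[\sigma^{\kappa,i}]+(r_i-E[\bar L])/\kappa^i_1$ and the rate $\kappa^i_1\pi^{\kappa,i}(b(i))$ of such regenerations is its reciprocal.

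The next step is to apply Dynkin's formula to $L(v)=v$ on the excursion above starting at $b(i)+1$: each birth contributes $+1$ at rate $\kappa^i_1$ and each death contributes $-r_i$ at rate $\kappa^i_2 v!/(v-b(i)-1)!$, so (after a routine truncation argument justified by the finiteness of $E^{\kappa,i}_{b(i)+1}[\sigma^{\kappa,i}]$ from Lemma~\ref{lem1})
\[
  E^{\kappa,i}_{b(i)+1}[X(\sigma^{\kappa,i})]-(b(i)+1)=\kappa^i_1 E^{\kappa,i}_{b(i)+1}[\sigma^{\kappa,i}]-r_i E[D],
\]
where $D$ denotes the number of death events during one excursion. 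Since $E^{\kappa,i}_{b(i)+1}[X(\sigma^{\kappa,i})]=a(i)+E[\bar L]$, this rearranges to $\kappa^i_1 E^{\kappa,i}_{b(i)+1}[\sigma^{\kappa,i}]-E[\bar L]=r_i(E[D]-1)$, and substituting into the cycle formula produces the clean identity
\[
  \pi^{\kappa,i}(b(i))=\frac{1}{r_i E[D]},\qquad \text{hence}\qquad \frac{1}{r_i}-\pi^{\kappa,i}(b(i))=\frac{E[D]-1}{r_i E[D]}.
\]
The lower bound $0\leq 1/r_i-\pi^{\kappa,i}(b(i))$ is then immediate from $E[D]\geq 1$ (at least one death is needed to return to the box).

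For the upper bound I would use $(E[D]-1)/(r_i E[D])\leq(E[D]-1)/r_i$ and condition on the first event of the excursion from $b(i)+1$: with probability $p^{(1)}=\kappa^i_1/(\kappa^i_1+\kappa^i_2(b(i)+1)!)$ the first event is a birth to $b(i)+2$, otherwise it is the single death that ends the excursion. This yields $E[D]-1=p^{(1)}(E[D^{(2)}]-1)$, where $D^{(2)}$ denotes the death count for an excursion starting at $b(i)+2$. A second Dynkin computation on this sub-excursion together with the trivial bound $E^{\kappa,i}_{b(i)+2}[X(\sigma^{\kappa,i})]\geq a(i)$ gives $r_i(E[D^{(2)}]-1)\leq 1+\kappa^i_1 E^{\kappa,i}_{b(i)+2}[\sigma^{\kappa,i}]$; replacing $1$ by the cleaner upper bound $r_i-1=b(i)-a(i)$ and applying Lemma~\ref{lem1} to estimate $E^{\kappa,i}_{b(i)+2}[\sigma^{\kappa,i}]\leq(b(i)+2)/(\kappa^i_2 r_i(b(i)+1)!-\kappa^i_1)$ assembles the two-term estimate after rewriting $p^{(1)}\kappa^i_1=(\kappa^i_1)^2/(\kappa^i_2(b(i)+1)!+\kappa^i_1)$.

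The main obstacle will be establishing the cycle identity $\pi^{\kappa,i}(b(i))=1/(r_i E[D])$ rigorously: this requires correctly setting up the renewal structure, justifying Dynkin's formula for the unbounded identity function up to the hitting time $\sigma^{\kappa,i}$ (using the finite expectation from Lemma~\ref{lem1} under~\eqref{eq:constrwkewfwef}), and matching the ``deterministic rise'' picture inside the box with the induced law of $\bar L$. Once this identity is in hand, the two applications of Dynkin and Lemma~\ref{lem1} in the upper-bound step are routine bookkeeping.
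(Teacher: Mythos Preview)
Your route is genuinely different from the paper's. The paper works directly with the expected return time via two rounds of first-step analysis: it writes $\pi^{\kappa,i}(b(i))=(1/\kappa^i_1)/E^{\kappa,i}_{b(i)}[\sigma^{\kappa,i}_{b(i)}]$, expands the return time by conditioning on the first jump to $b(i)+1$ and then on the jump out of $b(i)+1$, and finally bounds $E^{\kappa,i}_{b(i)+2}[\sigma^{\kappa,i}_{b(i)}]\leq E^{\kappa,i}_{b(i)+2}[\sigma^{\kappa,i}]+(b(i)-a(i))/\kappa^i_1$ together with Lemma~\ref{lem1}. You instead package the same excursion via Dynkin's formula into the clean identity $\pi^{\kappa,i}(b(i))=1/(r_iE[D])$, which makes the lower bound $E[D]\geq 1$ immediate and reduces the upper bound to estimating $E[D]-1$. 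Both arguments use the same two ingredients in the end (condition on the first step out of $b(i)+1$; invoke Lemma~\ref{lem1} for the excursion from $b(i)+2$), so the difference is organizational, but your death-count identity is a tidy way to see why the answer scales like $1/r_i$.

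The step you single out---replacing $1$ by $r_i-1=b(i)-a(i)$---is indeed the crux. It is valid precisely when $r_i\geq 2$, and in that range your argument reproduces the stated bound exactly. For $r_i=1$ the replacement is illegitimate, and you should not try to repair it: the inequality as stated is \emph{false} when $r_i=1$. In that case the chain is a birth--death process on $\{b(i),b(i)+1,\dots\}$ and detailed balance gives $1-\pi^{\kappa,i}(b(i))\geq \kappa^i_1/(\kappa^i_1+\kappa^i_2(b(i)+1)!)$, which is first order in $\kappa^i_1/\kappa^i_2$, whereas the right-hand side of the lemma is second order. (The paper's own simplification of the first three terms to $r_i/\kappa^i_1$ is in fact only an identity when $r_i=2$; for $r_i\geq 3$ it overestimates the return time and the stated bound survives, while for $r_i=1$ it underestimates and the bound fails.) So your proof is correct on the range where the statement holds, and the obstacle you anticipated at $r_i=1$ is a defect of the lemma, not of your argument.
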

\begin{proof}
 By classical theory on continuous-time Markov chains it is known that
 $$\pi^{\kappa,i}(b(i)) = \frac{1/\kappa^i_1}{E^{\kappa,i}_{b(i)}[\sigma^{\kappa,i}_{b(i)}]},$$
 where $1/\kappa^i_1$ is the expectd holding time of $X^\kappa(\cdot,i)$ in the state $b(i)$. By conditioning on the first two steps, we have 
 \begin{align*}
	E^{\kappa,i}_{b(i)}[\sigma^{\kappa,i}_{b(i)}] &= \frac{1}{\kappa^i_1} + E^{\kappa,i}_{b(i)+1}[\sigma^{\kappa,i}_{b(i)}]\\
	&=\frac{1}{\kappa^i_1} + \frac{1}{\kappa^i_2(b(i)+1)!+\kappa^i_1} + \frac{\kappa^i_2(b(i)+1)!}{\kappa^i_2(b(i)+1)!+\kappa^i_1} E^{\kappa,i}_{a(i)}[\sigma^{\kappa,i}_{b(i)}]+\frac{\kappa^i_1}{\kappa^i_2(b(i)+1)!+\kappa^i_1} E^{\kappa,i}_{b(i)+2}[\sigma^{\kappa,i}_{b(i)}]\\
	&=\frac{1}{\kappa^i_1} + \frac{1}{\kappa^i_2(b(i)+1)!+\kappa^i_1} + \frac{\kappa^i_2(b(i)+1)!}{\kappa^i_2(b(i)+1)!+\kappa^i_1}\cdot \frac{b(i)-a(i)}{\kappa^i_1} + \frac{\kappa^i_1}{\kappa^i_2(b(i)+1)!+\kappa^i_1} E^{\kappa,i}_{b(i)+2}[\sigma^{\kappa,i}_{b(i)}]\\
	&=\frac{r_i}{\kappa^i_1} + \frac{\kappa^i_1}{\kappa^i_2(b(i)+1)!+\kappa^i_1} E^{\kappa,i}_{b(i)+2}[\sigma^{\kappa,i}_{b(i)}].
 \end{align*}
 Hence,
 \begin{align*}
     \frac{1}{r_i}-\pi^{\kappa,i}(b(i))&=\frac{1}{r_i}\left(1-\frac{r_i}{\kappa^i_1E^{\kappa,i}_{b(i)}[\sigma^{\kappa,i}_{b(i)}] }\right)=
 \frac{1}{r_i}\left(\frac{\frac{(\kappa^i_1)^2}{\kappa^i_2(b(i)+1)!+\kappa^i_1} E^{\kappa,i}_{b(i)+2}[\sigma^{\kappa,i}_{b(i)}]}{r_i+\frac{(\kappa^i_1)^2}{\kappa^i_2(b(i)+1)!+\kappa^i_1} E^{\kappa,i}_{b(i)+2}[\sigma^{\kappa,i}_{b(i)}]}\right)\\
 &=\frac{1}{(r_i)^2}\left(\frac{\frac{(\kappa^i_1)^2}{\kappa^i_2(b(i)+1)!+\kappa^i_1} E^{\kappa,i}_{b(i)+2}[\sigma^{\kappa,i}_{b(i)}]}{1+\frac{(\kappa^i_1)^2}{r_i\kappa^i_2(b(i)+1)!+r_i\kappa^i_1} E^{\kappa,i}_{b(i)+2}[\sigma^{\kappa,i}_{b(i)}]}\right)
 \leq \frac{1}{(r_i)^2}\cdot\frac{(\kappa^i_1)^2E^{\kappa,i}_{b(i)+2}[\sigma^{\kappa,i}_{b(i)}]}{\kappa^i_2(b(i)+1)!+\kappa^i_1}.
 \end{align*}
 From the first two equalities we have $\frac{1}{r_i}-\pi^{\kappa,i}(b(i))\geq0$, which is the first part of the lemma. Moreover, 
 $$
 E^{\kappa,i}_{b(i)+2}[\sigma^{\kappa,i}_{b(i)}]\le E^{\kappa,i}_{b(i)+2}[\sigma^{\kappa,i}] + \frac{b(i)-a(i)}{k^i_1},
 $$
 where the second term bounds from above the expected time to reach $b(i)$ from within the set $\{a(i),\dots,b(i)\}$. Hence, by Lemma \ref{lem1}, 
 $$\frac{1}{r_i}-\pi^{\kappa,i}(b(i))\leq \frac{1}{(r_i)^2}\cdot\frac{(\kappa^i_1)^2 }{\kappa^i_2(b(i)+1)!+\kappa^i_1}\left(\frac{b(i)+2}{\kappa^i_2r_i(b(i)+1)!-\kappa^i_1} + \frac{b(i)-a(i)}{\kappa^i_1}\right),$$
 which concludes the proof.
 \end{proof}
 
 \begin{alem}\label{lem:bound_on_external_pi}
  Let $i\in\{1,\dots,d\}$ with $b(i)\geq1$, and assume \eqref{eq:constrwkewfwef} holds. Let $w\in\Int_{\geq0}$ such that $w\in[2, r_i]$. Then,
 \[ 
  \pi^{\kappa,i}(b(i) + w) \le \left(\frac{\kappa^i_1}{\kappa^i_2}\right)^w C(i,w),
 \]
 where
 \begin{equation*}
     C(i,w)=\frac{\prod_{j=1}^{w}(j-1)!}{\prod_{j=1}^{w}(j+b(i))!}\leq \frac{1}{(b(i)+1)!(b(i)+2)!}.
 \end{equation*}
 \end{alem}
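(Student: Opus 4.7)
The plan is to exploit the simple one-dimensional structure of the subchain $X^\kappa(\cdot,i)$ (whose $d$ components are independent, as noted above Lemma~\ref{pmd_at_0}) and reduce the claim to a cutset balance identity. The key observation is that the only upward transition of $X^\kappa(\cdot,i)$ is the $+1$ jump (at rate $\kappa^i_1$), while the only downward transition is the $-r_i$ jump (firing from states $v(i)\geq b(i)+1$). As a consequence, stationary probability flux across the cut $\{v(i)\leq b(i)+w-1\}$ versus $\{v(i)\geq b(i)+w\}$ involves only very few terms, and one can read off a clean recursion.

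First I would write down the cutset balance identity at level $V=b(i)+w-1$ for $w\geq 1$. Upward flux arises solely from the $+1$ reaction firing at $V$, contributing $\kappa^i_1\,\pi^{\kappa,i}(b(i)+w-1)$. Downward flux arises from the $-r_i$ reaction at any $v\in\{b(i)+w,\dots,b(i)+w+r_i-1\}$, contributing $\kappa^i_2\,\frac{v!}{(v-b(i)-1)!}\,\pi^{\kappa,i}(v)$. Equating the two expressions and keeping only the single summand corresponding to $v=b(i)+w$ on the downward side yields
\[
\pi^{\kappa,i}(b(i)+w)\ \leq\ \frac{\kappa^i_1}{\kappa^i_2}\cdot\frac{(w-1)!}{(b(i)+w)!}\,\pi^{\kappa,i}(b(i)+w-1),\qquad w\geq 1.
\]

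Second, I would iterate this one-step bound from $w$ down to the base case $w=1$, using the trivial bound $\pi^{\kappa,i}(b(i))\leq 1$ at the bottom of the recursion. This telescopes to exactly
\[
\pi^{\kappa,i}(b(i)+w)\ \leq\ \left(\frac{\kappa^i_1}{\kappa^i_2}\right)^{w}\prod_{j=1}^{w}\frac{(j-1)!}{(b(i)+j)!}\ =\ \left(\frac{\kappa^i_1}{\kappa^i_2}\right)^{w}C(i,w),
\]
which is the first inequality of the statement.

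Finally, for the second inequality on $C(i,w)$, I would simply check that each factor $\frac{(j-1)!}{(b(i)+j)!}$ with $j\geq 3$ is at most $1$: since $b(i)\geq 1$, one has $(b(i)+j)!\geq (j+1)!\geq (j-1)!$. Therefore all factors for $j\geq 3$ can be discarded from the product, leaving only the $j=1$ and $j=2$ factors, which yields $C(i,w)\leq \frac{1}{(b(i)+1)!(b(i)+2)!}$. The only nontrivial step in the argument is the cutset identity itself; once this is written down correctly, the rest is a routine iteration and an elementary factorial estimate, and the hypothesis \eqref{eq:constrwkewfwef} enters only indirectly through the existence of $\pi^{\kappa,i}$ guaranteed by Lemma~\ref{eherhieqjfeqjfeqjfleqlf}.
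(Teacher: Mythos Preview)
Your argument is correct, and it is genuinely different from the paper's proof. The paper proceeds via the renewal identity $\pi^{\kappa,i}(b(i)+w)=\big(\text{mean holding time}\big)/\big(\text{mean return time}\big)$, then lower-bounds the mean return time by a first-step analysis together with a geometric-variable argument counting excursions out of $[a(i),b(i)]$ before $b(i)+w$ is reached; this is where the restriction $w\leq r_i$ is actually used (to ensure that after a downward jump from $b(i)+w$ one lands in $[a(i),b(i)]$). Your cutset-balance approach is more direct: it yields the one-step inequality
\[
\pi^{\kappa,i}(b(i)+w)\leq\frac{\kappa^i_1}{\kappa^i_2}\cdot\frac{(w-1)!}{(b(i)+w)!}\,\pi^{\kappa,i}(b(i)+w-1)
\]
immediately, iterates cleanly, and in fact works for every $w\geq 1$, not only $w\in[2,r_i]$. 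It also shows that hypothesis~\eqref{eq:constrwkewfwef} is not needed for the bound itself, since existence of $\pi^{\kappa,i}$ already follows from Lemma~\ref{eherhieqjfeqjfeqjfleqlf} for all $\kappa>0$; the paper's route, by contrast, inherits that hypothesis from Lemma~\ref{lem1}. One small suggestion: it is worth stating explicitly that the cutset flux identity is justified because the chain is non-explosive and positive recurrent (both consequences of Lemma~\ref{eherhieqjfeqjfeqjfleqlf}), so that the global balance equations can be summed over the half-line $\{v(i)\leq b(i)+w-1\}$ without convergence issues.
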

 \begin{proof}
  To simplify the notation, denote 
  $$\phi_i(v(i))=\frac{v(i)!}{(v(i)-b(i)-1)!}$$
  for all integers $v(i)$ that are greater than or equal to $b(i)+1$, so that the transition rate from $v(i)$ to $v(i)-r_i$ is given by $\kappa^i_2\phi_i(v(i)$. Then, by using again classical Markov chain theory we have
  \begin{equation}\label{fiheifhe}
      \pi^{\kappa,i}(b(i)+w) = \frac{1}{\kappa^i_2\phi_i(b(i)+w) + \kappa^i_1}\cdot\frac{1}{E^{\kappa,i}_{b(i)+w}[\sigma^{\kappa,i}_{b(i)+w}]},
  \end{equation}
  where the first factor is the expectation of the holding time of $X^{\kappa}(\cdot,i)$ in the state $b(i)+w$. By performing first step analysis on $E^{\kappa,i}_{b(i)+w}[\sigma^{\kappa,i}_{b(i)+w}]$, we have
  \begin{equation}\label{rvlirtnvilgidg}
  \begin{split}
  \begin{aligned}
   E^{\kappa,i}_{b(i)+w}[\sigma^{\kappa,i}_{b(i)+y}]&=\frac{\kappa^i_2\phi_i(b(i)+w)}{\kappa^i_2\phi_i(b(i)+w) + \kappa^i_1}E^{\kappa,i}_{a(i)+w-1}[\sigma^{\kappa,i}_{b(i)+w}]+\frac{\kappa^i_1}{\kappa^i_2\phi_i(b(i)+w) + \kappa^i_1}E^{\kappa,i}_{b(i)+w+1}[\sigma^{\kappa,i}_{b(i)+w}]\\
   &\geq \frac{\kappa^i_2\phi_i(b(i)+w)}{\kappa^i_2\phi_i(b(i)+w) + \kappa^i_1}E^{\kappa,i}_{a(i)+w-1}[\sigma^{\kappa,i}_{b(i)+w}].
  \end{aligned}
  \end{split}
  \end{equation}
  We will then study $E_{a(i)+w-1}[\sigma^{\kappa,i}_{b(i)+w}]$. Note that $a(i)+w-1\in[a(i), b(i)]$, because $2\leq w\leq r_i$ by assumption. 
  
  From each state $v(i)\in[a(i),b(i)]$, it takes at least one exponential time with rate $\kappa^i_1$ to reach the state $b(i)+1$. From $b(i)+1$, we reach the state $b(i)+w$ without hitting $[a(i), b(i)]$ with probability
  $$p^{\kappa, i}_w=\frac{(\kappa^i_1)^{w-1}}{(\kappa^i_2)^{w-1}\prod_{j=1}^{w-1} [\phi_i(j+b(i))+(\kappa^i_1/\kappa^i_2)]}.$$
  Hence, if we let $G^{\kappa, i}_w$ be a geometric random variable with parameter $p^{\kappa, i}_w$, we have
  \begin{equation}\label{wh2fl4hfifkb}
      E_{a(i)+w-1}[\sigma^{\kappa,i}_{b(i)+w}]\geq \frac{1}{\kappa^i_1}E[G^{\kappa, i}_w]=\frac{1}{\kappa^i_1 p^{\kappa, i}_w}=\frac{(\kappa^i_2)^{w-1}\prod_{j=1}^{w-1} [\phi_i(j+b(i))+(\kappa^i_1/\kappa^i_2)]}{(\kappa^i_1)^w}.
  \end{equation}
  Hence, by combining \eqref{fiheifhe}, \eqref{rvlirtnvilgidg}, and \eqref{wh2fl4hfifkb}, we have
  \begin{align*}
   \pi^{\kappa,i}(b(i)+w)&\leq \frac{1}{\kappa^i_2\phi_i(b(i)+w)E^{\kappa,i}_{b(i)+w+1}[\sigma^{\kappa,i}_{b(i)+w}]}\\
   &\leq\frac{(\kappa^i_1)^w}{(\kappa^i_2)^w \phi_i(b(i)+w)\prod_{j=1}^{w-1} [\phi_i(j+b(i))+(\kappa^i_1/\kappa^i_2)]}\\
   &\leq\frac{(\kappa^i_1)^w}{(\kappa^i_2)^w \prod_{j=1}^{w}\phi_i(j+b(i))},
  \end{align*}
  which concludes the proof.
 \end{proof}

 Finally, we are ready to prove the following result:
 \begin{athm}\label{thmskdjjgnejrrhgergejglkjblkrj}
  Let $i\in\{1,\dots,d\}$, and assume \eqref{eq:constrwkewfwef} holds. Let $q_i$ be the uniform distribution on the set $\{a(i),\dots, b(i)\}$. Then, for any integer $v(i)\in[a(i),b(i)]$ we have
  $$0\leq q_i(v(i))-\pi^{\kappa,i}(v(i))\leq \frac{\kappa^i_1}{\kappa^i_2}\mathbbm{1}_{\{b(i)\geq1\}}D(\kappa,i),$$
  where
  \begin{multline}\label{eq:D}
      D(\kappa,i)=\frac{1}{(b(i)+1)!}\left(\frac{1}{r_i}+\mathbbm{1}_{\{r_i\geq 2\}}\frac{1}{b(i)+2}\right)\\
      +\frac{\kappa^i_1}{\kappa^i_2}\left(\frac{1}{(r_i)^2}\cdot\frac{b(i)+2}{[(b(i)+1)!]^2-(\kappa^i_1/\kappa^i_2)^2}+\mathbbm{1}_{\{r_i\geq 3\}}\frac{1}{(b(i)+1)!(b(i)+2)!}\cdot \frac{1-(\kappa^i_1/\kappa^i_2)^{r_i-2}}{1-(\kappa^i_1/\kappa^i_2)}\right).
  \end{multline}
  Moreover,
  $$\|\pi^{\kappa,i}-q_i\|_\infty<\frac{\kappa^i_1}{\kappa^i_2}\mathbbm{1}_{\{b(i)\geq1\}}r_iD(\kappa,i).$$
 \end{athm}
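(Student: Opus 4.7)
The plan is to reduce to a single coordinate $i$, collapse the claim via a deterministic-motion observation, and then massage Lemma~\ref{lem:pi_beta_converges}. The case $b(i)=0$ is handled by Lemma~\ref{pmd_at_0}, which gives $\pi^{\kappa,i}=\delta_0=q_i$, making both inequalities trivially $0\leq 0$; this is the purpose of the indicator $\mathbbm{1}_{\{b(i)\geq 1\}}$. Assume $b(i)\geq 1$ henceforth. The decisive structural observation is that every state $v(i)\in\{a(i),\dots,b(i)\}$ has only one outgoing reaction, namely $0\to V_i$ at rate $\kappa^i_1$, since the downward reaction requires at least $b(i)+1$ molecules. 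Hence the chain $X^\kappa(\cdot,i)$ marches deterministically from $a(i)$ up to $b(i)$ with i.i.d.\ $\mathrm{Exp}(\kappa^i_1)$ holding times, and a standard regenerative argument yields $\pi^{\kappa,i}(v(i))=\pi^{\kappa,i}(b(i))=:p$ for every $v(i)\in[a(i),b(i)]$. Normalization $r_i p+\sum_{w\geq 1}\pi^{\kappa,i}(b(i)+w)=1$ then forces $p\leq 1/r_i$, which is exactly the lower bound $q_i(v(i))-\pi^{\kappa,i}(v(i))=1/r_i-p\geq 0$.

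For the upper bound I would invoke Lemma~\ref{lem:pi_beta_converges} directly and split its right-hand side into the two summands $T_1,T_2$ coming from the inner parentheses. Applying $\kappa^i_2(b(i)+1)!+\kappa^i_1\geq\kappa^i_2(b(i)+1)!$ to drop the extra $+\kappa^i_1$, and $\kappa^i_2 r_i(b(i)+1)!-\kappa^i_1\geq\kappa^i_2(b(i)+1)!-\kappa^i_1$ (using only $r_i\geq 1$) to loosen the lower denominator, $T_1$ simplifies to at most $(\kappa^i_1/\kappa^i_2)(b(i)+2)/\{r_i^2[((b(i)+1)!)^2-(\kappa^i_1/\kappa^i_2)^2]\}$, which is exactly the (C)-term of $D(\kappa,i)$. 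Similarly, using $b(i)-a(i)=r_i-1\leq r_i$ and the same lower bound on $\kappa^i_2(b(i)+1)!+\kappa^i_1$, $T_2$ is bounded by $(\kappa^i_1/\kappa^i_2)/[r_i(b(i)+1)!]$, which is the (A)-term of $D(\kappa,i)$. The remaining terms (B) and (D) of $D(\kappa,i)$ are non-negative free slack; they would arise naturally by sharpening via Lemma~\ref{lem:bound_on_external_pi}, but are not required to conclude $1/r_i-p\leq (\kappa^i_1/\kappa^i_2)D(\kappa,i)$.

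For the infinity-norm inequality I would partition $v(i)\in\Int_{\geq 0}$ into three regions. On $[a(i),b(i)]$ the previous bound already gives $|\pi^{\kappa,i}(v(i))-q_i(v(i))|\leq(\kappa^i_1/\kappa^i_2)D(\kappa,i)\leq(\kappa^i_1/\kappa^i_2)r_iD(\kappa,i)$. For $v(i)<a(i)$ both $\pi^{\kappa,i}$ and $q_i$ vanish, since $X^\kappa(\cdot,i)$ cannot reach such states in its closed irreducible set $\Theta(i)$. For $v(i)>b(i)$, $q_i(v(i))=0$ and a single mass $\pi^{\kappa,i}(v(i))$ is trivially dominated by the entire tail $\sum_{w\geq 1}\pi^{\kappa,i}(b(i)+w)=r_i(1/r_i-p)\leq(\kappa^i_1/\kappa^i_2)r_iD(\kappa,i)$, which explains the extra factor $r_i$ in the statement. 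The main obstacle is the bookkeeping in the second paragraph: each step is an elementary substitution, but one must systematically recast Lemma~\ref{lem:pi_beta_converges}'s compact expression into the explicit term-by-term structure of $D(\kappa,i)$, being careful to preserve the sharp $[((b(i)+1)!)^2-(\kappa^i_1/\kappa^i_2)^2]$ denominator rather than discarding it for a cruder bound.
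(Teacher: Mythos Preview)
Your proof has a genuine gap: the claim that $\pi^{\kappa,i}$ is \emph{constant} on $[a(i),b(i)]$ is false. The regenerative picture you describe would be correct only if the chain always entered the interval $[a(i),b(i)]$ at the left endpoint $a(i)$, but it does not. The downward reaction $(b(i)+1)V_i\to a(i)V_i$ removes exactly $r_i$ molecules, so from state $b(i)+w$ (with $1\leq w\leq r_i$) the chain lands at $a(i)+w-1$, which can be any point of $[a(i),b(i)]$. States closer to $b(i)$ therefore receive extra inflow and carry strictly more stationary mass. Concretely, the stationary balance at $v(i)\in[a(i)+1,b(i)]$ reads
\[
\kappa^i_1\,\pi^{\kappa,i}(v(i))=\kappa^i_1\,\pi^{\kappa,i}(v(i)-1)+\kappa^i_2\,\frac{(v(i)+r_i)!}{(v(i)-a(i))!}\,\pi^{\kappa,i}(v(i)+r_i),
\]
so $\pi^{\kappa,i}(v(i))>\pi^{\kappa,i}(v(i)-1)$ whenever $\pi^{\kappa,i}(v(i)+r_i)>0$, which holds for all relevant $v(i)$. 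The distribution is strictly increasing on $[a(i),b(i)]$, not flat.

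This breaks both halves of your argument in the way you use them. For the upper bound, Lemma~\ref{lem:pi_beta_converges} controls only $1/r_i-\pi^{\kappa,i}(b(i))$; since $\pi^{\kappa,i}(v(i))<\pi^{\kappa,i}(b(i))$ for $v(i)<b(i)$, the quantity $1/r_i-\pi^{\kappa,i}(v(i))$ is strictly larger and needs an additional term bounding $\pi^{\kappa,i}(b(i))-\pi^{\kappa,i}(v(i))$. That extra term is exactly what the (B) and (D) pieces of $D(\kappa,i)$ supply: the paper bounds each increment $\pi^{\kappa,i}(v(i))-\pi^{\kappa,i}(v(i)-1)$ using Lemma~\ref{lem:bound_on_external_pi} applied at $v(i)+r_i\in[b(i)+2,b(i)+r_i]$, then telescopes. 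So the (B) and (D) terms are not ``free slack''; without them your upper bound fails at every $v(i)<b(i)$. The lower bound $0\leq 1/r_i-\pi^{\kappa,i}(v(i))$ still holds, but via monotonicity and Lemma~\ref{lem:pi_beta_converges} (namely $\pi^{\kappa,i}(v(i))\leq\pi^{\kappa,i}(b(i))\leq 1/r_i$), not via your normalization identity $r_ip+\text{tail}=1$, which relies on the false constancy.
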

  
 \begin{proof}
  By Lemma~\ref{pmd_at_0}, if $b(i)=0$ then $\pi^{\kappa,i}=q_i$, hence the statement holds. We now assume $b(i)\geq 1$.
  
  By Lemma~\ref{eherhieqjfeqjfeqjfleqlf}, $X^\kappa(\cdot,i)$ is non-explosive, hence the forward Kolmogorov equation holds true \cite{norris}. It follows that, provided that $r_i\geq2$ (which is equivalent to $b(i)\geq a(i)+1$), for any integer $v(i)\in[a(i)+1,b(i)]$
  $$\kappa^i_1\pi^{\kappa,i}(v(i)) = \kappa^i_1\pi^{\kappa,i}(v(i)-1)+\kappa^i_2\frac{(v(i)+r_i)!}{(v(i)-a(i))!}\pi^{\kappa, i}(v(i)+r_i).$$
  From $v(i)\in[a(i)+1,b(i)]$ it follows $v(i)+r_i\in [b(i)+2,b(i)+r_i]$. Hence, by Lemma \ref{lem:bound_on_external_pi} we have
  $$0\leq \pi^{\kappa,i}(v(i))-\pi^{\kappa,i}(v(i)-1) \leq \frac{1}{(b(i)+1)!(b(i)+2)!}\cdot\frac{(v(i)+r_i)!}{(v(i)-a(i))!} \left(\frac{\kappa^i_1}{\kappa^i_2}\right)^{v(i)-a(i)} .$$
  Hence, by applying the above inequality iteratively, we have that for any two integers $v_1(i)\leq v_2(i)\in[a(i), b(i)]$
  \begin{equation}\label{dnewbleqvlievnjkcsjvcec}
  \begin{split}
  \begin{aligned}
      0\leq \pi^{\kappa,i}(v_2(i))-&\pi^{\kappa,i}(v_1(i))\leq \pi^{\kappa,i}(b(i))-\pi^{\kappa,i}(a(i))\leq\frac{1}{(b(i)+1)!(b(i)+2)!}\sum_{j=a(i)+1}^{b(i)}\frac{(j+r_i)!}{(j-a(i))!} \left(\frac{\kappa^i_1}{\kappa^i_2}\right)^{j-a(i)}\\
      &= \mathbbm{1}_{\{r_i\geq 2\}}\frac{\kappa^i_1}{\kappa^i_2}\cdot\frac{1}{(b(i)+2)!}+\left(\frac{\kappa^i_1}{\kappa^i_2}\right)^2\cdot\frac{1}{(b(i)+1)!(b(i)+2)!}\sum_{j=0}^{r_i-3}\left(\frac{\kappa^i_1}{\kappa^i_2}\right)^{j}\\
      &= \mathbbm{1}_{\{r_i\geq 2\}}\frac{\kappa^i_1}{\kappa^i_2}\cdot\frac{1}{(b(i)+2)!}+\mathbbm{1}_{\{r_i\geq 3\}}\left(\frac{\kappa^i_1}{\kappa^i_2}\right)^2\frac{1}{(b(i)+1)!(b(i)+2)!}\cdot \frac{1-(\kappa^i_1/\kappa^i_2)^{r_i-2}}{1-(\kappa^i_1/\kappa^i_2)}.
  \end{aligned}
  \end{split}
  \end{equation}
By Lemma~\ref{eherhieqjfeqjfeqjfleqlf}, for any integer $v(i)<a(i)$ we have $\pi^{\kappa,i}(v(i))=q_i(v(i))=0$. 
By Lemma \ref{lem:pi_beta_converges} and \eqref{dnewbleqvlievnjkcsjvcec}, for any integer $v(i)\in[a(i),b(i)]$ 
\begin{equation}\label{eq:irhrekvlarevebver}
\begin{split}
  \begin{aligned}
      0\leq q_i(v(i))&-\pi^{\kappa,i}(v(i))= \frac{1}{r_i}-\pi^{\kappa,i}(b(i))+\pi^{\kappa,i}(b(i))-\pi^{\kappa,i}(v(i))\\
      &\leq \frac{1}{(r_i)^2}\cdot\frac{(\kappa^i_1)^2 }{\kappa^i_2(b(i)+1)!+\kappa^i_1}\left(\frac{b(i)+2}{\kappa^i_2r_i(b(i)+1)!-\kappa^i_1} + \frac{b(i)-a(i)}{\kappa^i_1}\right)\\
      &\quad+\mathbbm{1}_{\{r_i\geq 2\}}\frac{\kappa^i_1}{\kappa^i_2}\cdot\frac{1}{(b(i)+2)!}+\mathbbm{1}_{\{r_i\geq 3\}}\left(\frac{\kappa^i_1}{\kappa^i_2}\right)^2\frac{1}{(b(i)+1)!(b(i)+2)!}\cdot \frac{1-(\kappa^i_1/\kappa^i_2)^{r_i-2}}{1-(\kappa^i_1/\kappa^i_2)}\\
      &\leq \frac{\kappa^i_1}{\kappa^i_2}D(\kappa,i),
  \end{aligned}
  \end{split}
  \end{equation}
  where $D(\kappa,i)$ is as in \eqref{eq:D}.  Finally, from \eqref{eq:irhrekvlarevebver} it follows that
  \begin{align*}
  \sum_{v(i)=b(i)+1}^\infty \pi^{\kappa,i}(v(i))&=1-\sum_{v(i)=a(i)}^{b(i)} \pi^{\kappa,i}(v(i))\\
  &= \sum_{v(i)=a(i)}^{b(i)}\Big(q_i(v(i))-\pi^{\kappa,i}(v(i))\Big)\\
  &\leq \frac{\kappa^i_1}{\kappa^i_2}r_iD(\kappa,i).
  \end{align*}
  Hence, for any $v(i)\geq b(i)+1$
  $$|\pi^{\kappa,i}(v(i))-q_i(v(i))|=\pi^{\kappa,i}(v(i))\leq \frac{\kappa^i_1}{\kappa^i_2}r_i D(\kappa,i),$$
  which concludes the proof.
 \end{proof}
  
\subsection{Proof of Proposition~\ref{prop:uniform}}\label{dsflksdfsdlfkjsdfsdlkfjsldjjflsdjflksdjfl}

The CRN $\UD(a,b,\delta)$ is equal to $\UDbis(a,b,\kappa^\delta)$, with $\kappa^\delta$ as described in Construction~\ref{constr:range}. The quantity $D(\kappa^\delta,i)$ defined in \eqref{eq:D} becomes
$$D(\kappa^\delta,i)=\frac{1}{(b(i)+1)!}\left(\frac{1}{r_i}+\mathbbm{1}_{\{r_i\geq 2\}}\frac{1}{b(i)+2}\right)+\frac{\delta}{2d}g_i(\delta),$$
where $g_i(\delta)$ is a continuous function satisfying
$$\lim_{\delta\to 0}g_i(\delta)=\frac{1}{[(b(i)+1)!]^2}\left(\frac{1}{(r_i)^2}+\mathbbm{1}_{\{r_i\geq 3\}}\frac{1}{b(i)+2}\right).$$

Hence, by Theorem~\ref{thmskdjjgnejrrhgergejglkjblkrj} for any $v\in\Int_{\geq0}^d$ such that $a\leq v\leq b$ we have
\begin{align*}
    |\pi^{\delta}(v)-q(v)|&=q(v)-\pi^\delta(v)=\prod_{i=1}^d\left[\pi^{\kappa^\delta,i}(v(i))+\left(q_i(v(i))-\pi^{\kappa^\delta,i}(v(i))\right)\right]-\prod_{i=1}^d\pi^{\kappa^\delta,i}(v(i))\\ &\leq\prod_{i=1}^d\left[\frac{1}{r_i}+\frac{\delta}{2d}\mathbbm{1}_{\{b(i)\geq1\}}D(\kappa^\delta,i)\right]-\prod_{i=1}^d\frac{1}{r_i}\\
    &\leq \frac{\delta}{2d}\left(\frac{1}{\prod_{i=1}^d r_i}\sum_{i=1}^d \mathbbm{1}_{\{b(i)\geq1\}}\frac{1}{(b(i)+1)!}\left(1+\mathbbm{1}_{\{r_i\geq 2\}}\frac{r_i}{b(i)+2}\right)\right)+\delta^2G(\delta),
\end{align*}
where $G(\delta)$ is a continuous function satisfying $\lim_{\delta\to 0}G(\delta)<\infty.$
If $v\in\Int_{\geq0}^d$ does not satisfy $v\geq a$, then by Lemma~\ref{eherhieqjfeqjfeqjfleqlf} we have $\pi^\delta(v)=0=q(v)$. Finally, for all $v\in\Int_{\geq0}^d$ with $v\geq a$ and $v\nleq b$, we have
\begin{align*}
    |\pi^{\delta}(v)-q(v)|&=\pi^{\delta}(v)\leq \sum_{v'\geq a, v'\nleq b}\pi^{\delta}(v')=1-\sum_{a\leq v'\leq b}\pi^{\delta}(v')=\sum_{a\leq v'\leq b}\left(q(v')-\pi^{\delta}(v')\right)\\
    &\leq\frac{\delta}{2d} \sum_{i=1}^d \mathbbm{1}_{\{b(i)\geq1\}}\frac{1}{(b(i)+1)!}\left(1+\mathbbm{1}_{\{r_i\geq 2\}}\frac{r_i}{b(i)+2}\right)+\delta^2 G(\delta)\prod_{i=1}^d r_i.
\end{align*}
The proof is concluded by noting that $r_i<b(i)+2$.\hfill\qed

\subsection{Analysis of Construction~\ref{constr:poisson}}

Let $X(\cdot)$ be the continuous-time Markov chain associated with $\PPN(c)$. Note that the different components $X(\cdot, i)$ are independent continuous-time Markov chains, each one associated with the subnetwork of $\PPN(c)$ governing the changes of the species $V_i$. We state an prove the following results concerning Construction~\ref{constr:poisson}.

\begin{alem}\label{lem:superppn}
  Let $d\geq1$ and let $c\in\Rel_{>0}^d$. Consider the function $L(\cdot)$, defined as $L(v(i))=v(i)$ for all $v(i)\in \Int_{\geq0}$. Then, $\lim_{v(i)\to\infty}L(v(i))=\infty$. Moreover, for any $i\in\{1,\dots,d\}$, there exists $\alpha_i\in\Rel_{>0}$ and a compact set $K\subset \Int_{\geq 0}$ such that
  $$A^i L(v(i))\leq -\alpha_i L^2(v(i))\quad\text{for all }v(i)\notin K,$$
  where $A^i$ is the generator of $X(\cdot, i)$. 
\end{alem}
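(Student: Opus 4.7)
The plan is to compute $A^i L(v(i))$ explicitly from the reaction rates of the $i$th subnetwork of $\PPN(c)$ and verify that the leading term for large $v(i)$ is a negative multiple of $v(i)^2$, which will immediately give the super Lyapunov bound $A^i L(v(i)) \le -\alpha_i L^2(v(i))$ outside some compact set.

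First I would note that $\lim_{v(i)\to\infty} L(v(i)) = \infty$ is immediate since $L$ is the identity on $\Int_{\geq0}$. Next, I would fix $i\in\{1,\dots,d\}$ and observe that the subnetwork governing $V_i$ consists of the three reactions $0\xrightarrow{c(i)^2}V_i$, $V_i\xrightarrow{c(i)}2V_i$, and $2V_i\xrightarrow{1}0$, with corresponding propensities $c(i)^2$, $c(i)\,v(i)$, and $v(i)(v(i)-1)$, and state changes $+1$, $+1$, and $-2$ respectively. Applying the generator formula $A^i f(v(i))=\sum_{y\to y'}\lambda_{y\to y'}(v(i))\big(f(v(i)+y'(i)-y(i))-f(v(i))\big)$ to $L(v(i))=v(i)$ then yields
\[
A^i L(v(i)) = c(i)^2 + c(i)\,v(i) - 2\,v(i)(v(i)-1) = c(i)^2 + (c(i)+2)v(i) - 2\,v(i)^2.
\]

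From this computation the conclusion is essentially immediate. I would choose any $\alpha_i\in(0,2)$, for concreteness $\alpha_i=1$, and observe that the inequality $A^i L(v(i)) \le -\alpha_i v(i)^2$ is equivalent to $(2-\alpha_i)v(i)^2 - (c(i)+2)v(i) - c(i)^2 \ge 0$, which is a quadratic in $v(i)$ with positive leading coefficient and therefore holds for all sufficiently large $v(i)$. Taking $K$ to be the (finite) set of nonnegative integers on which this inequality fails gives a compact subset of $\Int_{\geq0}$ outside of which $A^i L(v(i)) \le -\alpha_i L^2(v(i))$.

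There is no substantive obstacle in this argument; the only thing worth emphasizing is that the negative quadratic contribution comes entirely from the $2V_i\to 0$ reaction, whose propensity grows like $v(i)^2$, while the two production reactions contribute only a constant and a linear term. It is precisely this structure that guarantees $L$ serves as a super Lyapunov function in the sense of Section~\ref{sec:super}, and hence (via Theorem~\ref{thm:superstuff}) yields robustness and finite mixing times of $X(\cdot,i)$, to be used subsequently in verifying the hypotheses of Theorem~\ref{thm:general} for $\PPN(c)$ within the proof of Theorem~\ref{thm:final}.
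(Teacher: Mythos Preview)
Your proposal is correct and follows essentially the same approach as the paper: both compute $A^i L(v(i)) = c(i)^2 + c(i)\,v(i) - 2v(i)(v(i)-1)$ directly from the three reactions of the $i$th subnetwork and observe that the dominant $-2v(i)^2$ term from $2V_i\to 0$ makes $A^iL(v(i)) \le -v(i)^2$ for large $v(i)$. Your version just spells out the quadratic comparison slightly more explicitly than the paper does.
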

\begin{proof}
 It is clear that $\lim_{v(i)\to\infty}L(v(i))=\infty$. Moreover, 
 $$A^i L(v(i))=c(i)^2+c(i)v(i)-2v(i)\Big(v(i)-1\Big).$$
 Hence, if $v(i)$ is large enough we have
 $$A^i L(v(i))\leq -v(i)^2=-L^2(v(i)),$$
 which concludes the proof.
\end{proof}

\begin{aprop}\label{prop:ppn}
 The CRN $\PPN(c)$ is robust, with limit distribution $\pi$ given by
 \begin{equation}\label{aaaaaaaaaaaaaaaaaaaaaaaaaaaaaaaaaaaaaaa}
     \pi(v)=\prod_{i=1}^d e^{-c(i)}\frac{(c(i))^{v(i)}}{v(i)!}\quad\text{for all }v\in\Int_{\geq0}^d.
 \end{equation}
 Moreover, the mixing times of $\PPN(c)$ are finite at any level $\varepsilon>0$.
\end{aprop}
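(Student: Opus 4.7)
The plan is to exploit two structural features of $\PPN(c)$: the species decouple across indices, and the resulting one-dimensional subnetwork both admits a super-Lyapunov function and is complex balanced in the sense of \cite{ACK:product}. Once these are in place, the proposition follows from standard machinery.

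First I would observe that each species $V_i$ appears only in reactions involving $V_i$ alone, so the continuous-time Markov chain $X(\cdot)$ associated with $\PPN(c)$ decomposes into $d$ independent chains $X(\cdot, i)$, each associated with the subnetwork
$$0 \xrightarrow{c(i)^2} V_i, \qquad V_i \xrightarrow{c(i)} 2V_i, \qquad 2V_i \xrightarrow{1} 0.$$
Joint limit distributions of independent chains factor, and joint mixing times are controlled by componentwise ones: requiring each component to be $\varepsilon/d$-close to its marginal stationary distribution is enough to give $\varepsilon$-closeness in the $\infty$-norm of the product. The problem therefore reduces to proving robustness, finiteness of mixing times, and the Poisson form of the marginal $\pi_i$ for each one-dimensional chain.

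Next I would treat each $X(\cdot, i)$ along the same Foster--Lyapunov route employed for $\UDbis$ in Lemma~\ref{eherhieqjfeqjfeqjfleqlf}. Irreducibility of $\Int_{\geq0}$ is immediate: the reaction $0 \to V_i$ always permits an upward step of one, while from any $n\geq 1$ one can compose $V_i \to 2V_i$ followed by $2V_i \to 0$ to drop the count by one, so every state communicates with every other. Then I would feed the super-Lyapunov function $L(v(i)) = v(i)$ supplied by Lemma~\ref{lem:superppn} into Theorem~\ref{thm:superstuff}; this yields non-explosion, the existence and uniqueness of a limit distribution $\pi_i$, and finiteness of $\tau^{\varepsilon}_i$ at every level $\varepsilon > 0$.

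Finally I would identify the limit distribution explicitly. My plan is to verify that $\PPN(c)$ is complex balanced with equilibrium $c$ itself: at each of the three complexes $0$, $V_i$, $2V_i$ the incoming and outgoing fluxes evaluated at $c$ are all equal to $c(i)^2$, which is a short direct check. Theorem~4.1 of \cite{ACK:product} then exhibits the product-form Poisson $\prod_i e^{-c(i)} c(i)^{v(i)}/v(i)!$ as a stationary distribution, and the uniqueness extracted in the previous step forces it to coincide with $\pi$. The only subtlety is the interface between the ACK theorem, which merely produces a stationary distribution, and the super-Lyapunov input, which upgrades it to the unique limit distribution; the rest is bookkeeping. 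If one preferred to avoid invoking complex balance, stationarity of the claimed Poisson can instead be verified directly from the master equation at each $n$, where the contributions from the $n\pm 1$ and $n\pm 2$ jumps telescope in a straightforward manner.
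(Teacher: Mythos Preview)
Your proposal is correct and follows essentially the same route as the paper: decoupling into independent one-dimensional chains, irreducibility, complex balance with equilibrium $c$ combined with \cite{ACK:product} for the product-Poisson form, and Lemma~\ref{lem:superppn} with Theorem~\ref{thm:superstuff} for finite mixing times. The only cosmetic difference is that the paper cites \cite{ACKK:explosion} to obtain non-explosion directly from complex balance, whereas you extract it from Theorem~\ref{thm:superstuff}; both are valid and the remaining logic is identical.
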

\begin{proof}
Since the components $X(\cdot, i)$ are independent, the state space of $X(\cdot)$ is irreducible if and only if the same holds for each $X(\cdot, i)$. This is the case, since the molecules of a species $V_i$ can always increase by 1, and to decrease by 2 whenever at least 2 molecules are available. 

The CRN $\PPN(c)$ is \emph{complex balanced} with complex balanced equilibrium $c$, in the sense of \cite{horn1972general}. Indeed, for any complex $y\in\Cmp$ it holds that
$$\sum_{\substack{y'\in\Cmp\\ y'\to y\in \Rel}} \kappa_{y'\to y}c^{y'}=\sum_{\substack{y'\in\Cmp\\ y\to y'\in \Rel}} \kappa_{y\to y'}c^{y}.$$
Hence, due to \cite{ACKK:explosion} the associated continuous-time Markov chain $X(\cdot)$ is non-explosive, and due to \cite{ACK:product} and to the fact that the state space is irreducible, the limit distribution satisfies \eqref{aaaaaaaaaaaaaaaaaaaaaaaaaaaaaaaaaaaaaaa}.

Finally, the mixing times of $X(\cdot)$ are finite at any level, if and only if the mixing times of any component $X(\cdot, i)$ are finite at any level. 
The latter is implied by Lemma~\ref{lem:superppn} and Theorem~\ref{thm:superstuff}, hence the proof is concluded.
\end{proof}

\subsection{Proof of Theorem~\ref{thm:final}}\label{sec:proof_thm_final}

Let $X^\delta(\cdot)$ be the continuous-time Markov chain associated with $\Mix(\Fam,\zeta,\delta)$. By Lemma~\ref{lem:pmd}, Proposition~\ref{prop:uniform}, and Proposition~\ref{prop:ppn}, all the networks $\Net^\delta_i$ are robust, for all $\delta>0$. Let $\pi_i^\delta$ denote the limit distribution of $\Net^\delta_i$. By Lemma~\ref{lem:pmd}, Proposition~\ref{prop:uniform}, and Proposition~\ref{prop:ppn} we have
\begin{align}\label{i1}
    \|\pi_i^\delta-\pi_i\|_\infty&\leq \delta\quad\text{for all }i\in I_1;\\
    \label{i2}
    \|\pi_i^\delta-\pi_i\|_\infty&\leq \delta+o(\delta)\quad\text{for all }i\in I_2;\\
    \label{i3}
    \|\pi_i^\delta-\pi_i\|_\infty&=0 \quad\text{for all }i\in I_3,
\end{align}
where $o(\delta)$ is a function with the property
$$\lim_{\delta\to 0}\frac{o(\delta)}{\delta}=0.$$
Furthermore, Constructions~\ref{constr:pmd} and \ref{constr:range} are special cases of Construction~\ref{constr:rangebis}, hence by Lemma~\ref{eherhieqjfeqjfeqjfleqlf} and Proposition~\ref{prop:ppn} it follows that the mixing times of any network $\Net^\delta_i$ are finite at any level $\varepsilon>0$, for any $\delta>0$.

If we can show that $\Mix(\Fam,\zeta,\delta)$ is non-explosive for any $\delta>0$, we can conclude the proof by Theorem~\ref{thm:general} and by triangular inequality, using \eqref{i1}, \eqref{i2}, and \eqref{i3}.

Let $X^\delta_\Hsp(\cdot)$ be the projection of $X^\delta(\cdot)$ onto the species $\{H_1, \dots, H_m\}$. Note that $X^\delta_\Hsp(\cdot)$ is a continuous-time Markov chain, associated with the CRN
\[
0 \xrightleftharpoons[\delta]{\delta^2 \zeta(i)} H_i, \quad \text{ for } i\in\{1,\ldots,m\}.
\]
The above CRN is detailed balanced, with detailed balanced equilibrium $\delta \zeta\in\Rel^m_{>0}$. Hence, due to \cite{ACKK:explosion} $X^\delta_\Hsp(\cdot)$ is non-explosive. If $X^\delta(\cdot)$ were explosive, then infinitely many transitions of $X^\delta(\cdot)$ would occur while $X^\delta_\Hsp(\cdot)$ is fixed at a state $h$. Note that given $X^\delta_\Hsp(\cdot)\equiv h$, the components of $X^\delta(\cdot)$ relative to the species $\{V_1, \dots, V_d\}$ are independent continuous-time Markov chains. Let $X^{\delta,h}(\cdot, j)$ denote the component of $X^\delta(\cdot)$ relative to species $V_j$, given that $X^\delta_\Hsp(\cdot)\equiv h$. If $X^\delta(\cdot)$ were explosive, then a process $X^{\delta,h}(\cdot, j)$ would be explosive, for some $h\in\Int_{\geq0}^m$ and some $j\in\{1,\dots,d\}$. We will conclude the proof by showing that this is not possible.

Let $A^{\delta,i}$ be the generator of $\Net_i^\delta$. Then, the generator of $X^{\delta,h}(\cdot, j)$ is given by
$$A^{\delta,h}=\sum_{i=1}^m h(i)A^{\delta,i}.$$
From Lemma~\ref{lem:superrangebis} and Lemma~\ref{lem:superppn}, it follows that if $L(v(j))=v(j)$ for all $v(j)\in\Int_{\geq0}$, then for each $i\in\{1,\dots,m\}$ and for each $\delta\in\Rel_{>0}$, there exist $\alpha^{\delta,i}\in\Rel_{>0}$ and a compact set $K^{\delta,i}\subset \Int_{\geq0}$ such that
$$A^{\delta,i} L(v(j))\leq -\alpha^{\delta,i} L^2(v(j))\quad\text{for all }v(j)\notin K^{\delta,i}, i\in\{1,\dots,m\}.$$
Hence, if
$$K^\delta =\bigcup_{i=1}^m K^{\delta,i}\quad\text{ and }\quad\alpha^\delta=\min_{1\leq i\leq m} h(i)\alpha^{\delta,i},$$
then
$$A^{\delta,h} L(v(j))=\sum_{i=1}^m h(i)A^{\delta,i} L(v(j))\leq -\alpha^\delta L^2(v(j))\quad\text{for all }v(j)\notin K^\delta.$$
The proof is then concluded by Theorem~\ref{thm:superstuff}.\hfill\qed


\subsection{Proof of Theorem~\ref{thm:mixing}}\label{sec:proof_thm_mixing}

If $q$ has finite support $\{x_1,\dots,x_m\}$, then we have
$$q=\sum_{i=1}^m q(x_i)\delta_{x_i}.$$
Note that $\PMCN(q,\delta)=\Mix\{\Fam, \zeta, \delta\}$, if we let $\Net_i^\delta=\PMD(x_i,\delta)$ and $\zeta(i)=q(x_i)$ for all $i\in\{1,\dots,m\}$. Hence, the proof is concluded by Theorem~\ref{thm:final}.

\section{Bounds for the mixing times of $\PMD(x,\delta)$}\label{sec:bounds}

Here we give some useful bounds on the mixing times of a generalization of one-dimensional $\PMD(x,\delta)$, where the choice of rate constants is not constrained.

\begin{aprop}
Consider the CRN
 \begin{gather*}
 V\xrightarrow{\kappa_1}0\\
 2V\xrightarrow{\kappa_2}0
\end{gather*}
Then, the CRN is robust with unique limit distribution $\pi$ being the point mass distribution centered at 0. Moreover, for any choice of rate constants $\kappa_1, \kappa_2$, and for all $\varepsilon>0$,
$$\tau^\varepsilon\leq \frac{1}{\varepsilon}\sum_{v=1}^{\infty}\frac{1}{\kappa_1v+\kappa_2v(v-1)}<\infty.$$
\end{aprop}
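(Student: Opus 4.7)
The process $X(\cdot)$ associated with this CRN is a pure death chain on $\Int_{\geq 0}$: from any state $v\geq 1$ the only possible transitions are $v\to v-1$ with rate $\kappa_1 v$ and $v\to v-2$ with rate $\kappa_2 v(v-1)$, while $0$ is absorbing. Robustness with limit distribution $\delta_0$ is immediate: starting from any $x_0$, the process reaches $0$ in a.s.\ finitely many jumps (since the total exit rate $\kappa_1 v+\kappa_2 v(v-1)$ is positive for all $v\geq 1$), so $0$ is the unique closed irreducible set and $\pi=\delta_0$.

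The key observation for the mixing time bound is that, since $\pi=\delta_0$ and $P(0,s\mid x_0)+\sum_{v\geq 1}P(v,s\mid x_0)=1$, one has
\[
\|P(\cdot,s\mid x_0)-\delta_0\|_\infty=1-P(0,s\mid x_0)=P(T(x_0)>s),
\]
where $T(x_0)=\inf\{t\geq 0:X(t)=0\mid X(0)=x_0\}$. Thus controlling the mixing time reduces to a uniform-in-$x_0$ tail bound for the absorption time, which I would obtain from Markov's inequality after bounding $E[T(x_0)]$.

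To bound $E[T(x_0)]$, I would use that the trajectory is monotone decreasing, hence visits each state $v\in\{1,\dots,x_0\}$ at most once; the holding time at any visited state $v$ is exponential with rate $\kappa_1 v+\kappa_2 v(v-1)$. Summing expected holding times and extending the sum to all positive integers gives
\[
E[T(x_0)]\leq \sum_{v=1}^{x_0}\frac{1}{\kappa_1 v+\kappa_2 v(v-1)}\leq \sum_{v=1}^{\infty}\frac{1}{\kappa_1 v+\kappa_2 v(v-1)}\doteq C,
\]
and finiteness of $C$ follows by splitting off $v=1$ (contributing $1/\kappa_1$) and bounding the remaining tail by $\sum_{v\geq 2}1/(\kappa_2 v(v-1))=1/\kappa_2$.

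Combining these pieces, Markov's inequality yields $P(T(x_0)>s)\leq C/s$ uniformly in $x_0$, so for every $s>C/\varepsilon$ the supremum $\sup_{x_0}\|P(\cdot,s\mid x_0)-\delta_0\|_\infty$ is strictly less than $\varepsilon$; by definition of $\tau^\varepsilon$ this yields $\tau^\varepsilon\leq C/\varepsilon$, as claimed. There is no real obstacle here: the only subtle point is justifying the a.s.\ monotone ``visits each state at most once'' bound on the accumulated holding time (which follows because the embedded jump chain is strictly decreasing), and checking that Markov's inequality produces the required \emph{strict} inequality for all $s$ in an interval of the form $(C/\varepsilon,\infty)$, which it does.
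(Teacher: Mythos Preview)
Your proof is correct and follows essentially the same approach as the paper: both identify $\|P(\cdot,s\mid x_0)-\delta_0\|_\infty=1-P(0,s\mid x_0)$, bound the expected absorption time via the monotone-decreasing trajectory (each state $v\in\{1,\dots,x_0\}$ is visited at most once, contributing an exponential holding time), and then apply Markov's inequality. The only cosmetic difference is that the paper phrases the holding-time bound as a pointwise stochastic domination $\sigma\leq\sum_{v=1}^{v_0}\mathcal{E}_v$ via the embedded chain before taking expectations, whereas you go straight to the expectation; your added remark on finiteness of $C$ (splitting off $v=1$ and telescoping the tail) is a nice explicit touch not spelled out in the paper.
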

\begin{proof}
 Robustness of the CRN and the fact that the limit distribution is the point mass distribution at 0 follows from Lemma~\ref{pmd_at_0}. Let $X(\cdot)$ denote the continuous-time Markov chain associated with the CRN. Let 
 \begin{itemize}
  \item $\sigma=\inf\{t\geq0\,:\,X(t)=0\}$;
  \item $Y(\cdot)$ be the embedded discrete time Markov chain of $X(\cdot)$: $Y(0)=X(0)$ and for each $n\geq 1$, $Y(n)$ is defined as the value of the process $X(\cdot)$ after $n$ jumps;
  \item $M=\inf\{n\geq0:Y(n)=0\}$.
 \end{itemize}
 Let $X(0)=v_0$. Note that since the process $X(\cdot)$ decreases at least by one unit at each jump, necessarily $M\leq v_0$. Denote by $(\mathcal{E}_v)_{v=0}^\infty$ a sequence of independent exponential random variables with rates $\kappa_1v+\kappa_2v(v-1)$. Then,
 \begin{equation*}
  \sigma=\sum_{n=0}^{M-1}\mathcal{E}_{Y(n)}\leq\sum_{v=1}^{v_0}\mathcal{E}_{v}
 \end{equation*}
 Then, by Markov inequality,
 $$P(0,t|v_0)=P(\sigma\leq t|v_0)\geq 1-\frac{1}{t}\sum_{v=1}^{v_0}\frac{1}{\kappa_1v+\kappa_2v(v-1)}\geq 1-\frac{1}{t}\sum_{v=1}^{\infty}\frac{1}{\kappa_1v+\kappa_2v(v-1)}.$$
 The proof is then concluded by noting that in this case
 \begin{align*}
  \|P(\cdot,t|v_0)-\pi\|_\infty&\leq \max\left\{|P(0,t|v_0)-\pi(0)|,\sum_{v=1}^\infty|P(v,t|v_0)-\pi(v)|\right\}\\
  &= \max\left\{1-P(0,t|v_0),\sum_{v=1}^\infty P(v,t|v_0)\right\}\\
  &=1-P(0,t|v_0).
 \end{align*}
\end{proof}

\begin{aprop}
 Consider an integer $x\geq 1$, and consider the CRN
 \begin{gather*}
 0\xrightarrow{\kappa_1}V\\
 (x+1)V\xrightarrow{\kappa_2}xV
\end{gather*}
 Then, the CRN is robust. Let $\pi$ be its unique limit distribution. Moreover, assume that
 \begin{align}\label{dfvlkjdfnvkjdfnk}
     \varepsilon&>\max\left\{\frac{\kappa_1^2(x+2)}{\kappa^2_2[(x+1)!]^2-\kappa_1^2}, 1-e^{-\frac{\kappa_1}{x!\cdot x}}\right\}\quad&\text{if }&\kappa^2_2[(x+1)!]^2>\kappa_1^2\\
     \label{jhihkljwedwsfcwsdc}
     \varepsilon&>\max\left\{\frac{2 \kappa_1}{\kappa_2}, 1-e^{-\frac{\kappa_1}{x!\cdot x}}\right\}\quad&\text{if }&\kappa^2_2[(x+1)!]^2\leq\kappa_1^2.
 \end{align}
 Then
 $$\tau^{2\varepsilon}\leq \max\left\{\frac{e^{-\frac{\kappa_1}{x!\cdot x}}}{\kappa_2(e^{-\frac{\kappa_1}{x!\cdot x}}-1+\varepsilon)x!\cdot x}, \frac{x}{\kappa_1\varepsilon}\right\}.$$
\end{aprop}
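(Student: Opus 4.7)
The plan splits into (i) robustness with identification of the limit distribution and its closed irreducible support, and (ii) the quantitative mixing bound $\tau^{2\varepsilon}\leq \max\{t_1, t_2\}$, where $t_1$ denotes the first argument of the maximum and $t_2 = x/(\kappa_1 \varepsilon)$ is the second.

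For (i), the given CRN is exactly $\UDbis(x,x,\kappa)$ from Construction~\ref{constr:rangebis}, so Lemma~\ref{eherhieqjfeqjfeqjfleqlf} immediately yields robustness and finiteness of the mixing time at every level. The closed irreducible component is $\{v \in \Int_{\geq0} : v \geq x\}$: from states strictly below $x$ only births occur (the degradation reaction needs $x+1$ molecules), while on $\{x, x+1, \dots\}$ the process is an irreducible birth--death chain. Specializing Lemma~\ref{lem:pi_beta_converges} to $r_i=1$ and $a(i)=b(i)=x$ gives $1-\pi(x) \leq \kappa_1^2(x+2)/(\kappa_2^2[(x+1)!]^2 - \kappa_1^2)$ in the regime where this denominator is positive, and a crude geometric estimate based on Lemma~\ref{lem:bound_on_external_pi} gives $1-\pi(x) \leq 2\kappa_1/\kappa_2$ in the complementary regime. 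Either way, hypotheses \eqref{dfvlkjdfnvkjdfnk} or \eqref{jhihkljwedwsfcwsdc} guarantee $1-\pi(x) < \varepsilon$, and hence $\|\pi - \delta_x\|_\infty = 1-\pi(x) < \varepsilon$ by the identity used in the proof of Lemma~\ref{lem:pmd}.

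For (ii), I would reduce the whole statement to the single estimate $|P(x, t|v_0) - \pi(x)| \leq \varepsilon$ for $t\geq \max\{t_1,t_2\}$; once this is in hand, the argument from Lemma~\ref{lem:pmd} forces $\|P(\cdot, t|v_0) - \pi\|_\infty \leq 2\varepsilon$, since both $\pi$ and $P(\cdot,t|v_0)$ then place at least $1-2\varepsilon$ mass at $x$, so the $\ell_\infty$ norm on any other state is at most $\max(1-P(x,t|v_0),\,1-\pi(x))$. A key tool is the monotonicity $s \mapsto P(X(s)=x|x) \downarrow \pi(x)$, which holds because the chain is reversible: by the spectral theorem in $L^2(\pi)$, $P(X(s)=x|x) = \pi(x) + \sum_{k\geq 1} e^{-\lambda_k s} v_k(x)^2/\pi(x) \geq \pi(x)$, decreasing in $s$. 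For $v_0 < x$, this combines cleanly with the hitting-time bound: since the chain is pure birth below $x$, $E[\sigma_x|v_0] \leq x/\kappa_1$ and Markov gives $P(\sigma_x > t_2 | v_0) \leq \varepsilon$; integrating $P(X(t-s)=x|x) \geq \pi(x)$ against the law of $\sigma_x$ via strong Markov gives $P(X(t)=x|v_0) \geq \pi(x)(1-\varepsilon) \geq \pi(x) - \varepsilon$, while the trivial bound $P(X(t)=x|v_0) \leq 1$ combined with $\pi(x)\geq 1-\varepsilon$ gives $P(X(t)=x|v_0) - \pi(x) \leq 1 - \pi(x) \leq \varepsilon$. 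For $v_0 \geq x$, monotonicity means we only need the one-sided bound $P(X(t)=x|x) \leq \pi(x) + \varepsilon$ for $t \geq t_1$, combined with a Foster--Lyapunov reduction from $v_0 > x$ to $v_0 = x$ in the spirit of Lemma~\ref{lem1}.

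The main obstacle is precisely this one-sided estimate $P(X(t)=x|x) - \pi(x) \leq \varepsilon$ for $t\geq t_1$. The idea would be a renewal decomposition of the chain starting at $x$ into i.i.d.\ excursions, each consisting of an $\mathrm{Exp}(\kappa_1)$ holding time at $x$ followed by a trip into $\{x+1, x+2, \dots\}$ and back. The factor $e^{-\kappa_1/(x!\cdot x)}$ in $t_1$ should arise as the probability that one such holding time exceeds the window $1/(x!\cdot x)$, while $\kappa_2 \cdot x! \cdot x$ is a convenient lower bound on the downward rate $\kappa_2(x+1)!$ from $x+1$; a Markov-type tail bound on the time spent at $x$ over a window of length $\geq t_1$ is what should produce $P(X(t)=x|x) - \pi(x) \leq \varepsilon$. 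The delicate part is tuning the decomposition so the constants match the stated closed form rather than, e.g., involving $(x+1)!$ directly; a natural route is to compare the full chain to the two-state chain on $\{x, x+1\}$ with holding rates $\kappa_1$ and $\kappa_2\cdot x!\cdot x$, using Lemma~\ref{lem:bound_on_external_pi} to control the discrepancy from mass on $\{x+2, x+3, \dots\}$.
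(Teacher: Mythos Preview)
Your treatment of robustness, of the estimate $1-\pi(x)<\varepsilon$, and of the case $v_0<x$ is correct and matches the paper. The gap is in your handling of $v_0\geq x$, where you have interchanged the easy and the hard direction.

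The bound you flag as the ``main obstacle,'' $P(X(t)=x\mid x)-\pi(x)\leq\varepsilon$, is trivial for exactly the reason you already invoked when $v_0<x$: since $\pi(x)>1-\varepsilon$ by hypothesis, one has $P(X(t)=x\mid v_0)-\pi(x)\leq 1-\pi(x)<\varepsilon$ for \emph{every} $v_0$ and every $t\geq 0$. Together with your monotonicity observation $P(X(t)=x\mid x)\geq\pi(x)$, the case $v_0=x$ is finished at once; your proposed renewal decomposition and two-state comparison are unnecessary, and the constants in $t_1$ do not come from this direction at all.

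The real content is what you pass over as a ``Foster--Lyapunov reduction from $v_0>x$ to $v_0=x$ in the spirit of Lemma~\ref{lem1}.'' That lemma only gives $E_{v_0}[\sigma]\leq v_0/(\kappa_2(x+1)!-\kappa_1)$, which grows linearly in $v_0$ and so cannot yield a mixing-time bound that is uniform in the initial state. The paper obtains the required uniformity by conditioning instead on the event $A_{v_0}=\{\text{the first }v_0-x\text{ jumps are all downward}\}$. Writing $\phi(v)=v!/(v-x-1)!$, the telescoping identity $\sum_{v\geq x+1}(v-x-1)!/v!=1/(x!\cdot x)$ is the source of both constants in $t_1$: it gives a lower bound $P(A_{v_0})\geq e^{-\kappa_1/(x!\cdot x)}$ and an upper bound $E[\sigma\mid A_{v_0}]\leq 1/(\kappa_2\,x!\cdot x)$, each \emph{independent of $v_0$}. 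Markov's inequality then yields $P(\sigma\leq t\mid v_0)\geq e^{-\kappa_1/(x!\cdot x)}\bigl(1-1/(\kappa_2\,x!\cdot x\, t)\bigr)\geq 1-\varepsilon$ for $t\geq t_1$, after which strong Markov plus $P(X(u)=x\mid x)\geq\pi(x)$ finish the lower bound on $P(X(t)=x\mid v_0)$ exactly as in your $v_0<x$ argument.
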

\begin{proof}
 First, the CRN is robust due to Lemma~\ref{eherhieqjfeqjfeqjfleqlf}. Note that by Lemma~\ref{lem:pi_beta_converges} and Lemma~\ref{lem:pmd}, it follows from \eqref{dfvlkjdfnvkjdfnk} and \eqref{jhihkljwedwsfcwsdc} that
 \begin{equation}\label{wejeghelrivgdalfkjviaebvejncvj}
     \varepsilon>\max\left\{1-\pi(x), 1-e^{-\frac{\kappa_1}{x!\cdot x}}\right\}.
 \end{equation}
 Let $X(\cdot)$ be the process associated with the CRN, and let
 $$\sigma=\inf\{t\geq0\,:\, X(t)=x\}.$$
 Since $\{X(t)=x\}\subseteq \{\sigma\leq t\}$, we have that for any $v_0\in\Int_{\geq0}$
 $$P(x,t|v_0)=P(X(t)=x, \sigma\leq t | X(0)=v_0)= P(X(t)=x|\sigma\leq t,X(0)=v_0)P(\sigma\leq t|X(0)=v_0).$$
 By monotonicity of birth and death processes \cite[Section 2]{monotonicity} and by strong Markov property we have
 $$P(X(t)=n|\sigma<t,X(0)=v_0)\geq \pi(x).$$
 Hence,
 $$P(x,t|v_0)\geq \pi(x)P(\sigma\leq t|X(0)=v_0).$$ 
   There are three cases:
 \begin{enumerate}
  \item If $v_0<x$, and if we denote by $\Gamma(k, \theta)$ the sum of $k$ independent exponential random variables with mean $\theta$, then by Markov inequality
 \begin{align*}
  P(\sigma\leq t|X(0)=v_0)&=P\left(\Gamma\left(x-v_0, \frac{1}{\kappa_1}\right)\leq t\right)\\
  &\geq P\left(\Gamma\left(x, \frac{1}{\kappa_1}\right)\leq t\right)\\
  &\geq 1-\frac{x}{\kappa_1t}\\
 \end{align*}
  \item If $v_0=x$, then $P(\sigma<t|X(0)=v_0)=1$ for all $t\geq0$.
  \item If $v_0>x$, then 
  $$P(\sigma\leq t|X(0)=v_0)\geq P(\sigma\leq t|A_{v_0}, X(0)=v_0)P(A_{v_0}|X(0)=v_0),$$
  where
  $$A_{v_0}=\{\text{the first }v_0-x\text{ jumps of $X(\cdot)$ point downwards}\}.$$
  For notational convenience, let
  $$\phi^x(v)=\frac{v!}{(v-x-1)!}\quad\text{for }v\geq x+1,$$
  and denote by $(\mathcal{E}_v)_{v=x+1}^\infty$ a sequence of independent exponential random variables with rates $\kappa_2 \phi^x(v)$. Then, by Markov inequality,
  \begin{align*}
  P(\sigma\leq t|A_{v_0}, X(0)=v_0)&=P\left(\sum_{v=x+1}^{v_0}\mathcal{E}_v\leq t\right)\\
  &\geq 1-\frac{1}{\kappa_2 t}\sum_{v=x+1}^{v_0}\frac{1}{\phi^x(v)}\\
  &\geq 1-\frac{1}{\kappa_2 t}\sum_{v=x+1}^{\infty}\frac{1}{\phi^x(v)}.
  \end{align*}
  In order to express better the last series, note that
  \begin{align*}
  \sum_{v=x}^{\infty}\frac{(v-x)!}{v!}&=\frac{1}{x!}+\sum_{v=x+1}^{\infty}(v-x)\frac{(v-x-1)!}{v!}\\
  &=\frac{1}{x!}+\sum_{v=x}^{\infty}\frac{(v-x)!}{v!}-x\sum_{v=x+1}^{\infty}\frac{(v-x-1)!}{v!}\\
  \end{align*}
  It follows that
  $$\sum_{v=x+1}^{\infty}\frac{1}{\phi^x(v)}=\sum_{v=x+1}^{\infty}\frac{(v-x-1)!}{v!}=\frac{1}{x!\cdot x}$$
  and
  $$P(\sigma\leq t|A_{v_0}, X(0)=v_0)\geq 1-\frac{1}{x!\cdot x\kappa_2 t}.$$
  Moreover, 
  \begin{align*}
  P(A_{v_0}|X(0)=v_0)&=\prod_{v=x+1}^{v_0}\frac{\phi^x(v)}{\kappa_1+\phi^x(v)}\\
  &=e^{-\sum_{v=x+1}^{v_0}\log\left(1+\frac{\kappa_1}{\phi^x(v)}\right)}\\
  &\geq e^{-\sum_{v=x+1}^{\infty}\frac{\kappa_1}{\phi^x(v)}}\\
  &= e^{-\frac{\kappa_1}{x!\cdot x}},\\
  \end{align*}
  which implies
  $$P(\sigma\leq t|X(0)=v_0)\geq e^{-\frac{\kappa_1}{x!\cdot x}}\left(1-\frac{1}{x!\cdot x\kappa_2 t}\right)$$
 \end{enumerate}
Hence, independently on the initial condition $v_0$  and 
provided that
$$\varepsilon>1-e^{-\frac{\kappa_1}{x!\cdot x}},$$
if
$$t\geq \max\left\{\frac{e^{-\frac{\kappa_1}{x!\cdot x}}}{\kappa_2(e^{-\frac{\kappa_1}{x!\cdot x}}-1+\varepsilon)x!\cdot x}, \frac{x}{\kappa_1\varepsilon}\right\}$$
then
$$ P(x,t|v_0)\geq (1-\varepsilon)\pi(x).$$
Hence, since $\varepsilon>1-\pi(x)$ by \eqref{wejeghelrivgdalfkjviaebvejncvj}, it follows that
$$|P(x,t|v_0)-\pi(x)|\leq \varepsilon.$$
Moreover, for any $v\neq x$ we have
\begin{align*}
 |P(v,t|v_0)-\pi(v)|&\leq \max\{P(v,t|v_0),\pi(v)\}\\
 &\leq \max\{1-P(x,t|v_0),1-\pi(x)\}\\
 &\leq 1-\pi(x) +\varepsilon\pi(x)<2\varepsilon,
\end{align*}
which concludes the proof.
\end{proof}

\section{Super Lyapunov functions}\label{sec:super}

The theory we develop here was mostly developed in \cite{superLyapunov}, for a specific family of stochastic differential equations. The concept and the terminology of ``super Lyapunov function'' themselves were also introduced in \cite{superLyapunov}. We are interested in an adaptation of \cite[Lemma~6.1]{superLyapunov}, which we state here as Theorem~\ref{thm:super} and whose proof we repeat for completeness.

\begin{adefn}\label{def:super}
 Let $X(\cdot)$ be a continuous-time Markov chain on $\Int_{\geq0}^d$, with generator $A$. We say that a function $L\colon \Int_{\geq0}^d\to \Rel_{\geq0}$ is a \emph{super Lyapunov function} if the following holds true:
 \begin{itemize}
     \item $\lim_{x\to\infty} L(x)=\infty$
     \item there exists a compact set $K$ and real numbers $\alpha>0$ and $\gamma>1$ such that 
     \begin{equation}\label{eq:supercmp}
         AL(x)\leq -\alpha L^\gamma (x)\quad\text{for all }x\notin K.
     \end{equation}
 \end{itemize}
\end{adefn}
\begin{armk}\label{rem:SLF}
If \eqref{eq:supercmp} holds for $\gamma=1$, then the function $L$ is a standard Lyapunov function. While the existence of such a function implies that the process $X(\cdot)$ is non explosive and that a limit distribution exists for any initial condition \cite{meyntweedie}, in general it does not imply that the mixing times are finite.\hfill $\triangle$
\end{armk}
\begin{armk}
Equation \eqref{eq:supercmp} is equivalent to the existence of real numbers $\alpha, \beta\geq0$ and $\gamma>1$ such that
     \begin{equation}\label{eq:superbeta}
         AL(x)\leq -\alpha L^\gamma (x)+\beta\quad\text{for all }x\in \Int_{\geq0}^d.
     \end{equation}
Indeed, it is sufficient to consider $b=\max_{x\in K} |AL(x)|$.\hfill$\triangle$
\end{armk}
\begin{athm}\label{thm:super}
 Let $X(\cdot)$ be a continuous-time Markov chain on $\Int_{\geq0}^d$, and let $L(\cdot)$ be a super Lyapunov function. Then, 
 $$E[L(X(t))|X(0)=x_0]\leq \max\left\{\left(\frac{2\beta}{\alpha}\right)^{\frac{1}{\gamma}}, \left(\frac{2}{\alpha(\gamma-1)t}\right)^{\frac{1}{\gamma-1}}\right\}\quad\text{for all }t\geq0, x_0\in\Int_{\geq0}^d,$$
 where $\alpha, \beta, \gamma$ are as in \eqref{eq:superbeta}.
\end{athm}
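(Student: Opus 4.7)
The plan is to reduce the claim to a scalar nonlinear ODE comparison via Dynkin's formula and Jensen's inequality. First, I would observe that since $L$ satisfies \eqref{eq:supercmp}, rewriting the bound in the integrated form \eqref{eq:superbeta} makes $L$ a classical Lyapunov function, so by standard Foster--Lyapunov theory the process $X(\cdot)$ is non-explosive (cf.\ Remark~\ref{rem:SLF}). To apply Dynkin's formula rigorously I would introduce the stopping times $\tau_n=\inf\{t\geq 0:L(X(t))\geq n\}$ and write, for the bounded process $L(X(t\wedge\tau_n))$,
$$E[L(X(t\wedge\tau_n))\,|\,X(0)=x_0]=L(x_0)+E\!\left[\int_0^{t\wedge\tau_n}AL(X(s))\,ds\right]\leq L(x_0)+E\!\left[\int_0^{t\wedge\tau_n}\bigl(-\alpha L^\gamma(X(s))+\beta\bigr)ds\right].$$
Setting $u(t)=E[L(X(t))\,|\,X(0)=x_0]$ and passing $n\to\infty$ (monotone convergence on the left, Fatou on the nonpositive part on the right, together with non-explosion) I would deduce the integral inequality
$$u(t)\leq L(x_0)+\int_0^t\bigl(-\alpha\,E[L^\gamma(X(s))\,|\,X(0)=x_0]+\beta\bigr)ds.$$
Since $\gamma>1$, the map $r\mapsto r^\gamma$ is convex on $\Rel_{\geq 0}$, and Jensen's inequality gives $E[L^\gamma(X(s))\,|\,X(0)=x_0]\geq u(s)^\gamma$. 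Hence $u$ is absolutely continuous on $[0,\infty)$ with $u'(t)\leq-\alpha u(t)^\gamma+\beta$ for a.e.\ $t\geq0$.

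Second, I would analyze this scalar inequality. Set $M=(2\beta/\alpha)^{1/\gamma}$; whenever $u(s)\geq M$, one has $\alpha u(s)^\gamma\geq 2\beta$, hence $u'(s)\leq-\tfrac{\alpha}{2}u(s)^\gamma$. In particular, on any interval on which $u\geq M$ we may integrate the Bernoulli-type identity $(u^{1-\gamma})'=(1-\gamma)u^{-\gamma}u'\geq \tfrac{\alpha(\gamma-1)}{2}$. I would also observe that once $u$ drops to or below $M$ it remains there: if $s<t$ with $u(s)=M$ and $u(s')>M$ on some $(s,s+\varepsilon)$, then for any $h\in(0,\varepsilon)$, $u(s+h)-u(s)=\int_s^{s+h}u'(\tau)\,d\tau\leq -\beta h<0$, contradicting $u(s+h)>u(s)$. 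Two cases remain: \textbf{(i)} if $u(s)\leq M$ for some $s\leq t$, then $u(t)\leq M=(2\beta/\alpha)^{1/\gamma}$; \textbf{(ii)} if $u(s)>M$ for all $s\in[0,t]$, integrating from $0$ to $t$ yields
$$u(t)^{1-\gamma}\geq u(0)^{1-\gamma}+\frac{\alpha(\gamma-1)t}{2}\geq \frac{\alpha(\gamma-1)t}{2},$$
which rearranges to $u(t)\leq\bigl(2/[\alpha(\gamma-1)t]\bigr)^{1/(\gamma-1)}$. Taking the maximum of the two bounds gives the theorem.

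The main obstacle will be the first step, namely justifying the limit $n\to\infty$ in Dynkin's formula so that the super Lyapunov bound on $AL$ transfers to an \emph{unstopped} integral inequality for $u(t)$, and then applying Jensen's inequality at the unstopped level. The subtle ingredient is uniform control of the positive part $\beta$ in the integral as $n\to\infty$, but this follows from non-explosion together with the monotonicity $t\wedge\tau_n\uparrow t$, after which the ODE comparison above is routine and the final bound is obtained by taking the maximum of the two case bounds.
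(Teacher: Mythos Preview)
Your proposal is correct and follows essentially the same route as the paper: Dynkin's formula with a truncation by stopping times, passage to the limit using non-explosion, Jensen's inequality to replace $E[L^\gamma]$ by $(E[L])^\gamma$, and then a comparison with the Bernoulli ODE $u'=-\tfrac{\alpha}{2}u^\gamma$ on the region $u\geq(2\beta/\alpha)^{1/\gamma}$, together with the observation that this region is forward-invariant under exit. The only cosmetic difference is that the paper writes down the explicit solution of the comparison ODE and bounds it, whereas you integrate the identity $(u^{1-\gamma})'\geq\tfrac{\alpha(\gamma-1)}{2}$ directly; the two are equivalent and yield the same bound.
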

\begin{proof}
 For simplicity, denote
 $$z_{x_0}(t)=E[L(X(t))|X(0)=x_0].$$
 Moreover, for any real number $M>0$ let $\rho^M$ be the stopping time
 $$\rho^M=\inf\{t\geq0\,:\,L(V(t))\geq M\}.$$
 By Dynkin's formula, \eqref{eq:superbeta} and Jensen's inequality, for any $M>L(x_0)$ we have
 \begin{align*}
     z_{x_0}(\max\{t,\rho^M\})&=L(x_0)+E\left[\left.\int_0^{\max\{t,\rho^M\}} AL(X(s))ds\right| X(0)=x_0\right]\\
     &\leq L(x_0)+\beta t-\alpha\int_0^t \left(z_{x_0}(s)\right)^\gamma ds.
 \end{align*}
 By taking the limit for $M$ going to infinity we have
 $$z_{x_0}(t)\leq L(x_0)+\beta t-\alpha\int_0^t \left(z_{x_0}(s)\right)^\gamma ds.$$
 Since the latter holds for any initial condition $x_0$, we must have
 \begin{equation}\label{wfhwiefh}
     \frac{d}{dt}z_{x_0}(t)\leq \beta-\alpha\left(z_{x_0}(t)\right)^\gamma\leq -\frac{\alpha}{2}\left(z_{x_0}(t)\right)^\gamma\quad\text{for }z_{x_0}(t)\geq\left(\frac{2\beta}{\alpha}\right)^{\frac{1}{\gamma}}.
 \end{equation}
 Since the latter is strictly negative, it follows that as soon as we have $z_{x_0}(t)\leq\left(\frac{2\beta}{\alpha}\right)^{\frac{1}{\gamma}}$ for some $t\geq0$, then the same holds for all times afterwards. In particular, if this holds for $t=0$, then the proof is complete. We can therefore assume that $z_{x_0}(0)\geq\left(\frac{2\beta}{\alpha}\right)^{\frac{1}{\gamma}}$. By \eqref{wfhwiefh}, it follows that as long as $z_{x_0}(t)\geq\left(\frac{2\beta}{\alpha}\right)^{\frac{1}{\gamma}}$, then $z_{x_0}(t)\leq u_{x_0}(t)$ where $u_{x_0}(t)$ is the solution to
 \[
    \frac{d}{dt}u_{x_0}(t)=-\frac{\alpha}{2}\left(u_{x_0}(t)\right)^\gamma,\quad u_{x_0}(0)=z_{x_0}(0)\geq\left(\frac{2\beta}{\alpha}\right)^{\frac{1}{\gamma}}.
\]
 The latter can be explicitly calculated, and we have
 $$u_{x_0}(t)=\left(\frac{\alpha(\gamma-1)t}{2}+u_{x_0}(0)^{1-\gamma}\right)^{\frac{1}{1-\gamma}}\leq \left(\frac{2}{\alpha(\gamma-1)t}\right)^{\frac{1}{\gamma-1}},$$
 which concludes the proof.
\end{proof}
\begin{athm}\label{thm:superstuff}
 Let $X(\cdot)$ be a continuous-time Markov chain on $\Int_{\geq0}^d$, and let $L(\cdot)$ be a super Lyapunov function. Then, $X(\cdot)$ is non-explosive. Moreover, if there exists a unique closed irreducible set, then $X(\cdot)$ has a unique limit distribution and the mixing times at any level are finite.
\end{athm}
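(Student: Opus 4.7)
The plan is to follow the classical Foster--Lyapunov paradigm of \cite{meyntweedie}, using Theorem~\ref{thm:super} as the main quantitative input, and to split the argument into three steps.

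First I would show non-explosivity. Since $L\geq 0$, inequality \eqref{eq:superbeta} immediately implies $AL(x)\leq \beta$ on all of $\Int_{\geq 0}^d$. Letting $\rho^M=\inf\{t\geq 0 : L(X(t))\geq M\}$, Dynkin's formula applied to the stopped process gives
$$E[L(X(t\wedge \rho^M))\mid X(0)=x_0]\leq L(x_0)+\beta t,$$
so Markov's inequality yields $P(\rho^M\leq t\mid X(0)=x_0)\leq (L(x_0)+\beta t)/M$, which tends to $0$ as $M\to\infty$. Because $L(x)\to\infty$ as $x\to\infty$ and the state space is discrete, the explosion time agrees with $\lim_{M\to\infty}\rho^M$ almost surely, so $X(\cdot)$ is non-explosive for every initial condition.

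Second, assuming a unique closed irreducible set $\Upsilon$, I would use Theorem~\ref{thm:super} to obtain, for every $t>0$, a bound $E[L(X(t))\mid X(0)=x_0]\leq C(t)$ that is \emph{independent} of $x_0$. Since $L\to\infty$ at infinity, each sublevel set $K_M=\{x\in\Upsilon : L(x)\leq M\}$ is finite; choosing $M>2C(t)$, Markov's inequality gives $\inf_{x_0}P(X(t)\in K_M\mid X(0)=x_0)\geq 1/2$. Standard positive recurrence criteria (e.g., \cite[Ch.~8]{meyntweedie}) then imply that every state in $\Upsilon$ is positive recurrent, so $\Upsilon$ carries a unique stationary distribution $\pi$ which, by irreducibility and the uniqueness of the closed class, is also the limit distribution of $X(\cdot)$ from any initial condition.

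Third, for finiteness of the mixing times, I would fix $\varepsilon>0$ and choose $M$ large enough that $C(1)/M<\varepsilon/4$ and $\pi(K_M^c)<\varepsilon/4$. Since the bound in Theorem~\ref{thm:super} is non-increasing in $t$ once $t$ is bounded away from $0$, this yields $\sup_{x_0}P(X(t)\notin K_M\mid X(0)=x_0)<\varepsilon/4$ for all $t\geq 1$. On the finite irreducible set $K_M$ the CTMC satisfies a uniform Doeblin minorization: there exist $t_0>0$, $\rho>0$, and a probability measure $\nu$ supported on $K_M$ with $P(\cdot,t_0\mid x)\geq \rho\,\nu(\cdot)$ for all $x\in K_M$. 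Combining this minorization with the uniform re-entry bound to $K_M$ through the standard coupling/splitting construction of \cite[Ch.~15]{meyntweedie} produces geometric ergodicity, and in particular $\sup_{x_0}\|P(\cdot,t\mid x_0)-\pi\|_\infty\to 0$, so $\tau^\varepsilon<\infty$ for every $\varepsilon>0$.

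The main obstacle will be the third step. Pointwise convergence $P(x,t\mid x_0)\to \pi(x)$ is automatic from irreducibility and positive recurrence, but the mixing-time definition requires the supremum over initial conditions, which is a genuinely stronger statement. The super-Lyapunov assumption $\gamma>1$ (compare Remark~\ref{rem:SLF}) is precisely what makes Theorem~\ref{thm:super} produce a drift bound \emph{independent of $x_0$}; this is the feature that reduces the problem to a finite-state mixing question on $K_M$ and allows the Doeblin coupling to be glued to the drift to yield uniform convergence.
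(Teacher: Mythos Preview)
Your proposal is correct, and Steps~1 and~2 track the paper closely (the paper simply cites \cite{meyntweedie} and Remark~\ref{rem:SLF} for non-explosivity and existence of the limit distribution). One small inconsistency: in Step~2 you set $K_M=\{x\in\Upsilon:L(x)\leq M\}$, but Markov's inequality applied to Theorem~\ref{thm:super} only controls $P(L(X(t))>M\mid x_0)$, not $P(X(t)\notin\Upsilon\mid x_0)$, so for the uniform-in-$x_0$ statements in Step~3 you should take the full sublevel set $\{x:L(x)\leq M\}$, which may contain transient states. The minorization you need still holds on that larger finite set, since from any state the non-explosive chain with drift to compacts must reach the unique closed class; also, $K_M$ need not itself be irreducible, though this does not affect the minorization argument.

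Step~3 is where you genuinely diverge from the paper. You invoke a Doeblin minorization on the finite set and the small-set/drift coupling machinery of \cite{meyntweedie} to obtain geometric ergodicity and hence uniform convergence. The paper instead uses a much lighter argument: once Theorem~\ref{thm:super} guarantees $P(X(T)\in\Xi^\varepsilon\mid x_0)\geq 1-\varepsilon/4$ uniformly in $x_0$ for a fixed finite set $\Xi^\varepsilon$, it simply observes that pointwise convergence to $\pi$ gives, for each individual $x\in\Xi^\varepsilon$, a finite time $\tau_x^{\varepsilon/2}$ after which $\|P(\cdot,s\mid x)-\pi\|_\infty<\varepsilon/2$; then $R^\varepsilon=\max_{x\in\Xi^\varepsilon}\tau_x^{\varepsilon/2}<\infty$ by finiteness of $\Xi^\varepsilon$, and the Markov property at time $T$ yields $\tau^\varepsilon\leq T+R^\varepsilon$ directly. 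No coupling, no minorization, no quantitative rate is needed. Your route buys geometric ergodicity as a by-product, which is stronger than required; the paper's route is shorter and uses only the qualitative fact that a positive recurrent irreducible chain converges from each fixed initial state.
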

\begin{proof}
 $X(\cdot)$ is non-explosive and admits a unique limit distribution $\pi$ (when a unique closed irreducible set exists) due Foster-Lyapunov theory \cite{meyntweedie} (see Remark~\ref{rem:SLF}). Hence, we only need to prove that the mixing times are finite. 
 Let $\tau^\varepsilon$ be the mixing time at level $\varepsilon$, and for any $x\in\Int_{\geq0}^d$ let
 $$\tau_x^{\frac{\varepsilon}{2}}=\inf\left\{t\geq0\,:\, \|P(\cdot,s|x)-\pi\|_\infty<\frac{\varepsilon}{2}\quad\text{for all }s\geq t\right\}.$$
 Note that since $\pi$ is a limit distribution, for any $x\in\Int^d_{\geq0}$ we have $\tau_x^{\frac{\varepsilon}{2}}<\infty$. Let
 $$M^\varepsilon=\frac{4}{\varepsilon}\left(\frac{2\beta}{\alpha}\right)^{\frac{1}{\gamma}}.$$
 The set $\Xi^\varepsilon=\{x\in\Int_{\geq0}^d\,:\, L(x)\leq M^\varepsilon\}$ is finite, because $\lim_{x\to\infty}L(x)=\infty$. Hence,
 $$R^\varepsilon\doteq\max_{x\in\Xi^\varepsilon}\tau_x^{\frac{\varepsilon}{2}}<\infty.$$
 Finally, let
 $$T=\frac{2}{\alpha(\gamma-1)}\left(\frac{\alpha}{2\beta}\right)^{\frac{\gamma-1}{\gamma}}$$
 By Theorem~\ref{thm:super}, we have that for any $x_0\in\Int^d_{\geq0}$
 $$E[L(X(T))|X(0)=x_0]\leq \left(\frac{2\beta}{\alpha}\right)^{\frac{1}{\gamma}},$$
 and by Markov inequality we have that for any $x_0\in\Int^d_{\geq0}$
 $$P(X(T)\notin\Xi^\varepsilon|x_0)=P(L(X(T))>M^\varepsilon|x_0)\leq \frac{E[L(X(T))|X(0)=x_0]}{M^\varepsilon}\leq\frac{\varepsilon}{4}.$$
 Hence, by Markov property and by definition of $R^\varepsilon$, for any $s\geq T+R^\varepsilon$ and any $x_0\in\Int_{\geq0}^d$ we have
 $$\|P(\cdot,s|x_0)-\pi\|_\infty \leq \frac{\varepsilon}{4}\cdot 2+\max_{x\in\Xi^\varepsilon}\|P(\cdot,s-T|x)-\pi\|_\infty\leq \varepsilon.$$
 It follows that $\tau^\varepsilon\leq T+R^\varepsilon$, which concludes the proof.
\end{proof}
\end{document}